\def\N{\mathbb N}
\def\Z{\mathbb Z}
\def\R{\mathbb R}
\def\Q{\mathbb Q}
\def\C{\mathbb C}
\def\A{\mathcal A}
\def\B{\mathcal B}
\def\pfz{\begin{proof}}
\def\pfk{\end{proof}}
\newtheorem{thm}{Theorem}[section]
\newtheorem{prop}[thm]{Proposition}
\newtheorem{coro}[thm]{Corollary}
\newtheorem{lem}[thm]{Lemma}
\newtheorem{ex}[thm]{Example}
\newtheorem{remark}[thm]{Remark}
\newtheorem{claim}[thm]{Claim}
\title { On the spectra of Pisot-cyclotomic numbers  }
\author{Kevin G. Hare}
\address{Department of Pure Mathematics, University of Waterloo, Waterloo, Ontario, Canada N2L 3G1}
\author{Zuzana Mas\'akov\'a, Tom\'a\v s V\'avra}
\address{Department of Mathematics, FNSPE, Czech Technical University in Prague, Trojanova 13, 120~00 Praha 2, Czech Republic}
\date{\today}
\begin{document}
\begin{abstract}
We investigate the complex spectra
\[ X^\A(\beta)=\left\{\sum_{j=0}^na_j\beta^j : n\in\N,\ a_j\in\A\right\} \]
where $\beta$ is a quadratic or cubic Pisot-cyclotomic number and the alphabet $\A$ is given by $0$ along with a finite collection of roots of unity.
Such spectra are discrete aperiodic structures with crystallographically forbidden symmetries. We discuss in general terms
under which conditions they possess the Delone property required for point sets modeling quasicrystals. We study the corresponding Voronoi tilings and we relate these structures to quasilattices arising from the cut and project method.
\end{abstract}
\maketitle
\allowdisplaybreaks

\section{Introduction}

Since the discovery of quasicrystals~\cite{schechtman}, it has been recognized the need of using irrational numbers
in describing mathematical models of these aperiodic structures with non-crystallographically forbidden symmetries.
The most common example are quasicrystals revealing five-fold (or ten-fold) symmetry, for which a model based on the golden ratio $\tau=\frac12(1+\sqrt5)$ is required, see e.g.~\cite{Icosians}. Quasicrystals with symmetry of order 8 and 12
correspond to other quadratic irrationalities. Only recently, quasicrystals with cubic irrationality have been observed~\cite{coloid}, see also~\cite{steurer7}. Naturally, rotation symmetries are connected to the so-called Pisot-cyclotomic numbers, whose most simple example is the golden ratio, and the quadratic and cubic cases are connected precisely to symmetries of order 5 (10), 7 (14), 8, 9 (18), and 12.

In~\cite{BuFrGaKr}, a convenient tool for studying
quasicrystals has been identified, namely number representation in systems with irrational base. Although such
non-standard number systems have been studied long before~\cite{renyi}, their connection to quasicrystal models
has only been exploited in recent years. In this paper we study originally a number-theoretic object, the so-called
spectrum of a number. By its nature, the spectrum is self-similar and has a rotational symmetry.
We demonstrate the utility of these structures for modelling quasicrystals.

The problem of the so-called spectrum of a real number $\beta>1$ has been introduced by Erd\"os et al.~\cite{Erdos}.
The spectrum of $X^{\A}(\beta)$ for $\A=\{0,1,\dots,r\}\subset\N$ is the set of polynomials with non-negative integer
coefficients in the alphabet $\A$, evaluated in $\beta$,
$X^\A(\beta)=\big\{\sum_{j=0}^na_j\beta^j : n\in\N,\ a_j\in\A\big\}$.
Specific properties are valid if $\beta$ is taken to be a Pisot number, i.e.\ an algebraic integer with conjugates smaller
than $1$ in modulus. In particular, there is only a finite number of distances between consecutive points of $X^\A(\beta)$,
and the distance sequence can be generated by a substitution~\cite{FengWen02}.
A generalization of the spectra to the complex plane was considered by Hejda and Pelantov\'a~\cite{HePe15}, who define the
spectrum of a complex Pisot number $\beta$ and study its Voronoi tessellation for the case of cubic complex Pisot units.

We consider a generalization in another direction. The base $\beta$ remains a real number $>1$, but the alphabet $\A$ of
digits is taken to be a finite set of complex numbers. In particular, we study the cases leading to $X^\A(\beta)$
as subsets of cyclotomic integers, providing thus discrete aperiodic structures with crystallographically forbidden symmetries.
We focus on the cases where $\beta$ is a quadratic and cubic Pisot-cyclotomic number and the alphabet $\A$ is formed
by the origin and the vertices of a regular polygon.

Recall a Pisot number is a real algebraic integer $\beta > 1$ such that
all of $\beta$'s Galois conjugates are strictly less than $1$ in absolute value.
A Pisot-cyclotomic number of order $n$ is a Pisot number $\beta$ such that $\Z[\beta]=\Z[2\cos\frac{2\pi}{n}]$.
Notice that if $\beta$ is a Pisot-cyclotomic number of order $n$, then
    $\beta$ will have degree $\phi(n)/2$, where $\phi$ is the Euler
    totient function.
In the case when $n = 5, 8, 10$ or $12$, this means that $\beta$ is
    quadratic.
In the case when $n = 7, 9, 14$ or $18$ this means that $\beta$ is cubic.
We list the set of all quadratic and cubic Pisot-cyclotomic numbers,
 in Table~\ref{tab:pc}.

\begin{table}[ht]
\begin{center}
\renewcommand{\arraystretch}{1.2}
\begin{tabular}{llll}
\hline
order   & name      & approximate value & minimal polynomial \\
\hline
5 or 10 & $\tau$    &  1.618033989      & $x^2-x-1$               \\
        & $\tau^2$  &  2.618033989      & $x^2-3x+1$              \\
7 or 14 & $\lambda$ &  2.246979604      & $x^3-2 x^2-x+1$         \\
        &           &  4.048917340      & $x^3-3 x^2-4 x-1$       \\
        &           &  5.048917340      & $x^3-6 x^2+5 x-1$       \\
        &           &  20.44264896      & $x^3-20 x^2-9 x-1$      \\
        &           &  21.44264896      & $x^3-23 x^2+34 x-13$    \\
8       & $\delta$  &  2.414213562      & $x^2-2 x-1$             \\
        &           &  3.414213562      & $x^2-4 x+2$             \\
9 or 18 & $\kappa$  &  2.879385242      & $x^3-3 x^2+1$           \\
        &           &  7.290859369      & $x^3-6 x^2-9 x-3$       \\
        &           &  8.290859369      & $x^3-9 x^2+6 x-1$       \\
12      & $\mu$     &  2.732050808      & $x^2-2 x-2$             \\
        &           &  3.732050808      & $x^2-4 x + 1$           \\
\hline\\
\end{tabular}
\caption{Pisot cyclotomic numbers of degree 2 and 3, as found in \cite{BellHare}. Certain of these Pisot-cyclotomic numbers we denote with special symbols,
    as we will be using them in the future.}
\label{tab:pc}
\end{center}
\end{table}

Pisot-cyclotomic numbers of order $n$ play a crucial role in models for quasilattices and quasicrystals.
The standard method used for such models is the cut-and-project method.  In this paper we present an alternate
model that comes from the spectra of a Pisot-cyclotomic number with respect to a specific digit set.
We show that this alternate model shares many of the desired properties with the cut-and-project method.
In some cases the two models coincide exactly, whereas in other we can show that these are two distinctly different models.
Thus, we give an alternative model that may find future applications.

For a subset $\Lambda \subset \C$ to be a quasilattice, one usually requires a number of specific properties.
\begin{enumerate}[(i)]
\item \label{pr:ro} Rotational symmetry:  For a root $\omega$ of unity, one has $\omega \Lambda = \Lambda$.
\item Dilation:  There exists a $\beta \in \R$, $\beta \neq \pm 1$ such that $\beta \Lambda \subset \Lambda$.
\item \label{pr:ud} Uniform discreteness:  There exists $\epsilon_1 > 0$ such that
    for all $z_1, z_2 \in \Lambda$ either $z_1 = z_2$ or $|z_1 - z_2| >     \epsilon_1$.
\item \label{pr:rd} Relative density:  There exists an $\epsilon_2 > 0$ such that
    for all $z \in \C$ we have $B_{\epsilon_2}(z) \cap \Lambda \neq \emptyset$.
\item \label{pr:flc} Finite local complexity:  $\Lambda$ is uniformly discrete, relatively dense, and for each $\epsilon$, the intersection
   of the set $\Lambda - \Lambda$ with the open ball $B_\epsilon(0)$ is finite.
\end{enumerate}

Sets satisfying properties~\eqref{pr:ud} and~\eqref{pr:rd} are called Delone sets. The supremum of parameters $\epsilon_1$ gives the double of the
packing radius of $\Lambda$, and the infimum of all $\epsilon_2$ with the required property is called the covering radius of $\Lambda$; 
    we denote it by $r_c$.
Property (\ref{pr:flc}) in fact means that there are only finitely many local configurations of any given radius.
The closest neighbourhoods of points in a Delone set $\Lambda$ can be described using the so-called Voronoi tiles.
A Voronoi tile of a point $x\in\Lambda$ is the set of points in $\C$ closer to $x$ than to any other point of $\Lambda$.
Voronoi tiles of all points of $\Lambda$ form a tiling of $\C$. If $\Lambda$ has property (\ref{pr:flc}),
the corresponding tiling is composed of copies of only finitely many Voronoi tiles.

All the (not necessarily primitive) $n$-th roots of unity form vertices of a regular $n$-gon in the complex plane,
let us denote by them $\Delta_n=\{\omega^j : j\in\Z\}$. We will consider the following sets,
$$
X^{\A}(\beta) = \Big\{\sum_{j=0}^na_j\beta^j : n\in\N,\ a_j\in\A\Big\}\,,\quad\text{ where }\A=\A_n=\Delta_n\cup\{0\}\,.
$$
This is the spectra of $\beta$ with respect to the alphabet $\A$.
Note that by definition, we trivially have rotational symmetry as $\omega^k X = X$.
We further see that we have dilation, as $\beta X \subset X$.
In fact even strong is true, for any 
    $A := a_{k-1} \beta^{k-1} + \dots + a_0$ with $a_i \in \mathcal{A}$ one can show that 
    $\beta^{k} (X + \frac{A}{\beta-1}) \subset X + \frac{A}{\beta-1}$.
Hence this satisfies dilations with multiple possible centers.
In Section \ref{sec:general} we discuss this set with respect to the other
    desired properties.

We show under what conditions $X^{\A}(\beta)$ gives a quasilattice satisfying properties~\eqref{pr:ro}--\eqref{pr:flc}.
We will compare the corresponding spectrum with the cut-and-project sets defined for these cases. It turns out that sometimes, the acceptance window is given by a polygon, in other situations, we need to consider cut-and-project sets whose acceptance window have fractal boundary. Note that such structures appeared already in~\cite{BaKlSc} in connection to aperiodic tilings.
We also provide some information about the Voronoi tiling of the spectrum.
\section{Uniform discreteness and relative density of the spectrum}
\label{sec:general}

By a variation of the Proof of Theorem 2.9 of Ngai and Wang~\cite{NgaiWang} one can show that the spectrum of a Pisot
number $\beta$ with alphabet $\B$ being a finite subset of an imaginary quadratic extension of $\Q(\beta)$
is always a uniformly discrete set.

\begin{thm}
\label{thm:ud}
Let $\beta$ be a Pisot number and let $\omega$ be a zero of a quadratic polynomial $g(x)\in\big(\Q(\beta)\big)[x]$ with
negative discriminant. If $\B\subset\Q(\omega)$ is a finite alphabet, then the set $X^{\B}(\beta)=
\Big\{\sum_{j=0}^na_j\beta^j : n\in\N,\ a_j\in\B\Big\}$
is uniformly discrete.
\end{thm}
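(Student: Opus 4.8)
The plan is to show that the difference of any two distinct elements of $X^{\B}(\beta)$ has modulus bounded below by a positive constant independent of the chosen elements. Any such difference has the shape $z=\sum_{j=0}^n c_j\beta^j$ with coefficients $c_j$ in the finite set $D:=\B-\B\subset\Q(\omega)$, so uniform discreteness is exactly the assertion that there is an $\epsilon>0$ with $|z|\geq\epsilon$ for every nonzero $z$ of this form. Set $K:=\Q(\omega)$. First I would record the basic field-theoretic facts: since $g$ has negative discriminant and coefficients in the real field $\Q(\beta)$, its root $\omega$ is non-real, hence $\omega\notin\Q(\beta)$ and $[K:\Q(\beta)]=2$; writing $d=\deg\beta$ this gives $[K:\Q]=2d$. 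Because $D$ is finite I can choose a positive integer $m$ such that $mc$ is an algebraic integer for every $c\in D$; then $mz=\sum_j(mc_j)\beta^j$ is an algebraic integer of $K$, being a $\Z$-linear combination of the algebraic integers $(mc_j)\beta^j$.

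Next I would analyze the $2d$ embeddings $\sigma\colon K\hookrightarrow\C$. Each restricts to one of the $d$ embeddings of $\Q(\beta)$, sending $\beta$ to a conjugate $\beta_i$, and each embedding of $\Q(\beta)$ extends to exactly two embeddings of $K$. The two embeddings lying over the identity embedding of $\Q(\beta)$ (the one with $\beta\mapsto\beta$) are the inclusion $K\subset\C$ and its composition with complex conjugation, since complex conjugation fixes $\Q(\beta)\subset\R$ and interchanges the roots $\omega,\bar\omega$ of $g$; these send $z$ to $z$ and $\bar z$, so they contribute $|z|\cdot|z|=|z|^2$ to the product of the moduli of all conjugates. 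For the remaining $2(d-1)$ embeddings, lying over $\beta\mapsto\beta_i$ with $i\geq2$, the conjugate $\beta_i$ satisfies $|\beta_i|<1$ because $\beta$ is Pisot, and this is precisely where the Pisot hypothesis enters.

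The decisive estimate is then a geometric-series bound on those remaining conjugates. Fixing a constant $C$ with $|\sigma(c)|\leq C$ for all $c\in D$ and all embeddings $\sigma$, any embedding over $\beta\mapsto\beta_i$ with $i\geq2$ sends $z$ to $\sum_j\sigma(c_j)\beta_i^{\,j}$, whose modulus is at most $C\sum_{j\geq0}|\beta_i|^j=C/(1-|\beta_i|)=:M_i$, a bound uniform in $n$ and in $z$. Setting $M=\max_{i\geq2}M_i$ and using that $mz$ is a nonzero algebraic integer, so its field norm is a nonzero rational integer and hence $|N_{K/\Q}(mz)|\geq1$, I would compute
\[ 1\leq |N_{K/\Q}(mz)|=m^{2d}\prod_{\sigma}|\sigma(z)|\leq m^{2d}\,|z|^2\,M^{2(d-1)}, \]
which rearranges to $|z|\geq m^{-d}M^{-(d-1)}=:\epsilon>0$. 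As $\epsilon$ depends only on $\beta$, $\omega$, and $\B$, this yields uniform discreteness. The main obstacle I anticipate is essentially bookkeeping: correctly pairing the $2d$ embeddings of $K$ so that precisely the two that are \emph{not} contracted (those over $\beta\mapsto\beta$) are exactly the ones reproducing the modulus $|z|$ we wish to bound, while every other embedding is genuinely damped by some $|\beta_i|<1$. Once this accounting is in place, the norm inequality does the rest.
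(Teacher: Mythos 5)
Your argument is correct in substance, but it takes a genuinely different route from the paper. You run the classical Garsia-type norm estimate directly in the complex setting: clear denominators so that a nonzero difference $z$ becomes, after scaling by $m$, a nonzero algebraic integer; observe that exactly the two embeddings extending the identity of $\Q(\beta)$ reproduce $|z|$, while all others are damped by a geometric series in some conjugate $|\beta_i|<1$; and conclude from $|N(mz)|\geq 1$. The paper instead \emph{reduces} to the known one-dimensional result: it quotes Garsia's theorem that $X^F(\beta)$ is uniformly discrete for finite $F\subset\Z$, extends this to finite $F\subset\Q(\beta)$ by clearing denominators, writes each digit as $a+b\omega$ with $a,b\in\Q(\beta)$ so that $X^{\B}(\beta)\subset X^{\B_1}(\beta)+\omega X^{\B_2}(\beta)$, and notes that a sum of two uniformly discrete subsets of $\R$, one of them rotated by the non-real $\omega$, is uniformly discrete in $\C$. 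Your version is self-contained (it effectively reproves Garsia's lemma) and has the advantage of producing an explicit separation constant $\epsilon=m^{-d}M^{-(d-1)}$; the paper's version is shorter because it delegates the norm argument to the cited real case. One correction you should make: you set $K:=\Q(\omega)$ and then speak of $[K:\Q(\beta)]=2$, but the hypotheses do not force $\beta\in\Q(\omega)$ (e.g.\ $g(x)=x^2+1$ is a legitimate choice of $g$ for any Pisot $\beta$, and then $\Q(\omega)=\Q(i)$ does not contain $\beta$, nor does it contain the elements $c_j\beta^j$ whose norms you take). Take $K:=\Q(\beta,\omega)$ instead; then $[K:\Q(\beta)]=2$ since $\omega$ is non-real, $K$ is stable under complex conjugation because $\omega+\bar\omega\in\Q(\beta)$, and the rest of your embedding count and the norm inequality go through verbatim.
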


\begin{proof}
Let first $F \subset \Z$ be a finite set of integers.
A well known property of Pisot numbers is that
    $X^{F}(\beta)$ is uniformly discrete.
See for example~\cite{Garsia61}.
It is not hard to see that this is still true if we take
    $F \subset \Q(\beta)$ with $|F| < \infty$.
For if $a_i=\frac{b_0^{(i)}+b_1^{(i)}\beta+\dots+b_{d-1}^{(i)}\beta^{d-1}}{Q}$ with $b_j^{(i)},Q\in\Z$, then one can find an integer alphabet $G$ (consisting of sums of $b_j^{(i)}$) such that
$X^F(\beta)\subseteq\tfrac1Q X^G(\beta).$

Consider a finite set $\B\subset\Q(\omega)$. Since $\omega$ is quadratic over $\Q(\beta)$,
every element in $\B$ can be written as $a + b \omega$ where
    $a, b \in \Q(\beta)$.
Considering $\B_1 = \{a : a + b \omega \in \B\}$ and
            $\B_2 = \{b : a + b \omega \in \B\}$
    we see that
    \[ X^{\B}(\beta) \subset X^{\B_1}(\beta) + \omega X^{\B_2}(\beta)\,. \]
As the polynomial $g$ has negative discriminant, $\omega$ is complex and therefore
the right hand side of the above is a lattice composed of two uniformly discrete
    subsets of $\R$ (one of them rotated by $\omega$). We see that
    the right hand side is uniformly discrete.
Hence so is the left hand side as required.
\end{proof}

We see that Theorem \ref{thm:ud} can be applied in the case of the 
    spectrum of Pisot numbers under consideration. 

\begin{coro}
\label{c:ud}
Let $\beta$ be a Pisot-cyclotomic number of order $n$, let $\omega$ be a primitive $n$-th root of unity. If
$\B\subset\Q(\omega)$ is a finite set, then the spectrum $X^{\B}(\beta)$ is uniformly discrete.
\end{coro}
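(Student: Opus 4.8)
The plan is to obtain this as a direct consequence of Theorem~\ref{thm:ud}, by verifying its two hypotheses in the present situation. The base $\beta$ is a Pisot-cyclotomic number, hence in particular a Pisot number, so the first hypothesis holds immediately. The only thing left to produce is a quadratic polynomial over $\Q(\beta)$ with negative discriminant having the prescribed root $\omega$.

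First I would use the defining relation $\Z[\beta]=\Z[2\cos\frac{2\pi}{n}]$ of a Pisot-cyclotomic number of order $n$, which gives $\Q(\beta)=\Q(2\cos\frac{2\pi}{n})$, the maximal real subfield of the cyclotomic field $\Q(\omega)$. Since $\omega$ is a root of unity, $\omega^{-1}=\bar\omega$, so that $\omega+\omega^{-1}$ is real; being a Galois conjugate of $2\cos\frac{2\pi}{n}$ inside the abelian extension $\Q(\omega)/\Q$, it lies in $\Q(\beta)$. Then $\omega$ is a zero of
\[ g(x)=x^2-\left(\omega+\omega^{-1}\right)x+1, \]
and both coefficients lie in $\Q(\beta)$, so $g\in\big(\Q(\beta)\big)[x]$ as needed.

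Next I would compute the discriminant of $g$, which is $\left(\omega+\omega^{-1}\right)^2-4=\left(\omega-\omega^{-1}\right)^2=-\big|\omega-\omega^{-1}\big|^2$. This is strictly negative because $\omega\neq\omega^{-1}$ for a primitive $n$-th root of unity with $n\ge 3$ (and all orders in Table~\ref{tab:pc} satisfy $n\ge 5$). Thus $g$ is a quadratic over $\Q(\beta)$ of negative discriminant with $\omega$ as a root, and the finite set $\B\subset\Q(\omega)$ meets the remaining hypothesis; Theorem~\ref{thm:ud} then gives that $X^{\B}(\beta)$ is uniformly discrete.

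Since each step is a brief verification, I do not expect a genuine obstacle here. The only point that deserves care is the claim $\omega+\omega^{-1}\in\Q(\beta)$: it is precisely the Pisot-cyclotomic hypothesis $\Q(\beta)=\Q(2\cos\frac{2\pi}{n})$ that guarantees the coefficients of $g$ land in $\Q(\beta)$ and not merely in the larger field $\Q(\omega)$, which is what makes Theorem~\ref{thm:ud} applicable.
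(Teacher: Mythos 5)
Your proposal is correct and follows essentially the same route as the paper: both construct the quadratic $g(x)=x^2-(\omega+\omega^{-1})x+1$, observe that its middle coefficient lies in $\Q(\beta)$ by the Pisot-cyclotomic hypothesis, check that the discriminant $(\omega-\omega^{-1})^2$ is negative, and invoke Theorem~\ref{thm:ud}. Your extra remark that $\omega+\omega^{-1}$ is a Galois conjugate of $2\cos\frac{2\pi}{n}$ (covering an arbitrary primitive $n$-th root rather than just $e^{2\pi i/n}$) is a small point of added care that the paper glosses over, but it does not change the argument.
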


\begin{proof}
We have $\omega+\omega^{-1}=2\cos\frac{2\pi}{n}$, which implies that $\omega$ is a zero of the quadratic polynomial
$g(x)=x^2-2\cos\frac{2\pi}{n}x+1$. By definition of Pisot-cyclotomic numbers, $2\cos\frac{2\pi}{n}\in\Z[\beta]$,
thus $g$ is a polynomial over $\Q(\beta)$. Its discriminant $4\cos^2-4$ is negative (except for $n=1,2$ that corresponds to the trivial case $\beta\in\Z$). Thus we can apply Theorem~\ref{thm:ud}
to obtain the result.
\end{proof}

\begin{coro}\label{thm:flc}
Let $\beta>1$ be a Pisot-cyclotomic number of order $n$ and let $\B\subset\Q(\beta)$ be a finite set.
If $X^\B(\beta)$ is relatively dense, then it is of finite local complexity.
\end{coro}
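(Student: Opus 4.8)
The plan is to verify directly the three ingredients in the definition of finite local complexity, namely uniform discreteness, relative density, and finiteness of $(\Lambda-\Lambda)\cap B_\epsilon(0)$ for every $\epsilon$, where $\Lambda=X^\B(\beta)$. Relative density is assumed in the hypothesis, so the work concentrates on the remaining two conditions.

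First I would record that $\Q(\beta)\subseteq\Q(\omega)$ for $\omega$ a primitive $n$-th root of unity: since $\beta$ is Pisot-cyclotomic of order $n$ we have $\Z[\beta]=\Z[2\cos\tfrac{2\pi}{n}]=\Z[\omega+\omega^{-1}]\subseteq\Z[\omega]$, whence $\Q(\beta)\subseteq\Q(\omega)$. Consequently every finite subset of $\Q(\beta)$ is a finite subset of $\Q(\omega)$, so Corollary~\ref{c:ud} applies to it. In particular $X^\B(\beta)$ is uniformly discrete, which together with the assumed relative density secures the first two requirements.

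The core of the argument is the difference condition, and here I would realize $\Lambda-\Lambda$ inside a single spectrum. Passing from $\B$ to the enlarged alphabet $\B_0=\B\cup\{0\}$ allows representations to be padded with zeros to any common length. Then for $x,y\in X^\B(\beta)\subseteq X^{\B_0}(\beta)$, writing both as $\sum_{j=0}^N a_j\beta^j$ and $\sum_{j=0}^N b_j\beta^j$ with $a_j,b_j\in\B_0$ for a common degree $N$, one gets $x-y=\sum_{j=0}^N(a_j-b_j)\beta^j$ with every coefficient in the finite set $\mathcal C:=\B_0-\B_0\subset\Q(\beta)$. Hence $\Lambda-\Lambda\subseteq X^{\mathcal C}(\beta)$.

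Finally, since $\mathcal C$ is a finite subset of $\Q(\beta)\subseteq\Q(\omega)$, Corollary~\ref{c:ud} shows that $X^{\mathcal C}(\beta)$ is uniformly discrete; and a uniformly discrete subset of $\C$ meets every bounded ball in only finitely many points, by the standard packing estimate that disjoint balls of radius $\epsilon_1/2$ centred at the points fit into a slightly larger ball of finite area. Therefore $(\Lambda-\Lambda)\cap B_\epsilon(0)$ is finite for each $\epsilon$, completing the verification of finite local complexity. The only mildly delicate point is the bookkeeping that places the difference set inside one spectrum $X^{\mathcal C}(\beta)$; once the alphabet is closed under the passage to differences this is routine, and I anticipate no substantial obstacle beyond invoking Corollary~\ref{c:ud}.
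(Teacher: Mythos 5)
Your proof is correct and follows essentially the same route as the paper: reduce the difference set to a single spectrum over the difference alphabet and invoke Corollary~\ref{c:ud} (via $\Q(\beta)\subseteq\Q(\omega)$) for uniform discreteness, hence finiteness in bounded balls. Your explicit padding with $\B_0=\B\cup\{0\}$ is in fact slightly more careful than the paper's bare inclusion $X^\B(\beta)-X^\B(\beta)\subset X^{\B-\B}(\beta)$, which tacitly assumes representations can be brought to a common length.
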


\begin{proof}
We need to check whether $X^\B(\beta)-X^\B(\beta)$ has finitely many elements in the ball $B_\epsilon(0)$ of any radius $\epsilon$.
Since $X^\B(\beta)-X^\B(\beta)\subset X^{\B-\B}(\beta)$ and $\B-\B\subset\Q(\omega)$, we can use Corollary~\ref{c:ud} to obtain the statement.
\end{proof}

Under the assumption that $X^\B(\beta)$ is discrete, we can compute all the points of the finite set $X^\B(\beta)\cap B_R(0)$. The following algorithm is based on the observation that if $z\in X^\B(\beta)$
such that $|z|>\frac{\max\{|a|\ :\ a\in\B\}}{\beta-1}$, then $|\beta z+a| > |z|.$
\begin{itemize}
  \item  Set $X_0:=\B$ and $M:=\max\{R,\frac{\max\{|a|\ :\ a\in\B\}}{\beta-1}\}$.
  \item Compute $\tilde X_n : = X_{n-1}\cup (\bigcup_{a\in\B} \beta X_{n-1}+a)$.
  \item Set $X_n := \tilde X_n\cap B_M(0)$.
  \item Repeat until $X_n=X_{n-1}.$
\end{itemize}

%

Let us concentrate on the question about relative density of the spectra in the complex plane. We provide a number of results for general
real base and general complex alphabet.
The next is an adaptation of Theorem 1.1 \cite{HePe15},
    where spectrum of complex numbers with an alphabet of
    integer digits is considered.

\begin{thm}
\label{thm:nrd}
Let $\beta\in\C,|\beta|>1$ and let $\B\subset\C$ be finite. 
If $\#\B<|\beta|^2$, then $X^\B(\beta)$ is not relatively dense.
\end{thm}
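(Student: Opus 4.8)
The plan is to argue by contradiction through a two-sided estimate on the number of points of $X^\B(\beta)$ inside a large disk $B_R(0)$. On one hand, relative density will force this count to grow at least like $R^2$; on the other hand, the self-similar structure of the spectrum will cap it by roughly $(\#\B)^N$, where $|\beta|^N\approx R$, that is, by $R^{\log\#\B/\log|\beta|}$. Since the hypothesis $\#\B<|\beta|^2$ gives $\log\#\B/\log|\beta|<2$, the upper bound is $o(R^2)$, and the two estimates collide for $R$ large.

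For the lower bound I would assume $X^\B(\beta)$ relatively dense and let $r_c$ be its covering radius, as in property~\eqref{pr:rd}. For every $x\in B_R(0)$ there is a point $p\in X^\B(\beta)$ with $|x-p|\le r_c$, and then $p\in B_{R+r_c}(0)$; hence the disks $B_{r_c}(p)$ with $p\in X^\B(\beta)\cap B_{R+r_c}(0)$ cover $B_R(0)$. Comparing areas gives $\pi R^2\le \#\bigl(X^\B(\beta)\cap B_{R+r_c}(0)\bigr)\cdot\pi r_c^2$, so that $B_{R+r_c}(0)$ contains at least $(R/r_c)^2$ points. This step is the standard covering-radius bound and is routine.

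The upper bound rests on extracting the lowest digit, which yields $X^\B(\beta)\subseteq\bigcup_{a\in\B}\bigl(a+\beta\,(X^\B(\beta)\cup\{0\})\bigr)$. Iterating this inclusion $N$ times exhibits every $z\in X^\B(\beta)$ as $z=z'+\beta^N w$, where $z'=\sum_{j=0}^{N-1}a_j\beta^j$ ranges over at most $(\#\B)^N$ short sums and $w\in X^\B(\beta)\cup\{0\}$. If $z\in B_R(0)$ then, with $M=\max_{a\in\B}|a|$, one has $|z'|\le M|\beta|^N/(|\beta|-1)$, whence $|w|\le R|\beta|^{-N}+M/(|\beta|-1)$. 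Choosing $N=\lceil\log R/\log|\beta|\rceil$ confines $w$ to a fixed disk $\overline{B_C(0)}$ with $C$ depending only on $\beta$ and $\B$, so that $\#\bigl(X^\B(\beta)\cap B_R(0)\bigr)\le (\#\B)^N\cdot\bigl(\#\bigl(X^\B(\beta)\cap\overline{B_C(0)}\bigr)+1\bigr)$.

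The main obstacle is exactly the finiteness of the factor $\#\bigl(X^\B(\beta)\cap\overline{B_C(0)}\bigr)$. Because $|\beta|>1$ but the digits of $\B$ need not be well separated, high-degree sums $\sum a_j\beta^j$ can land near the origin through internal cancellation, and one must \emph{a priori} rule out their accumulating in a bounded region; the self-similar recursion alone does not settle this, so this is where the real work lies. I would control it either by invoking discreteness of the spectrum, which in the setting of the corollaries is supplied by Corollary~\ref{c:ud}, or more generally through a Garsia-type separation estimate exploiting $|\beta|>1$. Once this constant $K:=\#\bigl(X^\B(\beta)\cap\overline{B_C(0)}\bigr)+1$ is finite, the upper bound becomes $\#\bigl(X^\B(\beta)\cap B_R(0)\bigr)\le K(\#\B)^N\le K'\,R^{\log\#\B/\log|\beta|}$. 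Comparing with the lower bound $(R/r_c)^2$ and letting $R\to\infty$ forces $\log\#\B/\log|\beta|\ge 2$, i.e.\ $\#\B\ge|\beta|^2$, contradicting the hypothesis; hence $X^\B(\beta)$ is not relatively dense.
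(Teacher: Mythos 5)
Your argument is essentially the paper's own proof in a different packaging: your decomposition $z=z'+\beta^N w$ with $w$ confined to a fixed disk is exactly the $N$-fold iteration of the paper's inclusion $\Lambda\cap B_{|\beta|r-M}(0)\subseteq\beta\bigl(\Lambda\cap B_r(0)\bigr)+\B$, which yields $\#\bigl(\Lambda\cap B_{r_N}(0)\bigr)\leq(\#\B)^N\cdot\#\bigl(\Lambda\cap B_{r_0}(0)\bigr)$ with $r_N\sim |\beta|^N r_0$; and your covering-radius area estimate is equivalent to the criterion~\eqref{eq:fr} that a relatively dense planar set has at least quadratically many points in large balls. The comparison of the two growth rates and the conclusion $\#\B\geq|\beta|^2$ are then identical.

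The one place you diverge is the step you flag as ``where the real work lies,'' the finiteness of $K=\#\bigl(X^\B(\beta)\cap\overline{B_C(0)}\bigr)$. You are right that the counting argument needs this; note, though, that the paper's proof needs it just as much (it divides by $\#\bigl(\Lambda\cap B_{r_k}(0)\bigr)/r_k^2$, which is only legitimate if that quantity is finite) and does not comment on it, so you are not behind the paper here --- if anything you are more careful. What I would warn against is your proposed repairs: Corollary~\ref{c:ud} applies only to Pisot-cyclotomic $\beta$ with $\B\subset\Q(\omega)$, and a Garsia-type separation estimate is a property of \emph{Pisot} bases, not of arbitrary $\beta\in\C$ with $|\beta|>1$; for a generic non-Pisot base the spectrum can accumulate, so neither route establishes finiteness at the level of generality in which Theorem~\ref{thm:nrd} is stated. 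In every case the paper actually uses the theorem (Pisot-cyclotomic $\beta$, alphabet in the cyclotomic field), uniform discreteness from Theorem~\ref{thm:ud} does supply the needed finiteness, so your first suggestion closes the argument there; but you should either add a discreteness hypothesis to the statement or find an argument for $K<\infty$ that does not pass through separation if you want the theorem exactly as stated. Everything else in your write-up (the digit-stripping inclusion, the bound $|w|\leq R|\beta|^{-N}+M/(|\beta|-1)$, the choice of $N$, the exponent comparison) is correct.
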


\begin{proof}
Denote $M:=\max\{|a|:a\in\B\}$.
For any relatively dense set $\Lambda\subset\C$, we have
\begin{equation}\label{eq:fr}
               \liminf_{r\to\infty} \frac{\#\bigl(\Lambda\cap B_r(0)\bigr)}{r^2} >0\,.
\end{equation}

Set $\Lambda:= X^\B(\beta)$. We will show that such $\Lambda$ does not satisfy~\eqref{eq:fr}.
First notice that for any $r> \frac{M}{|\beta|}$ we have
\[
               \Lambda\cap B_{|\beta| r-M}(0)\subseteq \beta\bigl(\Lambda\cap B_r(0)\bigr)+\B\,,
\]
which then clearly implies that
\begin{equation}\label{eq:frL}
               \# \bigl(\Lambda\cap B_{|\beta| r-M}(0)\bigr)\leq \#\B\cdot\#\bigl(\Lambda\cap B_r(0)\bigr)
.\end{equation}
Indeed, consider $x=\sum_{j=0}^k a_j\beta^j$ with $a_j\in\B$  and such that $|x|\leq|\beta| r-M$.
Then $y:= (x-a_0)/\beta=\sum_{j=1}^k a_j\beta^{j-1}\in \Lambda$ and
$|y|\leq (|x|+|a_0|)/|\beta|\leq (|\beta| r-M+M)/|\beta|=r$.
Since $x=\beta y+a_0$, the inclusion is valid.

By~\eqref{eq:fr}, it is enough to construct a sequence $(r_k)_{k\geq 0}$ such that $r_k\to\infty$ and
\begin{equation}\label{eq:h}
               \lim_{k\to\infty} \frac{\#\bigl(\Lambda\cap B_{r_k}(0)\bigr)}{r_k^2} = 0\,.
\end{equation}
Set $r_{k+1}= |\beta| r_k-M$. We have $r_k=|\beta|^{k}\big(r_0-\frac{M}{|\beta|-1}\big)+\frac{M}{|\beta|-1}$, and so
any choice $r_0>\frac{M}{|\beta|-1}$ ensures that $r_k\to\infty$ and $r_{k+1}/r_k\to|\beta|$.
Then \eqref{eq:frL} gives
$$
\# \bigl(\Lambda\cap B_{r_{k+1}}(0)\bigr)\leq \#\B\cdot\#\bigl(\Lambda\cap B_{r_k}(0)\bigr)\,.
$$
Since $\Lambda= X^\B(\beta)$ always contains the origin, the latter is non-zero and we can write
$$
\frac{\#\bigl(\Lambda\cap B_{r_{k+1}}(0)\bigr)/{r_{k+1}^2}}
{\# \bigl(\Lambda\cap B_{r_k}(0)\bigr)/{r_{k}^2}} \leq \#\B\frac{r_{k}^2}{r_{k+1}^2} \xrightarrow{k\to\infty}
               \frac{\#\B}{|\beta|^2}<1\,,
$$
which proves~\eqref{eq:h}. Thus $X^\B(\beta)$ is not relatively dense.
\end{proof}

The condition $\#\B\geq |\beta|^2$ for relative density in Theorem~\ref{thm:nrd} is necessary, but not sufficient.
We further
provide necessary and sufficient conditions for relative density of the spectrum $X^{\B}(\beta)$ in terms of representation of
complex numbers in base $\beta$  with digits in $\B$.

Given a base $\gamma\in\C,|\gamma|>1$ and a finite alphabet $\B\subset\C$, we will say that $\C$ is $(\gamma,\B)$-representable, if for any $z\in\C$,
there exists a convergent series
$$
z=\sum_{i\leq l}a_i\gamma^{i}\,,\quad l\in\N,\ a_i\in\B\,,
$$
which is said to be a $(\gamma,\B)$-representation of $z$. For the given representation of $z$, we define its integer and fractional parts,
$$
\mathrm{inp}(z):=\sum_{i=0}^{l}a_i\gamma^i\quad\text{ and }\quad\mathrm{frp}(z):=\sum_{i=-\infty}^{-1} a_i\gamma^i\,.
$$

For $\gamma\in\C, |\gamma|<1$ define the set $K:=K(\gamma,\B)\subset\C$ as follows:
\begin{equation}\label{eq:K}
K(\gamma,\B) := \left\{\sum_{i=0}^{+\infty}a_i\gamma^{i}: a_i\in\B\right\}\,.
\end{equation}

\begin{remark}
It can be shown that $K$ is a compact set. In fact, it is the attractor of the iterated function system (IFS) with contraction $\gamma$ and offsets $a\in\B$.
In particular, $K=K(\gamma,\B)$ is the unique non-empty compact set satisfying
$$
K = \bigcup_{a \in \B} (\gamma K + a) \,.
$$
\end{remark}

\begin{prop}\label{p:rep-rd}
Let $\beta\in\C, |\beta|>1$ and let $\B\subset\C$ be finite.
If $\C$ is $(\beta,\B)$-representable, then $X^\B(\beta)$ is relatively dense.
Moreover, for its covering radius, we have $r_c\leq M/(|\beta|-1)$, where $M=\max\{|a|:a\in\B\}$.
\end{prop}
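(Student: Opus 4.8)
The plan is to turn a $(\beta,\B)$-representation of an arbitrary point $z\in\C$ into an explicit nearby point of the spectrum by splitting it into integer and fractional parts. The guiding observation is that the integer part of any representation is already a valid element of $X^\B(\beta)$, so it serves as a nearby point, and the distance from $z$ to that point is exactly the modulus of the fractional part, which a geometric estimate bounds uniformly in $z$.

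First I would fix an arbitrary $z\in\C$ and, using the assumed $(\beta,\B)$-representability, write $z=\sum_{i\le l}a_i\beta^i$ with $l\in\N$ and all $a_i\in\B$. Next I would observe that $\mathrm{inp}(z)=\sum_{i=0}^l a_i\beta^i$ is a finite polynomial in $\beta$ with coefficients in $\B$, hence $\mathrm{inp}(z)\in X^\B(\beta)$; here one must check the bookkeeping that the lower index really starts at $0$, which is guaranteed by $l\in\N$. It then remains only to bound $|z-\mathrm{inp}(z)|=|\mathrm{frp}(z)|$ independently of $z$.

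The core estimate uses $|a_i|\le M$ together with $|\beta|>1$ and the triangle inequality:
\[
|\mathrm{frp}(z)|=\Big|\sum_{i=-\infty}^{-1}a_i\beta^i\Big|\le M\sum_{i=1}^{\infty}|\beta|^{-i}=\frac{M}{|\beta|-1}\,.
\]
Since this bound does not depend on $z$, every point of $\C$ lies within distance $M/(|\beta|-1)$ of some point of $X^\B(\beta)$; choosing $\epsilon_2=M/(|\beta|-1)+\delta$ for an arbitrary $\delta>0$ establishes relative density, and passing to the infimum over admissible $\epsilon_2$ yields $r_c\le M/(|\beta|-1)$.

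I do not expect a substantive obstacle, as the whole argument reduces to the one-line tail estimate above once the integer/fractional decomposition is in place. The only genuinely delicate point is accounting for the fact that relative density is phrased with open balls $B_{\epsilon_2}(z)$, which is precisely why the conclusion is the non-strict inequality $r_c\le M/(|\beta|-1)$: we exhibit a spectrum point at distance \emph{at most} $M/(|\beta|-1)$, and the covering radius is then recovered as the infimum of radii that work.
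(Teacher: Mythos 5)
Your argument is correct and is essentially identical to the paper's proof: both take the integer part $\mathrm{inp}(z)$ of a $(\beta,\B)$-representation as the nearby spectrum point and bound the fractional part by the geometric tail $M\sum_{i\geq 1}|\beta|^{-i}=M/(|\beta|-1)$. The only difference is that you spell out the open-ball bookkeeping for the definition of relative density, which the paper leaves implicit.
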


\pfz
By assumption, for any $z\in\C$ we have an expansion $z=\sum_{i=-\infty}^l a_i\beta^{i}$, $a_i\in\B$. 
Let $\tilde{z} = \mathrm{inp}(z) = 
\sum_{i= 0}^l a_i\beta^i$.
Then $\tilde{z} \in X^\B(\beta)$ and 
    $|z-\tilde{z}|=|\sum_{i=-\infty}^{-1} a_i\beta^i|\leq M/(|\beta|-1)$.
\pfk

In order to present an opposite statement, we need an additional requirement. Recall that we say a set $\Lambda\subset\C$ is discrete,
if for every $\epsilon>0$, the intersection $\Lambda\cap B_{\epsilon}(0)$ is finite.

\begin{prop}\label{p:rd-rep}
Let $\beta\in\C,|\beta|>1$ and let $\B\subset\C$ be finite.
If $X^\B(\beta)$ is relatively dense and discrete, then $\C$ is $(\beta,\B)$-representable.
\end{prop}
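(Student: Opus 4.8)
The plan is to upgrade relative density into an exact $(\beta,\B)$-representation by producing arbitrarily good finite approximations of a target point and then using discreteness to force these approximations to converge to a genuine infinite expansion. First I would record the self-similar structure of the spectrum. Writing $\Lambda := X^\B(\beta)$ and using that $\Lambda$ contains the origin, one checks directly that $\Lambda = \beta\Lambda + \B$ (Minkowski sum over the digits), hence $\beta^{-1}\Lambda = \Lambda + \beta^{-1}\B$ and, by iteration, $\beta^{-k}\Lambda = \Lambda + \sum_{i=1}^k\beta^{-i}\B$ for every $k$. Relative density means $\C = \Lambda + \overline{B_R(0)}$ for some $R$ (one may take $R = r_c$). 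Applying the contraction $\beta^{-k}$ to this identity and using $\beta^{-k}\C = \C$ gives
\[ \C = \beta^{-k}\Lambda + \overline{B_{R/|\beta|^k}(0)} = \Lambda + \sum_{i=1}^k\beta^{-i}\B + \overline{B_{R/|\beta|^k}(0)}\,. \]
Thus for a fixed $z\in\C$ and every $k$ there are $y_k\in\Lambda$, digits $b_1^{(k)},\dots,b_k^{(k)}\in\B$ and an error $\epsilon_k$ with $|\epsilon_k|\le R/|\beta|^k$ such that $z = y_k + \sum_{i=1}^k b_i^{(k)}\beta^{-i} + \epsilon_k$.

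Next I would pass to the limit $k\to\infty$. Since the fractional sums are uniformly bounded by $M/(|\beta|-1)$ with $M=\max\{|a|:a\in\B\}$, and $\epsilon_k\to0$, the points $y_k = z - \sum_{i=1}^k b_i^{(k)}\beta^{-i} - \epsilon_k$ all lie in a fixed bounded region. This is exactly where discreteness enters: a discrete set meets any bounded region in finitely many points, so $\{y_k\}$ takes only finitely many values and some $y\in\Lambda$ is attained along an infinite index set. Restricting to those indices, compactness of $\B^\N$ (Tychonoff, as $\B$ is finite) lets me extract a further subsequence along which each digit stabilises, $b_i^{(k)}\to b_i\in\B$. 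A routine tail estimate, bounding $\sum_{i=m+1}^k b_i^{(k)}\beta^{-i}$ by $M|\beta|^{-m}/(|\beta|-1)$, then shows $z - y = \sum_{i=1}^\infty b_i\beta^{-i}$. Expanding $y = \sum_{j=0}^l a_j\beta^j\in\Lambda$ yields the desired representation $z = \sum_{i\le l} c_i\beta^i$ with all digits in $\B$.

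The main obstacle is precisely this passage from a family of finite approximations to one infinite expansion: the integer parts $y_k$ and the fractional digits $b_i^{(k)}$ a priori depend on $k$ and need not be consistent across different $k$. Discreteness of $\Lambda$ is what pins the integer part down to finitely many candidates, and compactness of the digit space then delivers a single limiting digit string; without discreteness the $y_k$ could drift through infinitely many values whose limit need not lie in $\Lambda$, and the argument would only recover density of $X^\B(\beta)$ in $\C$ rather than the exact representability asserted here. I therefore expect the bookkeeping of the two nested subsequence extractions (first constant $y$, then stabilising digits) to be the only genuinely delicate part, the remaining estimates being elementary geometric-series bounds.
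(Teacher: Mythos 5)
Your argument is correct and is essentially the paper's proof in different clothing: the paper approximates $\beta^k z$ by points $y_k\in X^\B(\beta)$ within the covering radius and divides by $\beta^k$, which yields exactly your decomposition $z = y_k + \sum_{i=1}^k b_i^{(k)}\beta^{-i} + \epsilon_k$, then uses discreteness to freeze the integer part and compactness of $K(1/\beta,\B)$ (rather than of $\B^\N$) to produce the limiting fractional expansion. The only differences are the order of the two subsequence extractions and the form in which compactness is invoked, neither of which is substantive.
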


\pfz
Since $X^\B(\beta)$ is relatively dense, we can find, for a given $z\in\C$, an element $y\in X^\B(\beta)$ such that $|z-y|\leq r_c$, where $r_c$ is the covering radius
of $X^\B(\beta)$. Similarly, we find a sequence $(y_n)_{n\in\N}\subset X^\B(\beta)$ such that for each $n\in\N$, we have $|\beta^nz-y_n|\leq r_c$.
Obviously, $\lim_{n\to\infty} y_n/{\beta^{n}} = z$. We have to prove the existence of a series representing $z$.

Since $y_n\in X^\B(\beta)$, we can write
$$
\frac{y_n}{\beta^n} = \sum_{i=-n}^{k(n)}a_i^{(n)}\beta^i\,.
$$
Note that $\mathrm{frp}(y_n/\beta^n)\in K(1/\beta,\B)$ and $K$ is compact, thus we can find a subsequence $(w_n)_{n\in\N}$ of $(y_n/\beta^n)_{n\in\N}$ such that
$\mathrm{frp}(w_n)_{n\in\N}$ is convergent with $\lim_{n\to\infty} \mathrm{frp}(w_n)= w=\sum_{i=-\infty}^{-1}b_i\beta^i\in K$.

Also the sequence
$\mathrm{inp}(w_n)=w_n-\mathrm{frp}(w_n)\in X^\B(\beta)$ is convergent. Since $X^\B(\beta)$ is discrete, $\mathrm{inp}(w_n)$ is eventually constant,
i.e. $\mathrm{inp}(w_n)=\sum_{i= 0}^Kb_i\beta^i$ for $n\geq N_0$.

Altogether, $\sum_{i=-\infty}^Kb_i\beta^i$ is a $(\beta,\B)$-representation of $z$.
\pfk

\begin{prop}\label{prop:rd-K}
Let $\beta\in\C, |\beta|>1$ and let $\B\subset\C$ be finite. Suppose that $X^\B(\beta)$ is discrete.
Then $\C$ is $(\beta,\B)$-representable if and only if $0\in \mathrm{int}\big(K(1/\beta,\B)\big)$.
\end{prop}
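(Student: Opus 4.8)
The plan is to recast both implications in terms of the dilates of the compact set $K=K(1/\beta,\B)$, writing $\gamma:=1/\beta$ for brevity. The elementary computation
\[
\beta^n K=\Bigl\{\sum_{i\le n}a_{n-i}\beta^{i}:a_j\in\B\Bigr\}
\]
shows that $\bigcup_{n\ge 0}\beta^n K$ is precisely the set of $(\beta,\B)$-representable numbers (each such sum is a convergent $(\beta,\B)$-representation, and conversely every representation with top index $n$ is of this form). Hence $\C$ is $(\beta,\B)$-representable exactly when $\bigcup_{n\ge 0}\beta^n K=\C$, and both directions will be read off from this description.

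The implication $0\in\mathrm{int}(K)\Rightarrow$ representability is the easy half and uses no discreteness. If $B_\rho(0)\subseteq K$ for some $\rho>0$, then applying the dilation $\beta^n$ gives $B_{|\beta|^n\rho}(0)=\beta^n B_\rho(0)\subseteq\beta^n K$; since $|\beta|>1$ the radii $|\beta|^n\rho$ tend to infinity, so $\bigcup_{n\ge0}\beta^n K=\C$, as required.

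For the converse I would exploit the splitting of a representation into its integer and fractional parts. Fix any $\rho>0$. Every $u\in B_\rho(0)$ has a representation, and writing $u=\mathrm{inp}(u)+\mathrm{frp}(u)$ one has $\mathrm{frp}(u)\in\gamma K$ together with the bound $|\mathrm{frp}(u)|\le M/(|\beta|-1)=:R$, where $M=\max\{|a|:a\in\B\}$. Consequently $\mathrm{inp}(u)\in X^\B(\beta)$ lies in the closed ball of radius $\rho+R$ about $0$. Discreteness now enters decisively: $S:=X^\B(\beta)\cap\overline{B_{\rho+R}(0)}$ is finite, so only finitely many integer parts occur and $B_\rho(0)\subseteq S+\gamma K$. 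To fold this finite union into one dilate, choose for each $s\in S$ a representation $s=\sum_{j=0}^{\deg s}b_j\beta^j$; then $s+\gamma K\subseteq\beta^{\deg s}K$. Putting $N=\max_{s\in S}\deg s$ and using that $0\in\B$ (equivalently $\gamma K\subseteq K$, so the dilates $\beta^k K$ are nested), we get $s+\gamma K\subseteq\beta^{N}K$ for every $s$, whence $B_\rho(0)\subseteq\beta^N K$. Applying $\gamma^{N}$ gives $B_{|\gamma|^N\rho}(0)\subseteq K$, i.e.\ $0\in\mathrm{int}(K)$.

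The hard part is exactly this converse. Knowing only that $\C=\bigcup_{n\ge0}\beta^n K$ yields, via a Baire category argument, that $K$ has nonempty interior; but transporting an interior point to the origin cannot be done by the self-similar contractions alone, since they shrink any fixed ball and one is left with balls of radius $|\gamma|^n\delta$ drifting toward $0$. The crux is that discreteness of $X^\B(\beta)$ breaks this obstruction: it confines the integer parts of all numbers in a fixed ball about $0$ to a finite set $S$, converting the global covering by $\{\beta^n K\}$ into a finite—and therefore scalable—local covering $B_\rho(0)\subseteq S+\gamma K$ of a neighbourhood of the origin, which the single rescaling by $\gamma^N$ then pulls into $K$. (I note that the folding step uses $0\in\B$, which holds in all cases of interest here; the same device handles general $\B$ with only minor additional bookkeeping.)
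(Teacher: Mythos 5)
Your proof is correct and follows essentially the same route as the paper's: the forward direction dilates a ball $B_\rho(0)\subseteq K$ by $\beta^n$, and the converse splits each point of a fixed ball into integer and fractional parts, uses discreteness to confine the integer parts to a finite set, extracts a uniform degree bound $N$, and rescales by $\beta^{-N}$ to place a ball around $0$ inside $K$. The $0\in\B$ caveat you flag for the nesting step is likewise implicit in the paper's argument (where integer parts are padded to a common degree $L$), so you lose nothing relative to their proof.
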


\pfz
First suppose that $0\in \mathrm{int}\big(K(1/\beta,\B)\big)$, and consider $z\in\C$. Necessarily, there exists $n\in\N$ such that $z/\beta^n\in K$. Therefore
$z/\beta^n=\sum_{i=-\infty}^{0}a_i\beta^i$ for some $a_i\in\B$, whence $z$ has a $(\beta,\B)$-representation $z=\sum_{i=-\infty}^{n}a_{i-n}\beta^{i}$.

Suppose on the other hand that every $z\in\C$ has a $(\beta,\B)$-representation.
If $z\in B_1(0)$, then for its fixed representation we have
$$
\mathrm{inp}(z)=z-\mathrm{frp}(z)\in B_C(0)\cap X^\B(\beta)\,,
$$
where $C=1+\frac{M}{|\beta|-1}$, $M:=\max\{|a| : a\in\A\}$. Therefore the set $\{ \mathrm{inp}(z) : z\in B_1(0)\}\subset B_C(0)$ is finite and its elements can be picked from
$\{\sum_{i=0}^L a_i\beta^i : a_i\in\B \}$ for some fixed $L\in \N$. 
This proves that all elements in $\beta^{-L}B_C(0)$ have a representation 
    with the integer part being $0$.
That is $\beta^{-L}B_C(0)\subset K(1/\beta,\B)$, i.e. $0\in \mathrm{int}\big(K(1/\beta,\B)\big)$.
\pfk


\begin{thm}\label{thm:ekvivalence}
Let $\beta\in\C, |\beta|>1$ and $\mathcal B\subset\C$ be finite. Suppose that $X^\B(\beta)$ is discrete. Then the following statements are equivalent.
\begin{enumerate}
  \item $X^\B(\beta)$ is relatively dense.
  \item $\C$ is $(\beta,\B)$-representable.
  \item $0\in\mathrm{int} (K(1/\beta,\B))$.
  \item $\sum_{i=-N}^0a_i\beta^{i}\in\mathrm{int} (K(1/\beta,\B))$ for every $N\in\N$ and $a_i\in\B$.
\end{enumerate}
\end{thm}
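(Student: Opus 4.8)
The plan is to recycle the three preceding propositions for the implications among (1), (2), (3), and then to splice statement (4) into the equivalence using only the self-similar structure of $K=K(1/\beta,\B)$. First I would record that $(1)\Leftrightarrow(2)\Leftrightarrow(3)$ is already established: Proposition~\ref{p:rep-rd} gives $(2)\Rightarrow(1)$, Proposition~\ref{p:rd-rep} gives $(1)\Rightarrow(2)$ (this is where the standing hypothesis that $X^\B(\beta)$ is discrete enters), and Proposition~\ref{prop:rd-K} gives $(2)\Leftrightarrow(3)$ (again under discreteness). Hence the entire theorem reduces to proving $(3)\Leftrightarrow(4)$.

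The substantive direction is $(3)\Rightarrow(4)$, and the key tool is the IFS identity $K=\bigcup_{a\in\B}\big(\tfrac1\beta K+a\big)$ recorded in the Remark following \eqref{eq:K}. Setting $\phi_a(z)=\tfrac1\beta z+a$, each $\phi_a$ is an affine homeomorphism of $\C$ satisfying $\phi_a(K)\subseteq K$. Since a homeomorphism carries open sets to open sets, $\phi_a(\mathrm{int}(K))$ is open and contained in $\phi_a(K)\subseteq K$, so $\phi_a(\mathrm{int}(K))\subseteq\mathrm{int}(K)$; that is, $\mathrm{int}(K)$ is invariant under every $\phi_a$. Next I would record the elementary identity $\sum_{i=-N}^0 a_i\beta^i=\big(\phi_{a_0}\circ\phi_{a_{-1}}\circ\cdots\circ\phi_{a_{-N}}\big)(0)$, which follows by expanding the composition. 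Assuming (3), i.e. $0\in\mathrm{int}(K)$, I would apply the maps $\phi_{a_{-N}},\phi_{a_{-(N-1)}},\dots,\phi_{a_0}$ one at a time; by the invariance just noted, each application keeps the point inside $\mathrm{int}(K)$, so $\sum_{i=-N}^0 a_i\beta^i\in\mathrm{int}(K)$ for every $N$ and every choice of digits, which is exactly (4). The induction on $N$ is immediate once the invariance of $\mathrm{int}(K)$ is in place.

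The converse $(4)\Rightarrow(3)$ is the trivial direction: the origin is itself of the admissible form $\sum_{i=-N}^0 a_i\beta^i$ (take all digits equal to $0$; recall that the origin always belongs to the spectrum, as used in the proof of Theorem~\ref{thm:nrd}), so specializing (4) yields $0\in\mathrm{int}(K)$, which is (3). I expect the only genuinely non-formal points to be the interior-invariance $\phi_a(\mathrm{int}(K))\subseteq\mathrm{int}(K)$ and the bookkeeping of the composition identity, both of which are routine once the self-similarity relation for $K$ is invoked; the step I would double-check is the reduction in $(4)\Rightarrow(3)$, namely that the zero combination is indeed permitted so that $0$ occurs among the points addressed by (4).
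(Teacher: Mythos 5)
Your proof is correct and follows essentially the same route as the paper: the equivalence of (1)--(3) is quoted from the three preceding propositions, and your (3)$\Rightarrow$(4) argument—composing the one-digit IFS maps $\phi_a$ and using that each sends $\mathrm{int}(K)$ into $\mathrm{int}(K)$—is just an unwound version of the paper's one-line observation that $x\in\mathrm{int}\big(x+\beta^{-N-1}K\big)\subset K$. Your caveat about needing $0\in\B$ for (4)$\Rightarrow$(3) is a point the paper dismisses as ``trivial''; in all of the paper's applications the alphabet does contain $0$, so this is harmless.
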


\begin{proof}
The equivalences (1),(2), and (3) follow from Propositions~\ref{p:rep-rd}, \ref{p:rd-rep}, and~\ref{prop:rd-K}.

Let $x = \sum_{i=0}^Na_i\beta^{-i}$. If (3) holds, then we have
$$
x\in\mathrm{int}\big(x+\beta^{-N-1}K(1/\beta,\B)\big)\subset K(1/\beta,\B).
$$
The implication (4)$\implies$(3) is trivial.
\end{proof}

\section{Spectrum with polygonal alphabet}

Let us focus on the specific example of the spectrum, namely when $\beta$ is a Pisot-cyclotomic number of order $n$ and the alphabet of digits
is formed by the vertices of the regular $n$-gon and the origin.
In particular, $\A:=\Delta_n\cup\{0\}$. Recall we have denoted $\Delta_n=\{\omega^j:j=1,\dots,n\}$ with $\omega=\exp{\frac{2\pi i}{n}}$. Since $\A\subset\Q(\omega)$,
it follows from Theorem~\ref{thm:ud} that the spectrum $X^{\A}(\beta)$ is uniformly discrete.
Based on Theorem~\ref{thm:nrd}, the only options for relatively dense spectrum among the quadratic and cubic Pisot-cyclotomic numbers from Table~\ref{tab:pc} are
$\tau$ (with $n=5$ and $n=10$),
$\tau^2$ (with $n=10$),
$\lambda$ (with $n=7$ and $n=14$),
$\delta$ (with $n=8$),
$\kappa$ (with $n=9$ and $n=18$), and
$\mu$ (with $n=12$).

In order to prove/disprove relative density, we use Propositions~\ref{p:rep-rd} and~\ref{p:rd-rep}, and the results of Herreros~\cite{herreros}. In his thesis, he considers
representations of complex numbers in a real base $\gamma$ with polygonal alphabet. In particular, he proves the following statement, which we rephrase with our notation.

\begin{table}[ht]
\renewcommand{\arraystretch}{1.3}
\begin{center}
\begin{tabular}{cccccc}
\hline
order $n$  & base   & approx.\ value & $1+2\cos\frac{2\pi}{n}$ & $1+\cos\frac{\pi}{n}+(\cos\frac{\pi}{n})^2$ &relative density  \\
\hline
5       & $\tau$   & 1.618033989    & 1.618033989  && YES       \\
10      & $\tau$   & 1.618033989    & 2.618033989  && YES       \\
        & $\tau^2$ & 2.618033989    & 2.618033989  && YES       \\
7       & $\lambda$& 2.246979604    & 2.246979604  && YES       \\
14      & $\lambda$& 2.246979604    & 2.801937736   && YES       \\
8       & $\delta$ & 2.414213562    & 2.414213562   && YES       \\
9       & $\kappa$ & 2.879385242    &               & 2.822714843 & NO   \\
18      & $\kappa$ & 2.879385242    & 2.879385242   && YES       \\
12      & $\mu$    & 2.732050808    & 2.732050808   && YES       \\
\hline\\
\end{tabular}
\caption{Pisot cyclotomic numbers \& relative density}
\label{tab:rd}
\end{center}
\end{table}

\begin{prop}\label{p:herreros}
Let $\gamma>1$ and let $\A:=\Delta_n\cup\{0\}$. If $\gamma\leq 1+2\cos\frac{2\pi}{n}$, then $\C$ is $(\gamma,\A)$-representable.
On the other hand, if $\gamma>1+\cos\frac{\pi}{n}+(\cos\frac{\pi}{n})^2$ for $n$ odd, or $\gamma>2+\cos\frac{2\pi}{n}$ for $n$ even, then
$\C$ is not $(\gamma,\A)$-representable.
\end{prop}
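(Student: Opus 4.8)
The plan is to handle the two implications by different geometric devices, both organised around the compact self-similar set \(K=K(1/\gamma,\A)\) with \(K=\bigcup_{a\in\A}(\tfrac1\gamma K+a)\). For the positive claim (that \(\gamma\le 1+2\cos\frac{2\pi}n\) forces representability) I would build representations by the subtraction dynamics \(T(x)=\gamma x-a\) and exhibit a forward-invariant disk. Set \(R=\frac{1}{2\cos(\pi/n)}\). The crux is the covering lemma
\[ B_{\gamma R}(0)\subseteq B_R(0)\cup\bigcup_{j}B_R(\omega^j). \]
Granting it, every \(x\in B_R(0)\) admits a digit \(a\in\A\) with \(\gamma x-a\in B_R(0)\) (apply the covering to \(\gamma x\in B_{\gamma R}(0)\)); iterating produces digits \(a_0,a_1,\dots\) with bounded remainder, so \(x=\sum_{i\ge1}a_{i-1}\gamma^{-i}\) is a \((\gamma,\A)\)-representation. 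For arbitrary \(z\in\C\) one has \(z/\gamma^N\to 0\in\mathrm{int}\,B_R(0)\), hence \(z/\gamma^N\in B_R(0)\) for large \(N\), and scaling its representation by \(\gamma^N\) represents \(z\); thus \(\C\) is \((\gamma,\A)\)-representable.

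The covering lemma I would reduce to a one–dimensional inequality. By the dihedral symmetry of \(\A=\Delta_n\cup\{0\}\) it suffices to cover the fundamental sector \(\{re^{i\theta}:0\le r\le\gamma R,\ 0\le\theta\le\pi/n\}\); as \(|re^{i\theta}-1|\) increases in \(\theta\) on \([0,\pi]\), the extremal case is the bisector ray \(\theta=\pi/n\), where a point is covered iff \(r\le R\) or \(r\in[\rho_-,\rho_+]\) with \(\rho_\pm=\cos\frac\pi n\pm\sqrt{R^2-\sin^2\frac\pi n}\). The choice \(R=\frac1{2\cos(\pi/n)}\) gives exactly \(\rho_-=R\) (no inner gap) and \(\rho_+=\frac{1+2\cos(2\pi/n)}{2\cos(\pi/n)}\), so that \([0,\gamma R]\subseteq[0,R]\cup[\rho_-,\rho_+]\) holds precisely when \(\gamma R\le\rho_+\), i.e. \(\gamma\le 1+2\cos\frac{2\pi}n\). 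This is the entire content of the first direction, and I am confident it goes through as stated.

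For the negative claim I would show that a persistent radial gap in \(K\) forbids representation. Writing \(\mathcal R_\theta=\{\varrho\ge0:\varrho e^{i\theta}\in K\}\), a point \(re^{i\theta}\) is representable iff some \(r\gamma^{-l}\) with \(l\ge0\) lies in \(\mathcal R_\theta\); so it suffices to find a direction \(\theta_0\) and a radius \(r^\ast\) whose whole geometric orbit \(\{r^\ast\gamma^{-l}\}_{l\ge0}\) misses \(\mathcal R_{\theta_0}\). For \(\gamma>2\) the pieces \(\tfrac1\gamma K+\omega^j\) do not reach the origin, so near \(0\) only \(P_0=\tfrac1\gamma K\) feeds \(K\); hence \(\mathcal R_{\theta_0}\) is self-similar with ratio \(\gamma\) in a neighbourhood of \(0\), and a single gap band \((\alpha,\beta)\) on the ray — with \(\alpha\) the outer reach of \(P_0\) and \(\beta\) the inner reach of the adjacent vertex pieces toward the origin — reproduces itself at every smaller scale. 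Taking \(\theta_0\) to be the bisector direction and any \(r^\ast\in(\alpha,\beta)\) then yields a non-representable point, and the threshold on \(\gamma\) is exactly the value at which this band first opens.

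The main obstacle is the \emph{exact} inner reach \(\beta\). Replacing \(K\) by the bounding balls \(\overline B_{1/(\gamma-1)}(\omega^j)\) or by \(\mathrm{conv}(K)\) gives valid but weaker gaps (larger thresholds); the stated constants \(2+\cos\frac{2\pi}n\) for \(n\) even and \(1+\cos\frac\pi n+\cos^2\frac\pi n\) for \(n\) odd require the true, fractal, origin-facing boundary of \(K\). The parity split is forced here, since the inner reach of \(\tfrac1\gamma K+\omega^j\) is governed by the support of \(K\) in the antipodal direction \(-\omega^j\), which is again a vertex direction when \(n\) is even but a between-vertices direction when \(n\) is odd, changing the reach from \(\tfrac1{\gamma-1}\) to \(\tfrac{\cos(\pi/n)}{\gamma-1}\). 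I expect the extremal ray-point of \(K\) to be realised by a periodic digit string, giving an algebraic value for \(\beta\) matching the claimed bound; the remaining bookkeeping is to verify, using the near-origin self-similarity, that the band \((\alpha,\beta)\) is not refilled by deeper-level pieces.
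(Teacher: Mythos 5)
A preliminary remark: the paper offers no proof of this proposition --- it is quoted from Herreros's thesis \cite{herreros} as an external result --- so your attempt cannot be compared to an in-paper argument and must stand on its own. The two halves fare very differently.

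Your positive direction is a complete and correct proof. The radius $R=\frac{1}{2\cos(\pi/n)}$ is precisely the circumradius of the triangle $\{0,1,\omega\}$; your computation gives $\rho_-=R$ and $\rho_+=(1+2\cos\frac{2\pi}{n})R$ on the bisector ray, the reduction of the covering lemma to that ray is justified by the monotonicity of $\rho_\pm(\theta)=\cos\theta\pm\sqrt{R^2-\sin^2\theta}$, and the passage from the forward-invariant disk (iterating $x\mapsto\gamma x-a$) to a $(\gamma,\A)$-representation of an arbitrary $z$ via $z/\gamma^N\in B_R(0)$ is standard and complete. Nothing is missing here.

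The negative direction, however, contains a genuine gap, and it is not the ``remaining bookkeeping'' you describe. Your reduction is sound as far as it goes: for $\gamma>2$ the pieces $\frac1\gamma K+\omega^j$ stay at distance $d_0:=\mathrm{dist}\bigl(0,\bigcup_j(\tfrac1\gamma K+\omega^j)\bigr)>0$ from the origin, inside $B_{d_0}(0)$ one has $x\in K\Leftrightarrow\gamma x\in K$, and therefore any single point $w\notin K$ with $|w|<d_0$ is non-representable. Everything thus hinges on exhibiting such a $w$ exactly when $\gamma$ exceeds the stated threshold, and this is the step you do not carry out. Concretely: (i) for $n$ even one has $d_0=\frac{\gamma-2}{\gamma-1}$, while the only bound available on the outer reach $\alpha$ of $\frac1\gamma K$ along the bisector is the projection bound $\alpha\le\frac{\cos(\pi/n)}{\gamma-1}$, which exceeds $d_0$ for all $\gamma\le 2+\cos\frac{\pi}{n}$ --- a range containing the claimed threshold $2+\cos\frac{2\pi}{n}$. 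So with the estimates actually in hand, your band $(\alpha,\beta)$ is not even known to be nonempty or to lie inside the self-similarity zone. (ii) The quantity you call the inner reach of the adjacent vertex pieces ``toward the origin'' is their distance to $0$, which is not the radius at which they first meet the chosen ray; conflating the two changes the constant. Pinning down the true reaches requires identifying extremal boundary points of the fractal $K$ by explicit eventually periodic digit strings and proving their extremality, which is the actual substance of Herreros's negative result and which you explicitly defer. As written, your argument establishes non-representability only under the much cruder hypothesis $\gamma>3$, where $\frac1\gamma K\subseteq\overline{B_{1/(\gamma-1)}(0)}$ already leaves an uncovered annulus inside $B_{d_0}(0)$, not under the hypotheses of the proposition.
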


It is important to notice that for our quadratic and cubic Pisot-cyclotomic numbers one has
$$
\begin{aligned}
\tau &=1+2\cos\frac{2\pi}{5},\\
\tau^2&=1+2\cos\frac{2\pi}{10},
\end{aligned}
\qquad
\begin{aligned}
\delta&=1+2\cos\frac{2\pi}{8},\\
\mu&=1+2\cos\frac{2\pi}{12},
\end{aligned}
\qquad
\begin{aligned}
\lambda&=1+2\cos\frac{2\pi}{7},\\
\kappa&=1+2\cos\frac{2\pi}{18}.
\end{aligned}
$$
This allows one to combine easily Proposition~\ref{p:rep-rd} with the statement of Herreros to obtain that
all cases of the spectra are relatively dense, except the case that $\beta=\kappa$ and $\A=\Delta_9\cup\{0\}$. Here, one must check
that
$$
\kappa>1+\cos\frac{\pi}{9}+(\cos\frac{\pi}{9})^2\,,
$$
and thus Proposition~\ref{p:rd-rep} gives that the spectrum is not relatively dense.
The results are summarized in Table~\ref{tab:rd}.

In order to put these results in context of Proposition \ref{prop:rd-K}, see Figure~\ref{fig:rd} where the set $K(\beta,\A)$ is drawn for the considered cases.

\begin{figure}[ht]
\subfigure[$\tau$ with $n=5$]{\includegraphics[scale=0.17,angle=270]{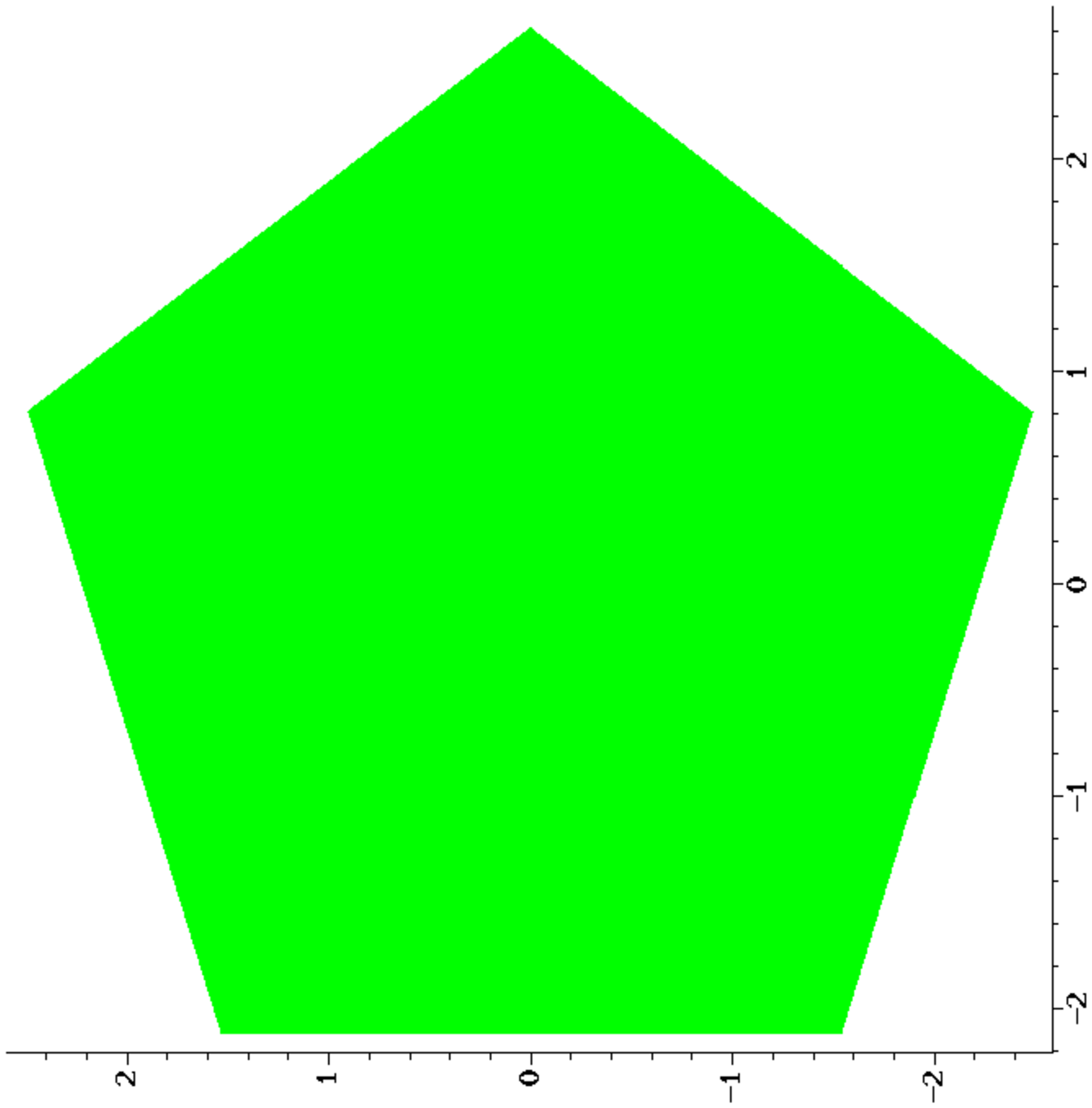}}
\subfigure[$\tau$ with $n=10$]{\includegraphics[scale=0.17,angle=270]{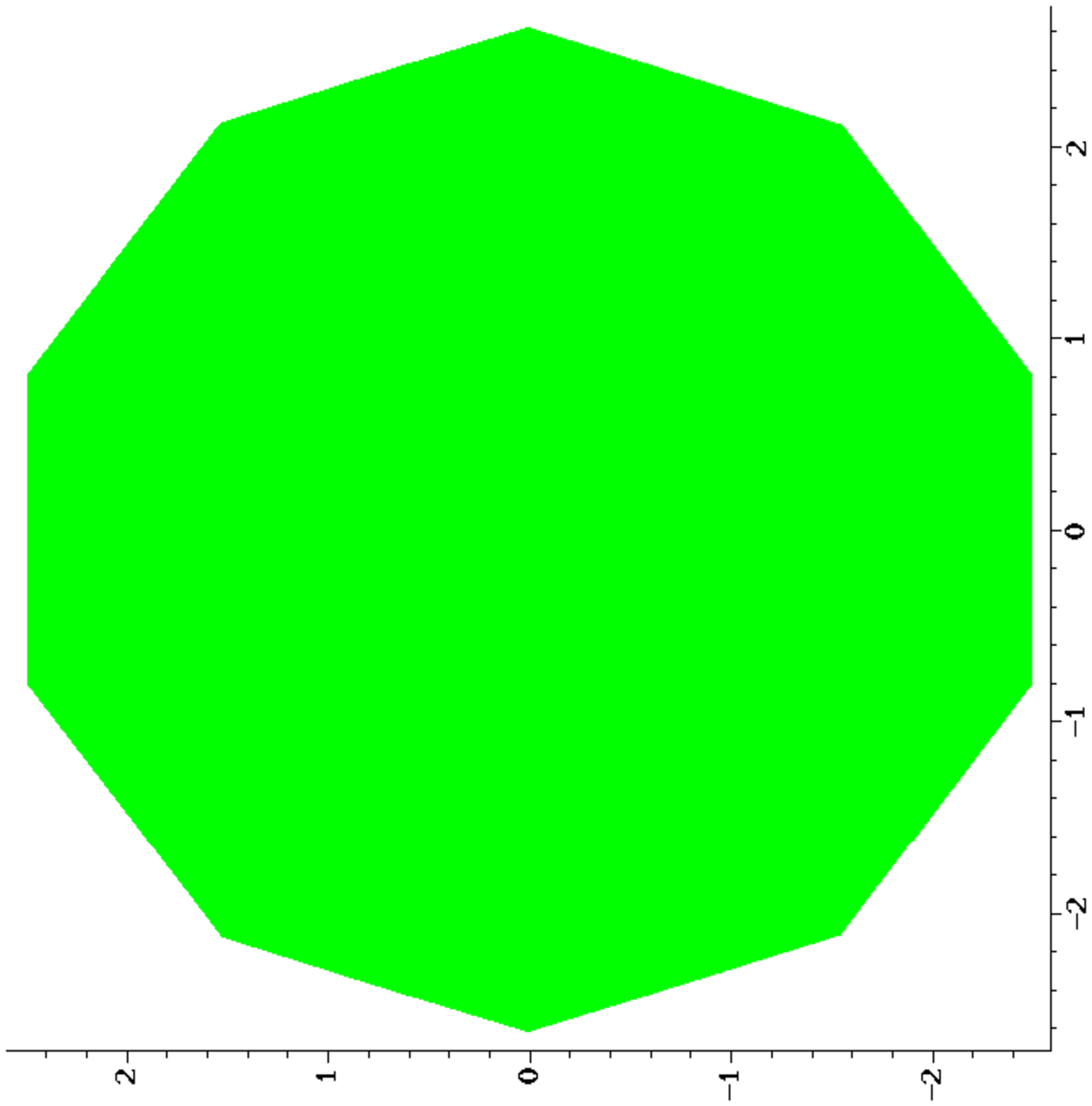}}
\subfigure[$\tau^2$ with $n=10$]{\includegraphics[scale=0.17,angle=270]{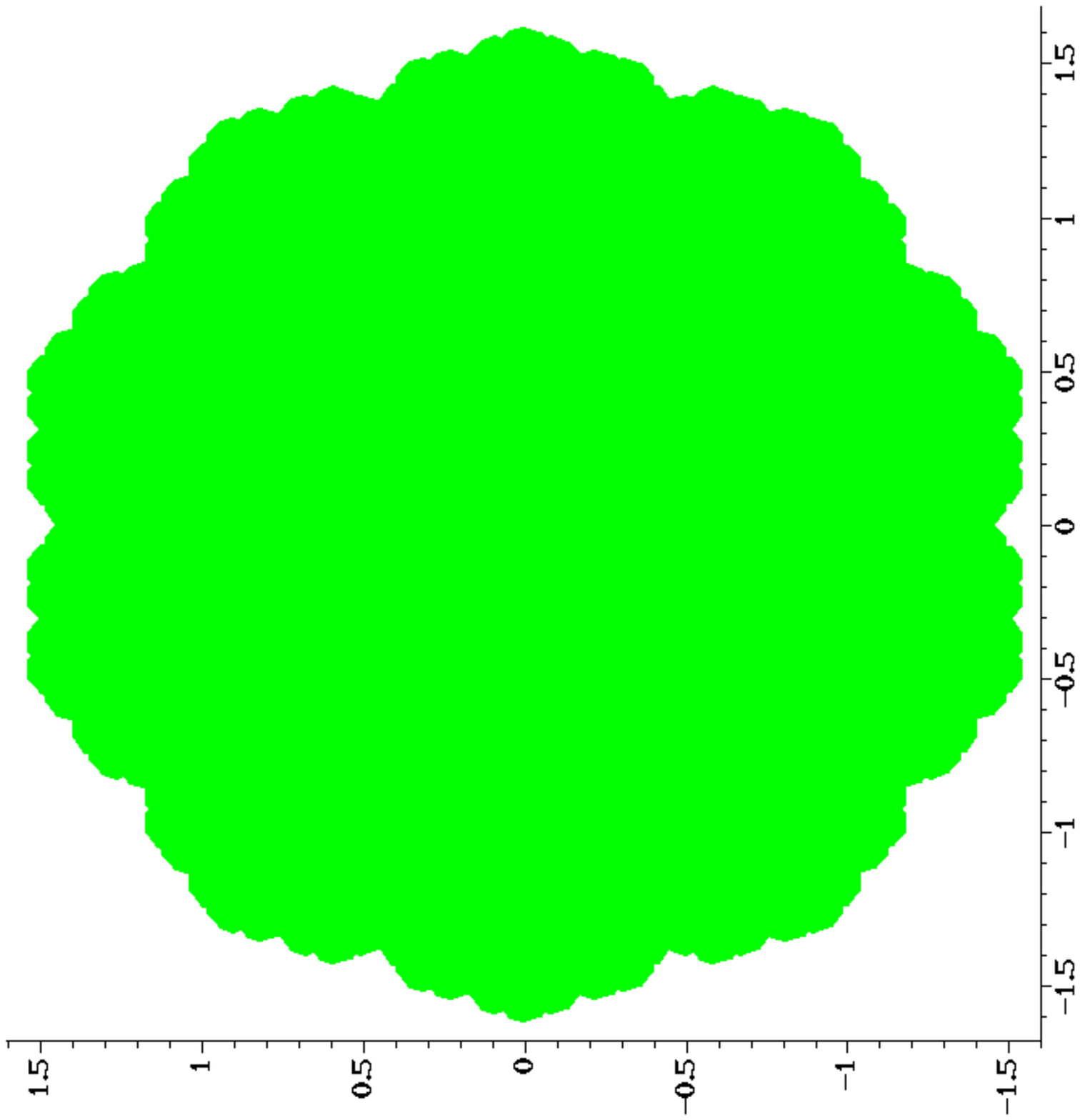}}
\\
\subfigure[$\lambda$ with $n = 7$]{\includegraphics[scale=0.17,angle=270]{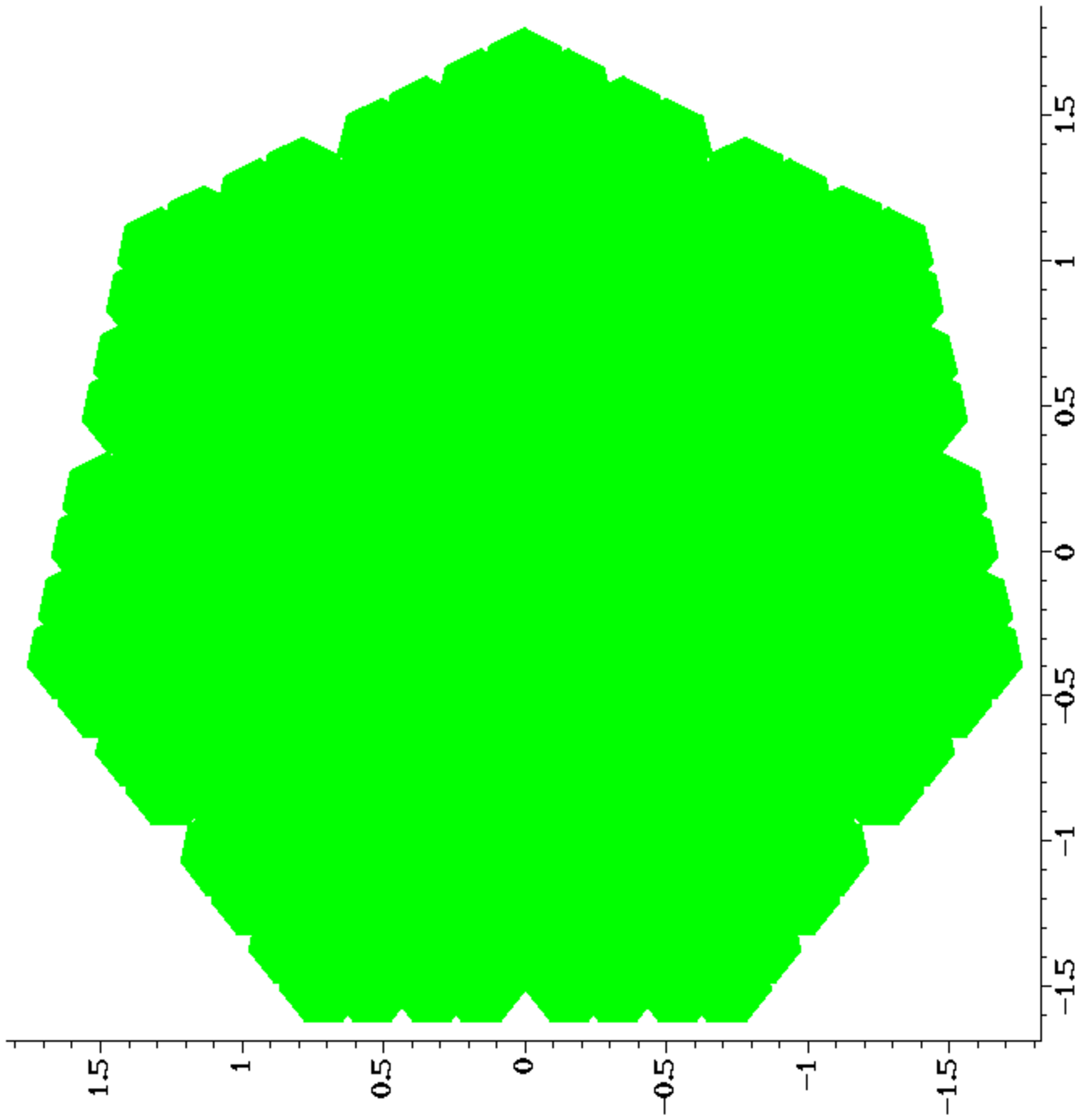}}
\subfigure[$\lambda$ with $n = 14$]{\includegraphics[scale=0.17,angle=270]{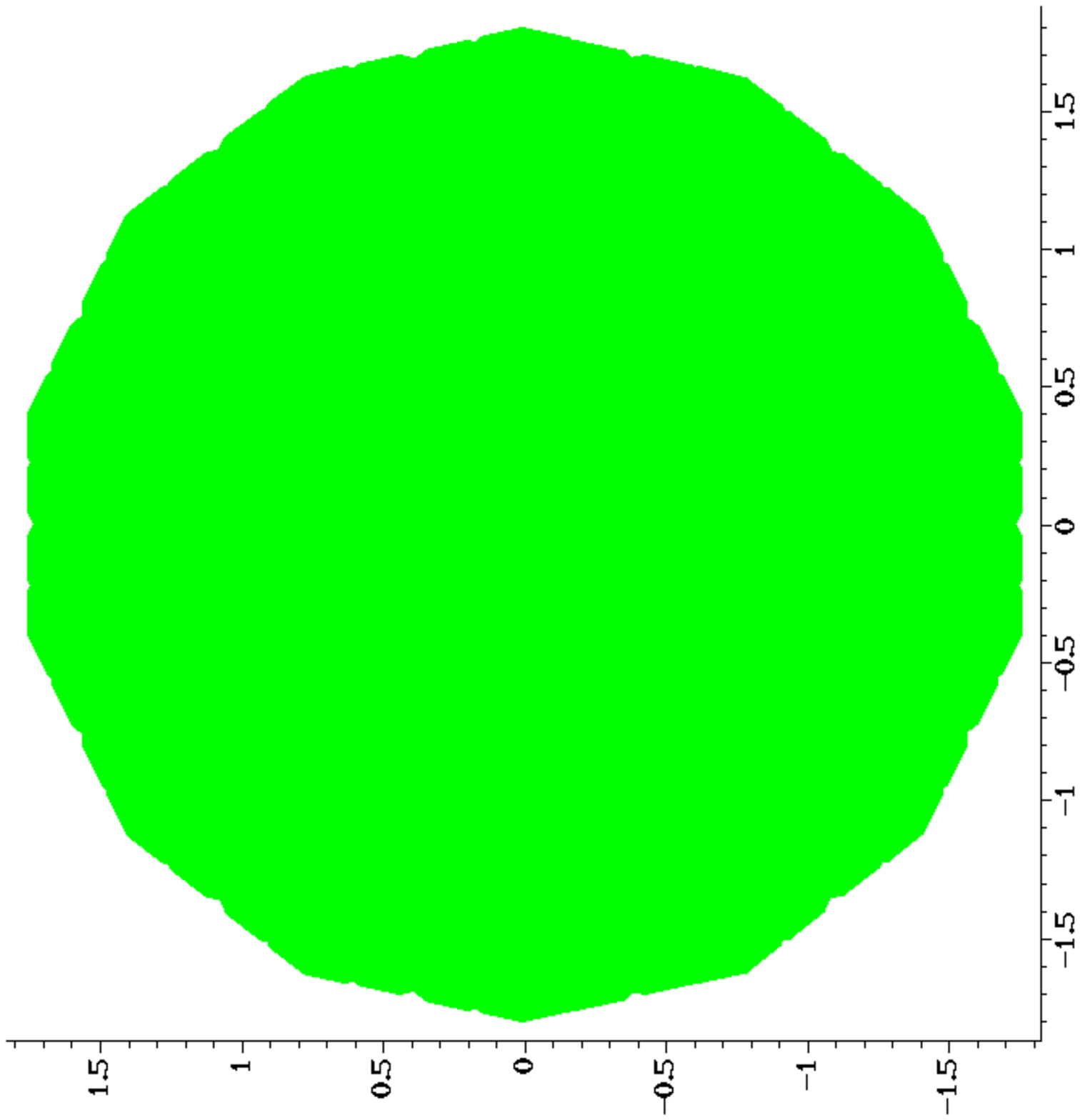}}
\subfigure[$\delta$ with $n = 8$]{\includegraphics[scale=0.17,angle=270]{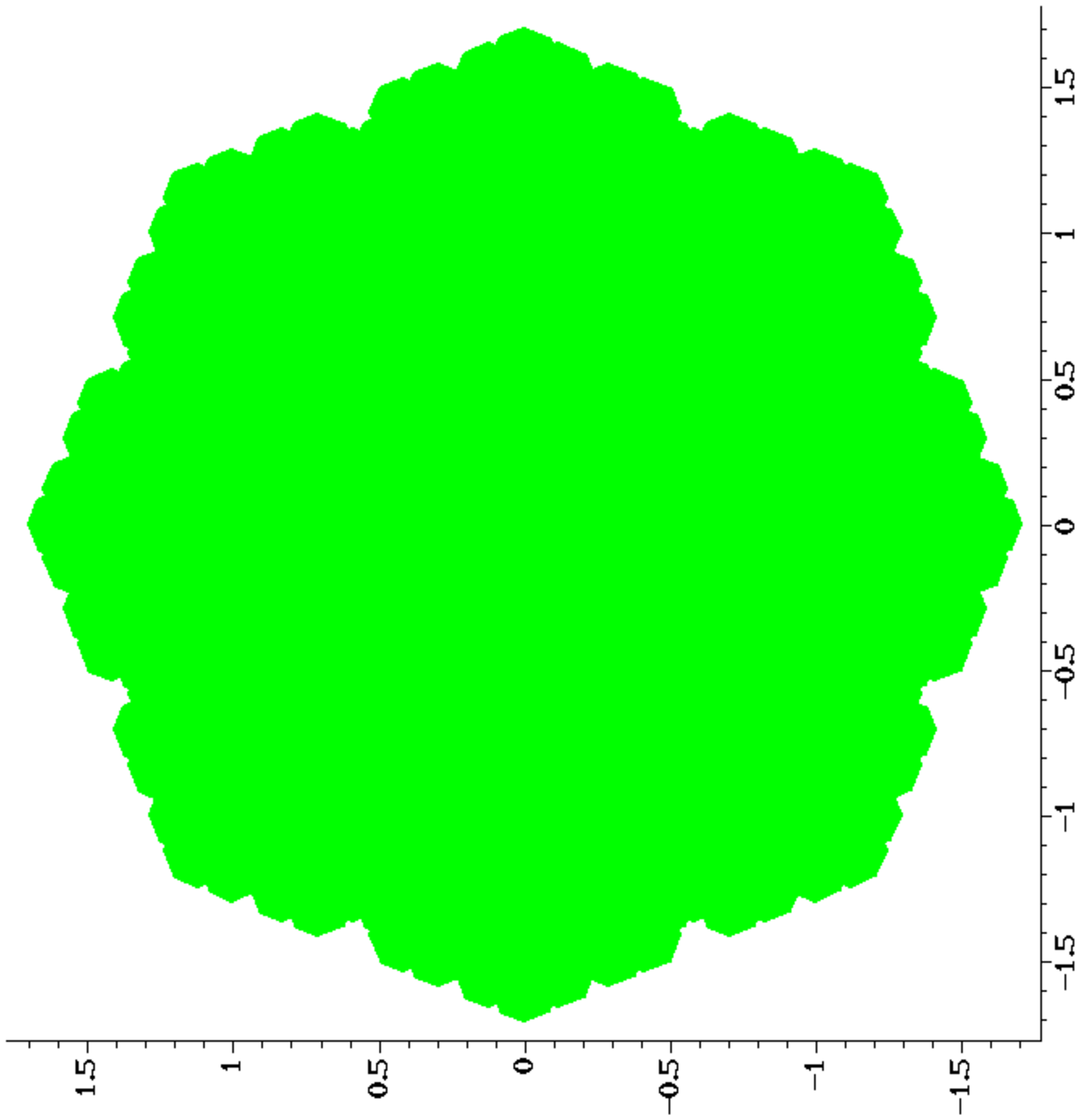}}
\\
\subfigure[$\kappa$ with $n = 9$]{\includegraphics[scale=0.17,angle=270]{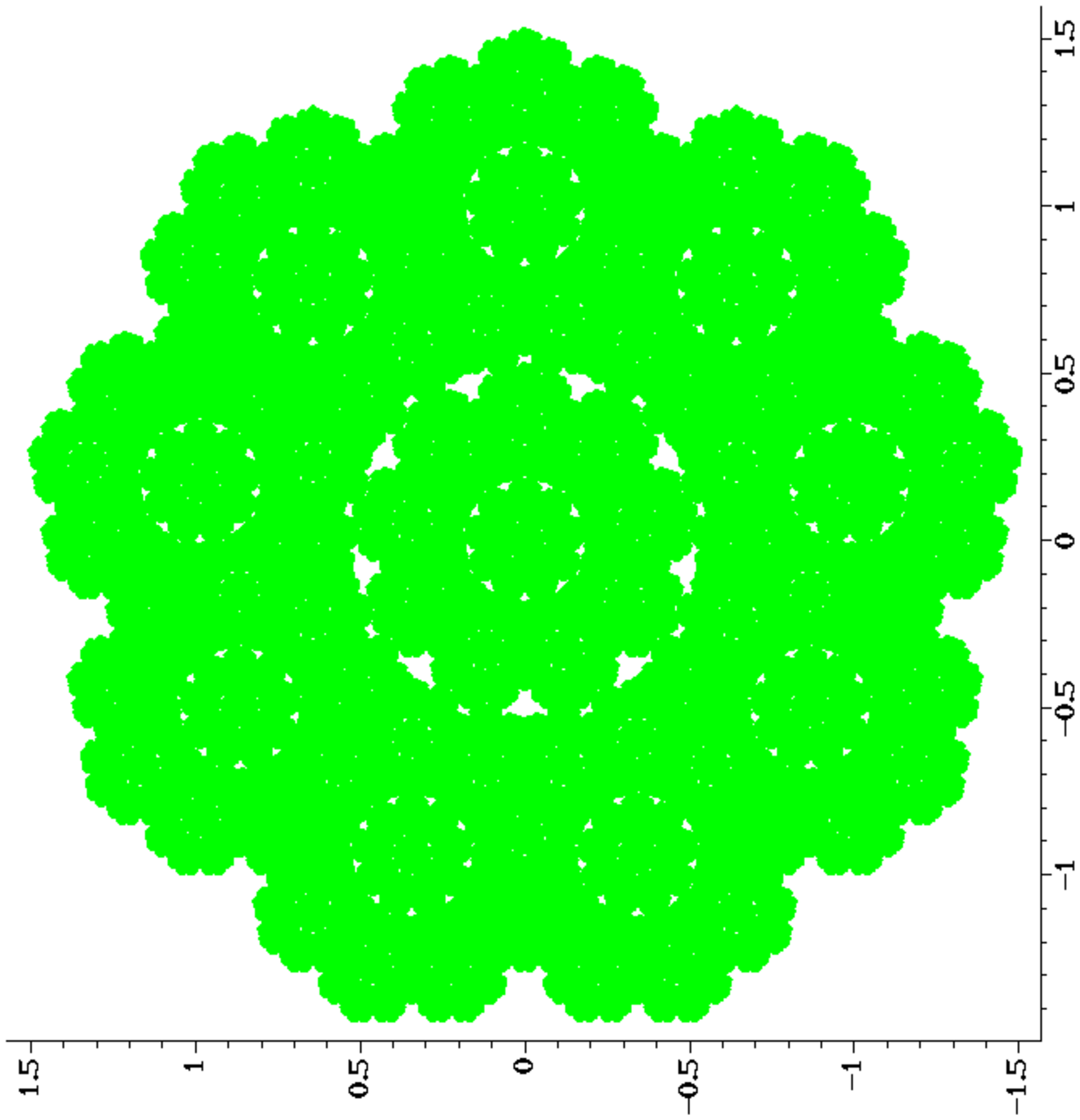}}
\subfigure[$\kappa$ with $n = 18$]{\includegraphics[scale=0.17,angle=270]{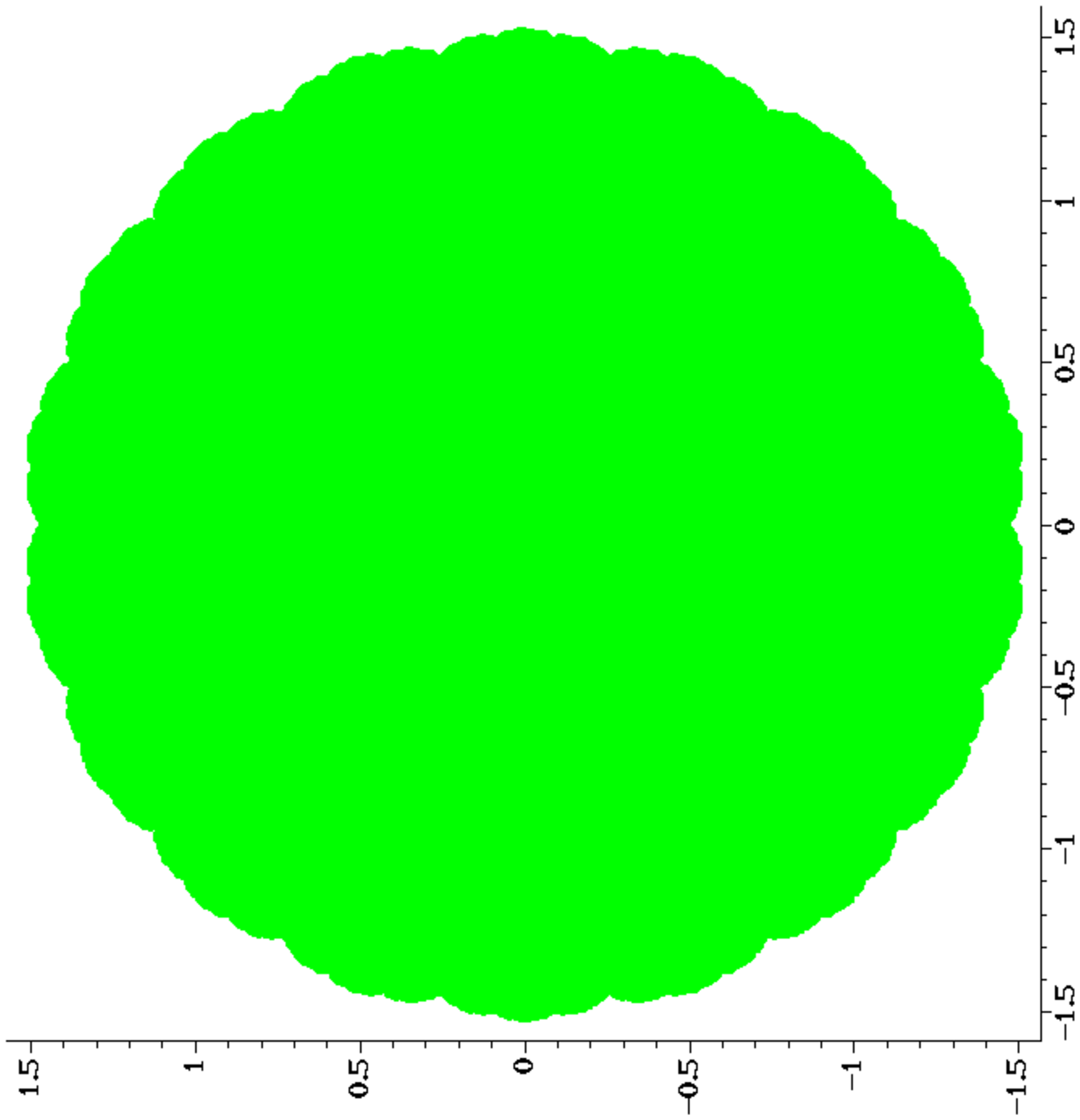}}
\subfigure[$\mu$ with $n = 12$]{\includegraphics[scale=0.17,angle=270]{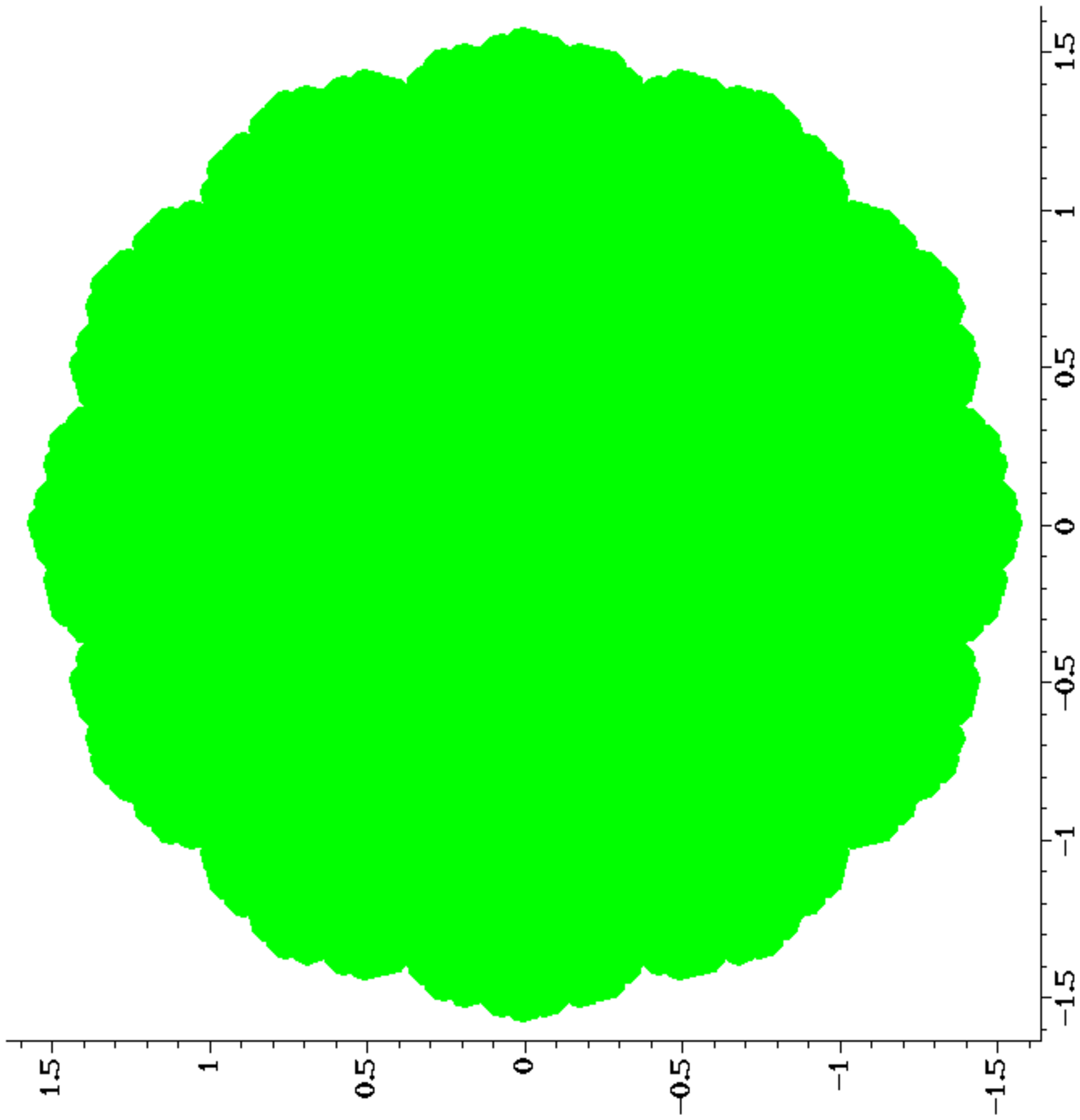}}
\caption{The attractor of the IFS with contraction $\beta^{-1}$ and offsets $\Delta_n\cup\{0\}$ for the cases listed in Table~\ref{tab:rd}.}
\label{fig:rd}
\end{figure}

\section{Spectra and cut-and-project sets}\label{sec:cap}

Before starting our discussion on the connection of the spectra to cut-and-project sets,
let us first review some of the basic details concerning the cut-and-project method.
Usually, it is defined by projection of a section of a higher dimensional lattice
to a suitably chosen subspace, see~\cite{Moody}. The cut-and-project scheme is given
by two subspaces $V_1$, $V_2$ of a $(d+e)$-dimensional euclidean space $V$, such that $\dim V_1=d$, $\dim V_2=e$,
$V_1\oplus V_2=V$. Denote by $\pi_1$ (resp. $\pi_2$) the projection of $V$
to $V_1$ along $V_{2}$ (resp. $V_2$ along $V_1$). Given a lattice $L\subset V$, we require that
$\pi_1$ restricted to $L$ is injective, and that $\pi_2(L)$ is dense in $V_2$.
Then, one has a mapping $\star : M:=\pi_1(L) \to M^\star:=\pi_2(L)$.
If $\Omega$ is a bounded subset of $V_2$ with non-empty interior, then we define
$\Sigma(\Omega)=\{x\in M : x^\star \in \Omega\}$.

In special cases leading to non-crystallographic symmetries,
this definition can be recast in an algebraic way, in particular, with the use
of Pisot-cyclotomic numbers~\cite{Icosians}. We present here the definition of a cut-and-project set in this
restricted frame, suitable for our purposes.

Let $\beta$ be a Pisot-cyclotomic number of order $n$, and let $\omega$ be an $n$-th root of unity.
If $\beta$ is quadratic, there exists an automorphism $\sigma$ of
    $\Z[\omega]$ that takes $\beta$ to its conjugate $\beta'$.
For $\beta$ cubic, there exist two different automorphisms $\sigma_1$
    and $\sigma_2$ of $\Z[\omega]$ that take $\beta$ to its
    two conjugates $\beta_1$ and $\beta_2$, respectively.

It is worth observing that as $\Z[\beta]=\Z[2\cos\frac{2\pi}{n}]$ and $\cos \frac{2\pi}{n}$ is totally real,
then so is $\beta$. It can be easily shown that
$\Z[\beta]+\Z[\beta]\omega = \Z[\omega]$. In particular, $\Z[\omega]$ is equal to the ring of integers in the cyclotomic field $\Q(\omega)$.
Set $M = \Z[\omega]$.
 It holds that $\omega M= M$ and that $M$ is closed under Galois conjugation.
 Given a bounded set $\Omega$ ($\Omega\subset\C$ for quadratic $\beta$, or  $\Omega \subset \C \times \C$ for cubic $\beta$)
 with non-empty interior, we define the cut-and-project set $\Sigma(\Omega)$ by
$$
\Sigma(\Omega) = \{z\in M : \sigma(z) \in\Omega\}\subset\C\,.
$$
for quadratic $\beta$ and
$$
\Sigma(\Omega) = \{z\in M : (\sigma_1(z), \sigma_2(z))  \in\Omega\}\subset\C\,.
$$
for cubic $\beta$. The set $\Omega$ is called the acceptance window.
As the sets $\Sigma(\Omega)$ fall within the frame of model sets studied in~\cite{Moody}, one can derive that such
cut-and-project are uniformly discrete relatively dense sets of finite local complexity.


Sometimes, one considers cut-and-project sets where the acceptance window $\Omega$ satisfies some additional properties.
In our frame, taking $\Omega$ symmetric under rotation of order $n$, i.e.
$\omega^k\Omega=\Omega$, we also have $\omega^k\Sigma(\Omega)=\Sigma(\Omega)$, which means that $\Sigma(\Omega)$ has $n$-fold symmetry.

It is useful to consider also restriction of the above defined cut-and-project sets to one dimension,
in this case presented as the cut-and-project sequence
$$
\Sigma(I)=\{x\in \Z[\beta] : \sigma(x)\in I\}\subset\R\,,
$$
$\sigma(x)$ amounts to the Galois conjugation in the field $\Q(\beta)$. The set $I$ is usually taken to be a bounded non-degenerated interval.
Such sequences have been well described, in particular, they are easy to generate, and allow to generate more-dimensional cases,
as we shall use further in this section.

Let us study whether the spectrum $X^{\A}(\beta)$ can be related to a cut-and-project set in the above setting. Obviously, $X^{\A}(\beta)$ a subset
of the set of cyclotomic integers $M$. It is natural to ask whether there exists a reasonable $\Omega$ such that $X^{\A}(\beta)= \Sigma(\Omega)$.
We consider the Galois conjugation of elements of $X^\A(\beta)$.

\begin{thm}
\label{thm:conj=ifs}
Let $\beta$ be a Pisot cyclotomic number of order $n$ and let $\A=\Delta_n\cup\{0\}$. Suppose that $\sigma$ is a
    Galois conjugation on $\Q(\omega)$ such that $|\sigma(\beta)| < 1$.
Then
    \[K(\sigma(\beta), \A) = \mathrm{cl}(\sigma(X^{\A}(\beta))).\]
Moreover, if $K(\sigma(\beta),\A)$ contains $0$ in its interior, then
\[\sigma(X^{\A}(\beta)) \subseteq {\rm int }\big(K(\sigma(\beta), \A)\big) .\]
\end{thm}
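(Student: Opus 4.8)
The plan is to reduce both assertions to statements about the spectrum of the single contracted base $\gamma := \sigma(\beta)$, which satisfies $|\gamma|<1$ by hypothesis. The crucial preliminary observation is that $\sigma$ permutes the alphabet $\A$. Indeed, from $\omega^n=1$ we get $\sigma(\omega)^n=\sigma(\omega^n)=1$, so $\sigma(\omega)\in\Delta_n$; since $\sigma$ is injective and $\Delta_n$ is finite, $\sigma$ restricts to a bijection of $\Delta_n$, and of course $\sigma(0)=0$, hence $\sigma(\A)=\A$. Applying $\sigma$ term by term to a finite sum then gives, for any $a_j\in\A$,
\[ \sigma\Big(\sum_{j=0}^m a_j\beta^j\Big)=\sum_{j=0}^m \sigma(a_j)\gamma^j, \]
and since $a_j\mapsto\sigma(a_j)$ runs over all of $\A$, I obtain $\sigma(X^{\A}(\beta))=X^{\A}(\gamma)$. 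Thus it suffices to prove $\mathrm{cl}(X^{\A}(\gamma))=K(\gamma,\A)$ and, under the interior hypothesis, $X^{\A}(\gamma)\subseteq\mathrm{int}(K(\gamma,\A))$.

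For the first equality I would argue by two inclusions. Because $0\in\A$, every finite sum $\sum_{j=0}^m a_j\gamma^j$ is also an infinite sum with trailing zeros, so $X^{\A}(\gamma)\subseteq K(\gamma,\A)$; as $K$ is compact (hence closed) by the Remark preceding the theorem, this yields $\mathrm{cl}(X^{\A}(\gamma))\subseteq K(\gamma,\A)$. Conversely, any $\sum_{i=0}^{\infty}a_i\gamma^i\in K$ is, since $|\gamma|<1$, the limit of its partial sums $\sum_{i=0}^m a_i\gamma^i\in X^{\A}(\gamma)$, so $K(\gamma,\A)\subseteq\mathrm{cl}(X^{\A}(\gamma))$. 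Combining the two gives the claimed equality.

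For the second assertion I would exploit the self-similarity $K=\bigcup_{a\in\A}(\gamma K+a)$, which says each contraction $f_a(y):=\gamma y+a$ satisfies $f_a(K)\subseteq K$, and hence so does any composition of such maps. Given $x=\sum_{j=0}^m a_j\gamma^j$, a direct computation shows that the affine map $g:=f_{a_0}\circ\cdots\circ f_{a_m}$ has the form $g(y)=\gamma^{m+1}y+x$, so in particular $g(0)=x$ and $g$ maps $K$ into $K$. Assuming $0\in\mathrm{int}(K)$, fix $\epsilon>0$ with $B_\epsilon(0)\subseteq K$. Then
\[ g\big(B_\epsilon(0)\big)=\gamma^{m+1}B_\epsilon(0)+x=B_{|\gamma|^{m+1}\epsilon}(x)\subseteq g(K)\subseteq K, \]
so $x$ has an open neighbourhood inside $K$, i.e.\ $x\in\mathrm{int}(K)$. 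Since $x$ was an arbitrary element of $X^{\A}(\gamma)=\sigma(X^{\A}(\beta))$, this proves the inclusion.

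The only genuinely substantive point is the opening observation that Galois conjugation preserves the polygonal alphabet $\A$; once $\sigma(\A)=\A$ is in hand, everything reduces to standard facts about the IFS attractor $K(\gamma,\A)$ and both parts follow by elementary topology. I expect no real obstacle beyond being careful that $\Delta_n$ comprises \emph{all} $n$-th roots of unity, not merely the primitive ones, which is exactly what makes $\sigma(\Delta_n)=\Delta_n$ automatic.
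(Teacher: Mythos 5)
Your proof is correct and follows essentially the same route as the paper: both inclusions of the closure identity via padding with zeros and taking partial sums, and the interior statement via the self-similarity $x+\gamma^{m+1}K\subseteq K$, which is exactly the argument the paper invokes by citing item (4) of Theorem~\ref{thm:ekvivalence}. The only difference is that you spell out $\sigma(\A)=\A$ (which the paper merely asserts) and inline the contraction argument rather than citing it.
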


\begin{proof}
Since $\sigma(\A)=\A$, elements of $\sigma(X^{\A}(\beta))$ are of the form $\sum_{i=0}^{n}a_i\big(\sigma(\beta)\big)^i$, for $n\in\N$, $a_i\in\A$.
Setting $0=a_{n+1}=a_{n+2}=\cdots$, we have $\sum_{i=0}^{+\infty}a_i\big(\sigma(\beta)\big)^i\in K\big(\sigma(\beta),\A\big)$.
This implies
$K(\sigma(\beta), \A) \supset \mathrm{cl}(\sigma(X^{\A}(\beta)))$.
For the opposite inclusion, take $z\in K(\sigma(\beta), \A)$, i.e. $z=\sum_{i=0}^{+\infty}b_i\big(\sigma(\beta)\big)^i$. Obviously $z$ is the limit of
$w_n=\sum_{i=0}^{n}b_i\big(\sigma(\beta)\big)^i$.

For the latter inclusion, it suffices to see the equivalence in Theorem~\ref{thm:ekvivalence}.
\end{proof}

\begin{coro}
Since $K(\sigma(\beta), \A)$ is the attractor of the corresponding IFS, one can derive some information about the shape  $\sigma(X^{\A}(\beta))$.
In particular
\begin{itemize}
\item If $1/2 < \sigma(\beta) < 1$ then
    $\mathrm{cl}(\sigma(X^{\A}(\beta)))$ is a polygon.
\item If $-1 < \sigma(\beta) < -1/2$ and $n$ is even then
    $\mathrm{cl}(\sigma(X^{\A}(\beta)))$ is a polygon.
\item If $-1 < \sigma(\beta) < -1/\sqrt{2}$ and $n$ is odd then
    $\mathrm{cl}(\sigma(X^{\A}(\beta)))$ is a polygon.
\end{itemize}
\end{coro}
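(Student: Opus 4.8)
The plan is to reduce the statement to pure plane geometry via Theorem~\ref{thm:conj=ifs}, which already identifies $\mathrm{cl}(\sigma(X^\A(\beta)))$ with the IFS attractor $K:=K(\gamma,\A)$, where $\gamma:=\sigma(\beta)$ is a real conjugate with $|\gamma|<1$ (recall $\beta$ is totally real, so all of its conjugates are real). It thus suffices to prove that, under each of the three hypotheses on $\gamma$, the attractor of the maps $z\mapsto\gamma z+a$, $a\in\A$, is a convex polygon. The whole argument rests on uniqueness of the attractor: writing $F(S):=\bigcup_{a\in\A}(\gamma S+a)$ for the Hutchinson operator and setting $P:=\mathrm{conv}(K)$, I would show that $P$ is a polygon and that $F(P)=P$; since $K$ is the unique nonempty compact set fixed by $F$, this forces $K=P$, which is exactly the claim.

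First I would determine $P=\mathrm{conv}(K)$ through its support function, i.e.\ by finding the extreme point of $K$ in each unit direction $u$. Writing $\langle\cdot,u\rangle$ for the projection onto $u$ of a point of $\C\cong\R^2$, and using that the digits of a sum $\sum_{k\ge0}a_k\gamma^k$ may be chosen independently, the extreme point in direction $u$ is obtained (it is attained since $K$ is compact) by picking, for each $k$, the digit $a_k\in\A$ maximizing $\gamma^k\langle a_k,u\rangle$. When $\gamma>0$ all weights are positive, so the optimal digit is always the root of unity nearest $u$; the extreme points are $\omega^j/(1-\gamma)$ and $P$ is the regular $n$-gon $\tfrac1{1-\gamma}\mathrm{conv}(\Delta_n)$. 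When $\gamma<0$ the weights $\gamma^k$ alternate in sign, so even positions pull toward $u$ and odd positions toward $-u$; for $n$ even the two extremal roots are antipodal and the extreme points collapse to $\omega^j/(1+\gamma)$, again a regular $n$-gon, whereas for $n$ odd the two extremal roots need not be antipodal and $P$ is a convex polygon with up to $2n$ sides. In every case $P$ is a polygon. The inclusion $F(P)\subseteq P$ is then automatic: from $\gamma K+a\subseteq K\subseteq P$ and convexity of $P$ one gets $\gamma P+a=\mathrm{conv}(\gamma K+a)\subseteq P$ for each $a$.

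The substantive content, and what I expect to be the main obstacle, is the reverse inclusion $P\subseteq F(P)$: one must verify that the $n+1$ contracted copies $\gamma P$ (the central one, from the digit $0$) and $\gamma P+\omega^j$ cover the whole polygon $P$ without gaps. Each outer copy $\gamma P+\omega^j$ shares with $P$ its vertex in the corresponding direction, the central copy fills the interior, and covering reduces to checking that consecutive copies overlap, or at least abut, along every edge and around every vertex of $P$. This is precisely where the bounds on $\gamma$ are used, and I expect the threshold values to mark the degenerate cases of the overlap: at $\gamma=\tfrac12$ (respectively $\gamma=-\tfrac12$ for $n$ even) neighbouring copies first become tangent, and for smaller $|\gamma|$ a gap opens so that $K$ ceases to be convex; the less favourable odd-$n$ negative geometry, where the missing antipodal root makes adjacent copies meet at a shallower angle, raises the critical ratio to $1/\sqrt2$. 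Carrying this out amounts to a finite but delicate trigonometric case analysis organized by the parity of $n$ and the sign of $\gamma$, comparing the inradius of a shrunken copy with the spacing of the vertices of $P$ that it is required to reach; this bookkeeping is the only real difficulty, the rest being formal consequences of the uniqueness of the attractor.
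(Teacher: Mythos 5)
The paper states this corollary without any proof at all -- it is offered as an observation following Theorem~\ref{thm:conj=ifs} -- so there is no argument of the authors to compare yours against; your write-up is already more detailed than theirs. Your overall strategy is certainly the intended one and is sound: identify $\mathrm{cl}(\sigma(X^{\A}(\beta)))$ with the attractor $K$ of the Hutchinson operator $F(S)=\bigcup_{a\in\A}(\gamma S+a)$, compute $P=\mathrm{conv}(K)$ via support functions, note $F(P)\subseteq P$ from convexity, and conclude $K=P$ from uniqueness of the attractor once $F(P)=P$ is known. Your support-function computation is correct (extreme points $\omega^j/(1-\gamma)$ for $\gamma>0$; $\omega^j/(1+\gamma)$ for $\gamma<0$ and $n$ even by the symmetry $-\Delta_n=\Delta_n$; a $2n$-gon with vertices of the form $(\omega^j-|\gamma|\omega^{j'})/(1-\gamma^2)$ for $n$ odd), and the inclusion $F(P)\subseteq P$ is indeed automatic.

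The problem is that the one step carrying all the content -- $P\subseteq F(P)$ -- is not actually proved. You correctly locate where the hypotheses enter (the edge of $P$ between consecutive vertices is covered by the two adjacent copies exactly when $\gamma\geq 1/2$ in the positive case), but two things remain unestablished. First, covering $\partial P$ plus having a central copy does not by itself imply covering of the two-dimensional region $P$; one must still exhibit an explicit decomposition of $P$ (e.g.\ into triangles or annular sectors each contained in a single copy $\gamma P+a$) or otherwise rule out gaps between the central copy $\gamma P$ and the ring of outer copies. Second, the odd-$n$ negative case, where $P$ is a $2n$-gon and the threshold jumps to $1/\sqrt{2}$, is deferred entirely to an ``expected'' trigonometric case analysis that is never carried out; since that computation is precisely what distinguishes $1/\sqrt2$ from $1/2$, asserting the threshold without it is circular. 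As a plan your proposal is correct and complete in structure; as a proof it is missing its decisive verification.
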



The shapes of $K(\sigma(\beta), \A)$ for all the quadratic resp.\ cubic cases are displayed in Figures~\ref{fig:missingquadr} and~\ref{fig:missingcub}.
 Note that if $\beta$ is a quadratic unit, we have $\sigma(\beta)=\pm\beta^{-1}$. Thus $K(\sigma(\beta),\A_n)=K(1/\beta,\A_n)$ when either $\sigma(\beta) = 1/\beta$ and/or $n$ is even (which means $\A_n=-\A_n$). In fact, the cases where this is not true are $n=5$, $\beta=\tau$ and $n=12$, $\beta=\mu$ (compare Figures~\ref{fig:rd} and~\ref{fig:missingquadr}).

From Theorem~\ref{thm:conj=ifs}, one can derive a suitable choice of the acceptance window for comparing the spectra with cut-and-project sets.

\begin{remark}\label{rem:cap}
Let $\beta$ be a quadratic or a cubic Pisot-cyclotomic number of order $n$ and let $\A=\Delta_n\cup\{0\}$.
Then  $X^{\A}(\beta) \subseteq \Sigma(\Omega)$, where
$$
\Omega =
\begin{cases}
K(\sigma(\beta), \A) & \text{for quadratic $\beta$,}\\
K(\sigma_1(\beta), \A) \times K(\sigma_2(\beta), \A) & \text{for cubic $\beta$}.
\end{cases}
$$
In the quadratic case, we denote $\sigma$ the Galois conjugation on $\Q(\omega)$ such that $|\sigma(\beta)| < 1$.
In the cubic case, we have similarly $|\sigma_1(\beta)| < 1$, $|\sigma_2(\beta)| < 1$.
\end{remark}


\begin{figure}
\subfigure[$\lambda$ with $n = 7$ for $\sigma_1$]{\includegraphics[scale=0.13]{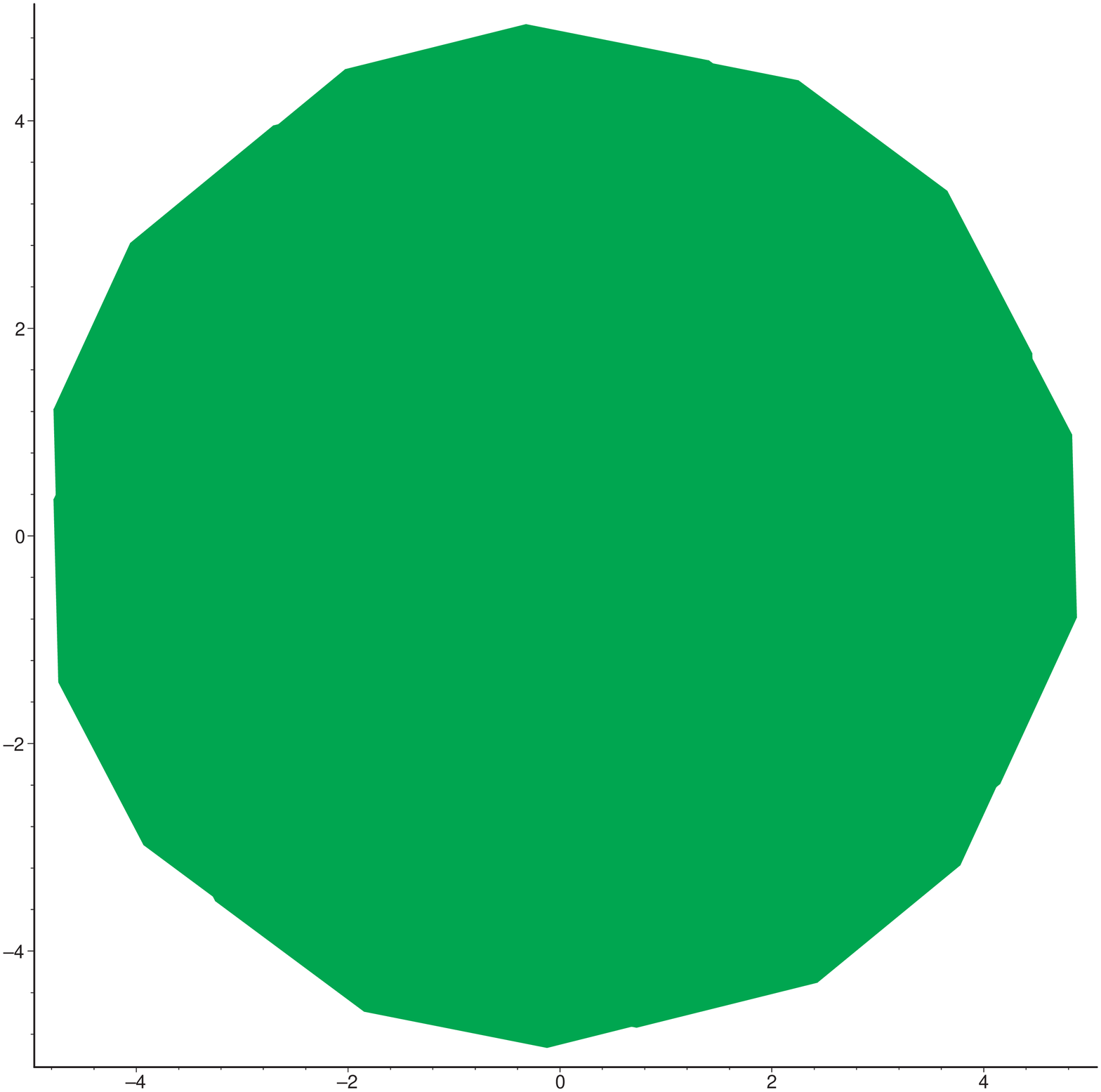}}\qquad\qquad
\subfigure[$\lambda$ with $n = 14$ for $\sigma_1$]{\includegraphics[scale=0.13]{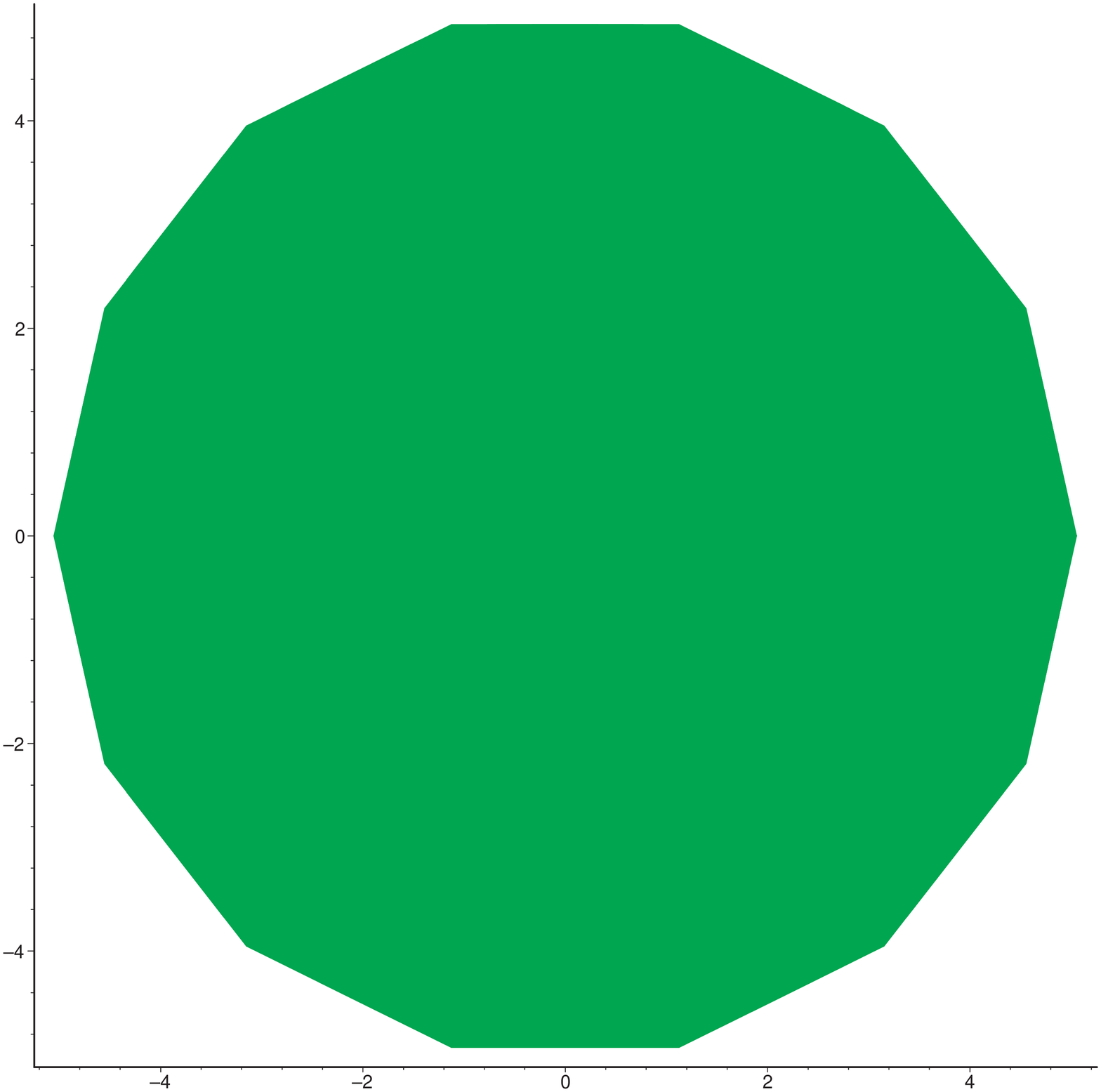}}\qquad\qquad
\subfigure[$\kappa$ with $n = 18$ for $\sigma_1$]{\includegraphics[scale=0.13]{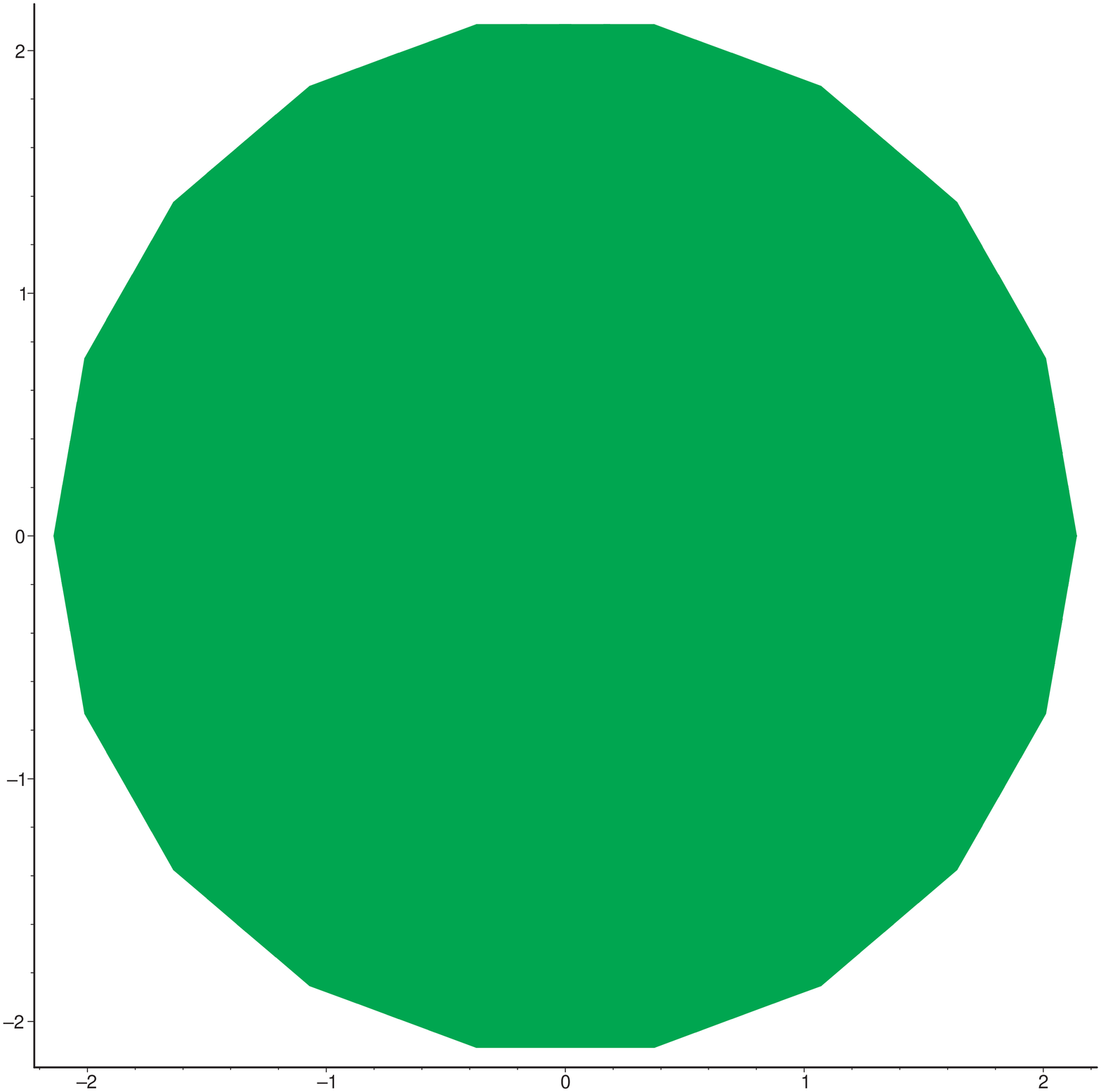}}
\\
\subfigure[$\lambda$ with $n = 7$ for $\sigma_2$]{\includegraphics[scale=0.13]{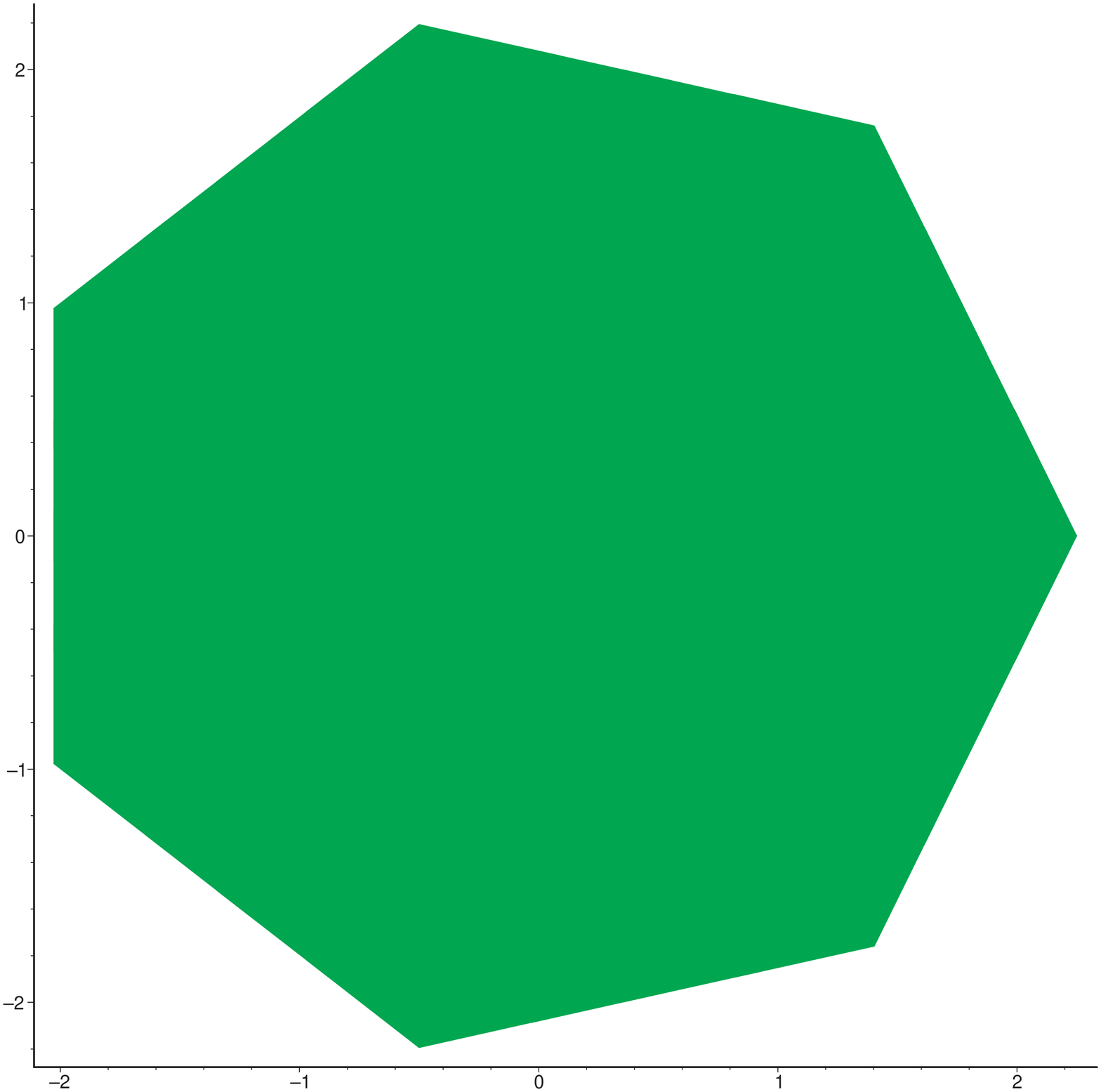}}\qquad\qquad
\subfigure[$\lambda$ with $n = 14$ for $\sigma_2$]{\includegraphics[scale=0.13]{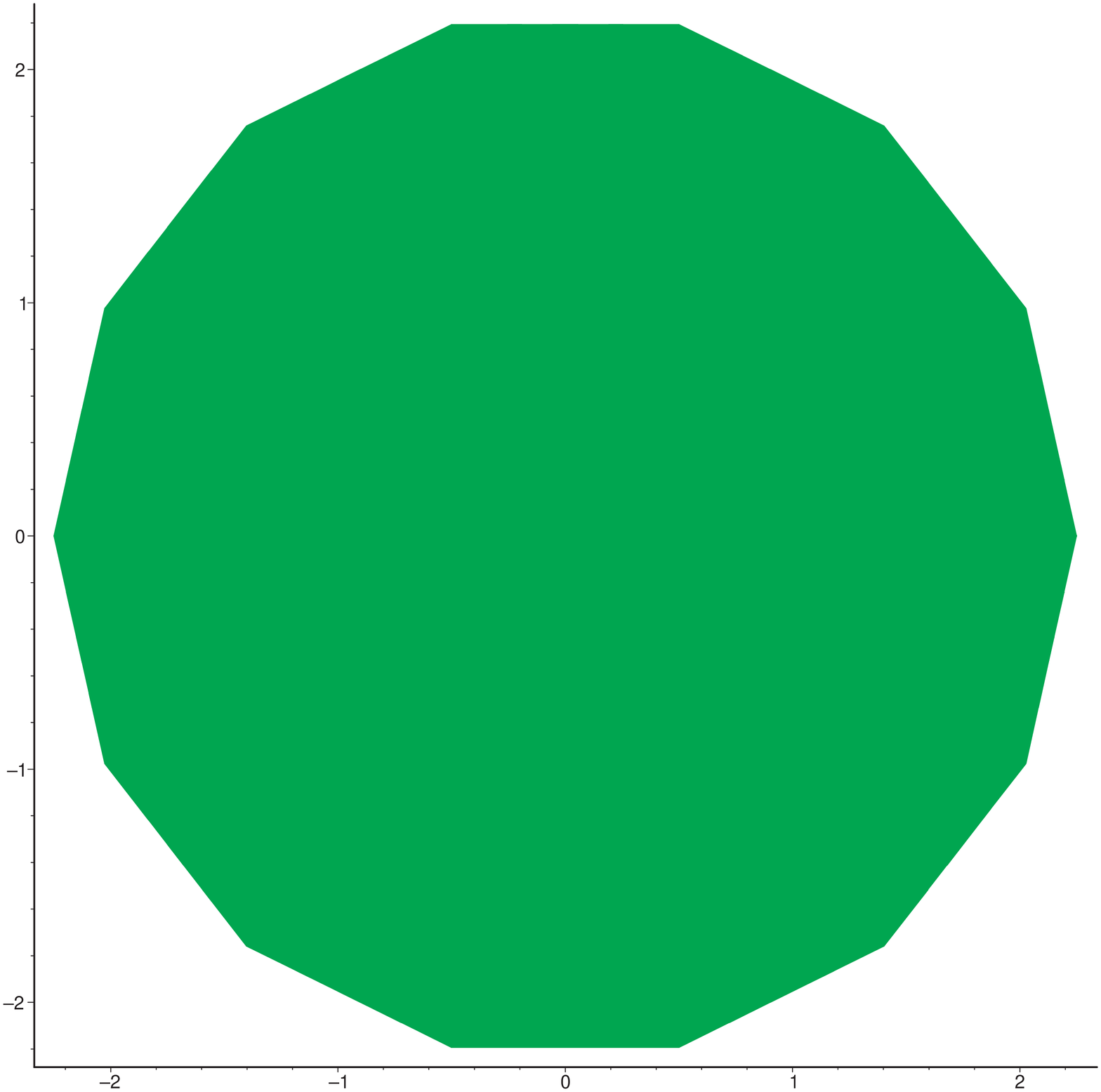}}\qquad\qquad
\subfigure[$\kappa$ with $n = 18$ for $\sigma_2$]{\includegraphics[scale=0.13]{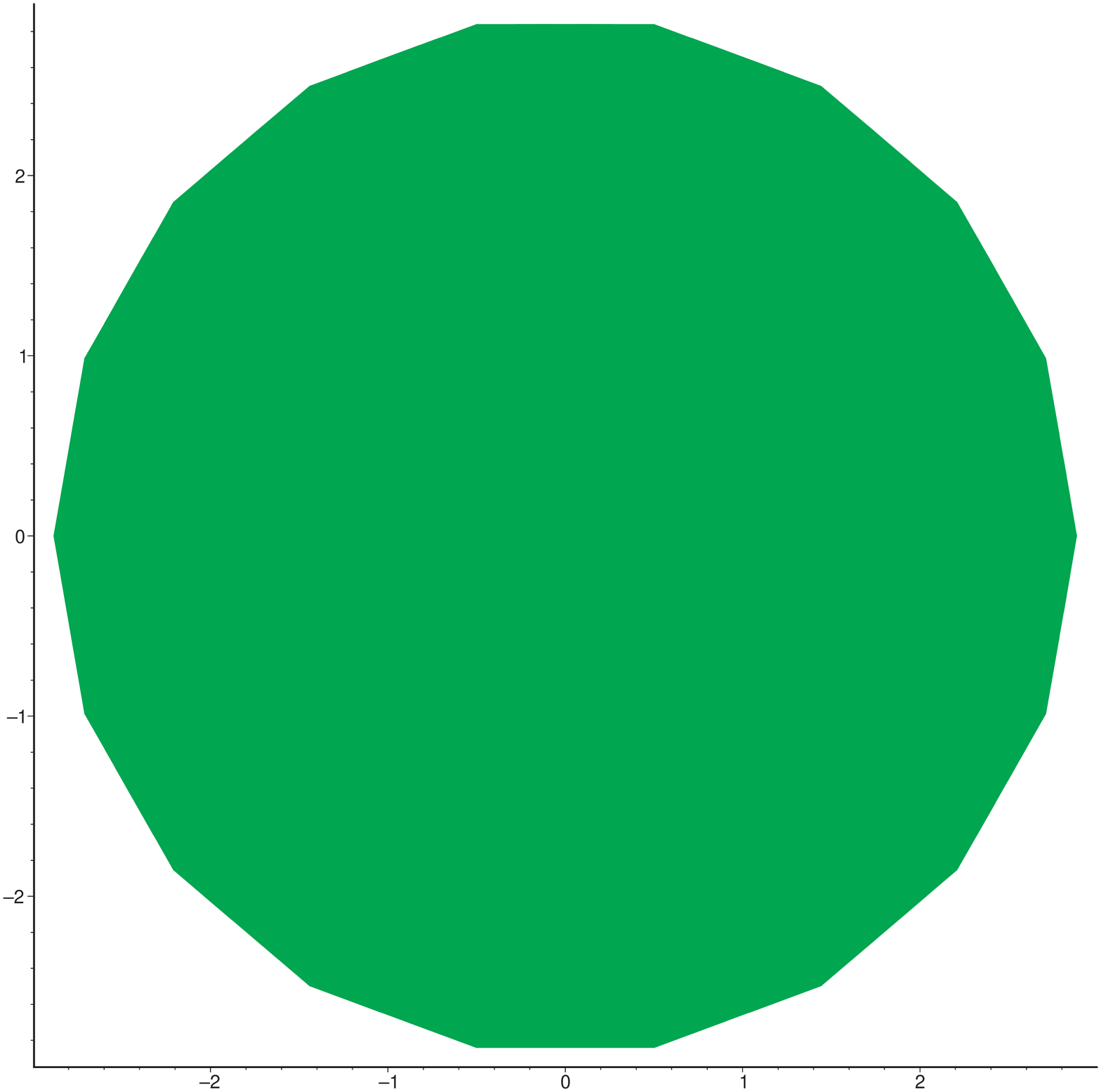}}
\caption{The attractors $K(\sigma(\beta), \A)$ for the spectra of cubic
Pisot-cyclotomic numbers. For each spectrum, two attractors apply, since there are two non-trivial automorphisms to consider.}
\label{fig:missingcub}
\end{figure}

In general, it is not easy to decide, whether all points of the cut-and-project set mentioned in Remark~\ref{rem:cap} are included in the spectrum,
in other words, whether equality holds in $X^{\A}(\beta) \subseteq \Sigma(\Omega)$. One way to verify this is to consider a criterion, which reduces the question to compare these sets only in a close neighbourhood of the origin. This task can be performed in few steps, taking into account that we have an algorithm to generate the spectra and cut-and-project sets.
In order to decide whether there are some missing points (i.e. points in $\Sigma(\Omega)\setminus X^\A(\beta)$)
we start with the following lemma.

\begin{lem}\label{l:podminkaS}
Let $\beta>1$ be a unit and let $S\subset K(\sigma(\beta),\A)$ satisfy
\begin{equation}\label{eq:podminkaS}
S\subseteq\bigcup_{a\in\A}(\sigma(\beta) S+a).
\end{equation}
Then $(\Sigma(S)\cap \mathrm{cl}(B_{1/(\beta-1)}(0)))\subseteq (X^\A(\beta)\cap \mathrm{cl}(B_{1/(\beta-1)}(0)))$ implies $\Sigma(S)\subseteq X^\A(\beta)$.
\end{lem}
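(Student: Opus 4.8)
The plan is to prove the inclusion by a finite digit-by-digit descent inside $\Sigma(S)$, anchored at the base region $\mathrm{cl}(B_{1/(\beta-1)}(0))$ supplied by the hypothesis, and then to climb back up using that $X^\A(\beta)$ is stable under the maps $x\mapsto\beta x+a$ for $a\in\A$. This stability is immediate: prepending a digit $a\in\A$ to a finite $\A$-expansion of $x$ produces a finite $\A$-expansion of $\beta x+a$, so $\beta X^\A(\beta)+\A\subseteq X^\A(\beta)$.

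The heart of the argument is a single reduction step. Given $z\in\Sigma(S)$ (so $z\in M$ and $\sigma(z)\in S$), condition \eqref{eq:podminkaS} yields $a\in\A$ and $s'\in S$ with $\sigma(z)=\sigma(\beta)s'+a$. Since Galois conjugation sends roots of unity to roots of unity and fixes $0$, we have $\sigma(\A)=\A$, so $a=\sigma(a')$ for a unique $a'\in\A$. Setting $w:=\beta^{-1}(z-a')$, the fact that $\beta$ is a unit gives $\beta^{-1}\in\Z[\beta]\subseteq M$, hence $w\in M$; and $\sigma(w)=\sigma(\beta)^{-1}(\sigma(z)-a)=s'\in S$, so $w\in\Sigma(S)$. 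Thus $z=\beta w+a'$ with $w\in\Sigma(S)$, and since $\max\{|a|:a\in\A\}=1$ we obtain the estimate $|w|\le(|z|+1)/\beta$, which is strictly smaller than $|z|$ precisely when $|z|>1/(\beta-1)$.

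Iterating this step from an arbitrary $z=z_0\in\Sigma(S)$ produces a sequence $z_0,z_1,z_2,\dots\in\Sigma(S)$ with $z_k=\beta z_{k+1}+a_k'$, $a_k'\in\A$, continued as long as $|z_k|>1/(\beta-1)$. The crucial point is that this process must halt. While $|z_k|>1/(\beta-1)$ the norms $|z_k|$ strictly decrease, so the $z_k$ are pairwise distinct and all lie in $\Sigma(S)\cap\mathrm{cl}(B_{|z_0|}(0))$. But $\Sigma(S)$ is uniformly discrete because $S$ is bounded (being contained in the compact set $K(\sigma(\beta),\A)$): this is the uniform-discreteness property of the cut-and-project sets recalled in Section~\ref{sec:cap}, valid for any bounded window, so $\Sigma(S)$ meets any bounded region in finitely many points. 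An infinite strictly decreasing sequence of distinct points inside a finite set is impossible, so some $z_m$ satisfies $|z_m|\le1/(\beta-1)$.

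It then remains to invoke the base case and climb back. By hypothesis $z_m\in\Sigma(S)\cap\mathrm{cl}(B_{1/(\beta-1)}(0))\subseteq X^\A(\beta)$, and applying the stability $\beta X^\A(\beta)+\A\subseteq X^\A(\beta)$ to the relations $z_k=\beta z_{k+1}+a_k'$ for $k=m-1,\dots,0$ gives $z_0=z\in X^\A(\beta)$ by downward induction. As $z$ was arbitrary, $\Sigma(S)\subseteq X^\A(\beta)$. I expect the main obstacle to be exactly the termination of the descent: the contraction estimate only drives $|z_k|$ toward $1/(\beta-1)$ from above and, on metric grounds alone, need never cross it; it is the discreteness of $\Sigma(S)$ that forces the sequence to actually enter the base ball after finitely many steps, which is why the boundedness hypothesis $S\subseteq K(\sigma(\beta),\A)$ is essential.
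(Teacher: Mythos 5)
Your proof is correct and follows essentially the same route as the paper's: the same digit-stripping step $z\mapsto\beta^{-1}(z-\sigma^{-1}(a))$ using \eqref{eq:podminkaS} and the unit hypothesis to stay in $\Z[\omega]$, followed by descent to the ball $\mathrm{cl}(B_{1/(\beta-1)}(0))$ and climbing back via $\beta X^\A(\beta)+\A\subseteq X^\A(\beta)$. In fact you are more careful than the paper at the one delicate point: the paper passes from ``$|y|<|x|$ unless $|x|\leq 1/(\beta-1)$'' directly to termination, whereas you correctly observe that the strict decrease alone does not force the sequence to cross into the ball and supply the missing ingredient, namely the uniform discreteness of $\Sigma(S)$ coming from the boundedness of $S\subset K(\sigma(\beta),\A)$.
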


\begin{proof}
Let $x\in\Sigma(S)$, i.e. $\sigma(x)\in S.$ By the assumption there exists $a_0\in\A$ such that $\tfrac{\sigma(x)-a_0}{\sigma(\beta)}\in S.$
Thus $y:=\frac{x-\sigma^{-1}(a_0)}{\beta}\in\Sigma(S).$ Note that the assumption that $\beta$ is a unit is used here to ensure that $y\in\Z[\omega].$
If $y\in X^\A(\beta)$, then also $x=\beta y+\sigma^{-1}(a_0)\in X^\A(\beta)$. We have that $|y|<|x|$ unless $x\leq 1/(\beta-1)$. Thus, by
iterating this procedure, we eventually obtain a point in $\mathrm{cl}(B_{1/(\beta-1)}(0))$.
\end{proof}

Let us explain the consequences of this lemma in the quadratic case. The cubic case can be treated accordingly.
Setting $S=K(\sigma(\beta),\A)$, the condition~\eqref{eq:podminkaS} is satisfied, even with equality. This has the following consequence.

\begin{coro}
\label{thm:spectra=cnp}
Let $\Omega = K(\sigma(\beta),\A)$. If $\beta$ is an algebraic unit, then
the following are equivalent.
\begin{enumerate}
\item $\Sigma(\Omega) \cap \mathrm{cl}(B_{1/(\beta-1)}(0)) =
       X^{\A}(\beta) \cap \mathrm{cl}(B_{1/(\beta-1)}(0))$
\item $\Sigma(\Omega) = X^{\A}(\beta)$.
\end{enumerate}
\end{coro}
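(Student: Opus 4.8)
The plan is to split the equivalence into its two implications, observing that all the real content sits in (1)$\implies$(2) while the reverse is immediate. For (2)$\implies$(1), I would simply intersect the set equality $\Sigma(\Omega)=X^\A(\beta)$ with the fixed set $\mathrm{cl}(B_{1/(\beta-1)}(0))$; no further hypothesis is required.

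For (1)$\implies$(2), the strategy is to feed the local information (1) into Lemma~\ref{l:podminkaS} with the choice $S=\Omega=K(\sigma(\beta),\A)$, and then to combine the resulting global inclusion with Remark~\ref{rem:cap}. First I would verify that this $S$ is admissible for the lemma: the attractor satisfies the IFS fixed-point identity
$$K(\sigma(\beta),\A)=\bigcup_{a\in\A}\bigl(\sigma(\beta)K(\sigma(\beta),\A)+a\bigr),$$
so condition~\eqref{eq:podminkaS} holds (with equality), while the inclusion $S\subseteq K(\sigma(\beta),\A)$ is trivial; the standing assumption that $\beta$ is an algebraic unit supplies the last hypothesis of the lemma. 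Next, I would note that (1) gives in particular the premise $\Sigma(\Omega)\cap\mathrm{cl}(B_{1/(\beta-1)}(0))\subseteq X^\A(\beta)\cap\mathrm{cl}(B_{1/(\beta-1)}(0))$ that the lemma demands, whence it yields $\Sigma(\Omega)\subseteq X^\A(\beta)$. Since Remark~\ref{rem:cap} already provides $X^\A(\beta)\subseteq\Sigma(\Omega)$, the two inclusions combine to give $\Sigma(\Omega)=X^\A(\beta)$, i.e.\ (2). It is worth recording that only the ``$\subseteq$'' half of (1) enters the argument; the reverse half of (1) is then recovered for free from (2).

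I do not expect a genuine obstacle at the level of the corollary, since the decisive estimate---pulling a point of $\Sigma(\Omega)$ back along $x\mapsto(x-\sigma^{-1}(a_0))/\beta$ so that it stays in $\Z[\omega]$ and its modulus shrinks into the ball---is already carried out in Lemma~\ref{l:podminkaS}, where the unit hypothesis is exactly what keeps the preimage in the lattice. The only point that genuinely needs care is confirming that the acceptance window $\Omega=K(\sigma(\beta),\A)$ is itself a valid choice of $S$, and this is immediate from its characterization as the IFS attractor. The payoff, which I would emphasize, is that the corollary reduces the global identity $\Sigma(\Omega)=X^\A(\beta)$ to a finite verification inside $\mathrm{cl}(B_{1/(\beta-1)}(0))$.
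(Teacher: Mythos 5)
Your argument is correct and follows exactly the paper's own route: the reverse implication is immediate, and the forward one is obtained by applying Lemma~\ref{l:podminkaS} with $S=K(\sigma(\beta),\A)$ (which satisfies condition~\eqref{eq:podminkaS} with equality as the IFS attractor) together with the inclusion $X^{\A}(\beta)\subseteq\Sigma(\Omega)$ from Remark~\ref{rem:cap}. Your additional observation that only the ``$\subseteq$'' half of (1) is actually used is a correct and worthwhile clarification of the paper's terser proof.
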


\pfz
The inclusion $\Sigma(\Omega) \supseteq X^{\A}(\beta)$ is satisfied by Remark~\ref{rem:cap}, for the other inclusion, use Lemma~\ref{l:podminkaS}.
\pfk

In the cases $(\tau,10)$, $(\tau^2,10)$, and $(\delta,8)$ we can apply Lemma~\ref{l:podminkaS} also to $S=\mathrm{int}(K(\sigma(\beta),\A))$ to be able to compare the sets $X^\A(\beta)$ and $\Sigma(S)$ in those cases.
We only need to show that such $S$ satisfies  condition~\eqref{eq:podminkaS}. This is easy to prove in case $(\tau,10)$ as $S$ is the open decagon. In cases $(\tau^2,10)$ and $(\delta,8)$, things are more difficult, and we provide only a sketch of the proof for $(\tau^2,10)$.

Let $\mathcal F(M)=\bigcup_{a\in\A_{10}}\sigma(\beta)M+a$ for any $M\subseteq \C$. 
Let $D_0$ be the closed decagon with vertices $\tfrac1{\cos(\theta/2)}R(\theta,\gamma)\exp(\tfrac{2\pi i}{10}k)$ with $R(\theta,\gamma)$ as in \eqref{rtheta}.
For $n\in\N$ we define $D_n = \mathcal F(D_{n-1})$. Then $D_0\subset D_1$, thus
$\lim_{n\rightarrow +\infty} D_n = K(\sigma(\beta),\A)$, where the limit is with respect to the Hausdorff metric in the space of compact sets.

One can show that $D_0\subset \mathrm{int}(D_3)$, thus we can find an open set $E_0$ such that $D_0\subset E_0\subset D_3$. Define $E_n = \mathcal F(E_{n-1})$ and $E = \bigcup_{n\in\N} E_n$. We see that $E$ is an open set in $K(\sigma(\beta),\A)$ whose closure is $K$. Moreover, it holds that $\mathcal F(E) = E$, i.e. $E$ satisfies condition~\eqref{eq:podminkaS}.

It is also possible, but rather technical to show that the Hausdorff distance of $\partial{K}$ and $\partial D_n$ tends to zero which implies that $E = \mathrm{int}(K)$. As a byproduct we also obtain that $K$ is simply connected as each $D_n$ is simply connected.

A similar argumentation can be used to show that $S=\mathrm{int}(K(\sigma(\delta),\A_8))$ satisfies~\eqref{eq:podminkaS}.
\begin{coro}
\label{thm:spectra=cnpint}
Let $\Omega = \mathrm{int}(K(\sigma(\beta),\A))$ for $\A=\A_{10}$ and $\beta=\tau,\tau^2$, and for $\A=\A_8$ and $\beta=\delta$. Then
the following are equivalent.
\begin{enumerate}
\item $\Sigma(\Omega) \cap \mathrm{cl}(B_{1/(\beta-1)}(0)) \subseteq
       X^{\A}(\beta) \cap \mathrm{cl}(B_{1/(\beta-1)}(0))$
\item $\Sigma(\Omega) \subseteq X^{\A}(\beta)$.
\end{enumerate}
\end{coro}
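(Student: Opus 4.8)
The plan is to show that Corollary~\ref{thm:spectra=cnpint} is a direct application of Lemma~\ref{l:podminkaS} with the choice $S=\mathrm{int}\big(K(\sigma(\beta),\A)\big)$, once we know that this open set satisfies condition~\eqref{eq:podminkaS}. The crucial preliminary fact, established in the preceding discussion, is precisely that for the three cases $(\tau,10)$, $(\tau^2,10)$, and $(\delta,8)$ the interior $S=\mathrm{int}(K)$ is invariant under the set map $\mathcal F$, i.e.\ $S=\bigcup_{a\in\A}(\sigma(\beta)S+a)\supseteq\bigcup_{a\in\A}(\sigma(\beta)S+a)$, so that \eqref{eq:podminkaS} holds. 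Since each $\beta$ under consideration is an algebraic unit (being a quadratic Pisot unit), the hypotheses of Lemma~\ref{l:podminkaS} are met.

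First I would record the trivial implication $(2)\Rightarrow(1)$: intersecting both sides of $\Sigma(\Omega)\subseteq X^{\A}(\beta)$ with $\mathrm{cl}(B_{1/(\beta-1)}(0))$ immediately gives the local inclusion. For the substantive direction $(1)\Rightarrow(2)$, I would invoke Lemma~\ref{l:podminkaS} verbatim. The lemma states that whenever $S\subseteq K(\sigma(\beta),\A)$ satisfies \eqref{eq:podminkaS}, the local inclusion $\Sigma(S)\cap\mathrm{cl}(B_{1/(\beta-1)}(0))\subseteq X^{\A}(\beta)\cap\mathrm{cl}(B_{1/(\beta-1)}(0))$ forces the global inclusion $\Sigma(S)\subseteq X^{\A}(\beta)$. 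Taking $S=\Omega=\mathrm{int}(K(\sigma(\beta),\A))$ and noting that $S\subseteq K(\sigma(\beta),\A)$ holds trivially, the hypothesis~(1) of the corollary is exactly the local inclusion required by the lemma, and the conclusion~(2) is exactly the global inclusion the lemma delivers.

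The only nontrivial ingredient, and hence the main obstacle, is verifying that $S=\mathrm{int}(K)$ satisfies the self-covering condition~\eqref{eq:podminkaS}. This is geometric rather than formal, since the interior of an attractor need not in general coincide with the union $\bigcup_{a\in\A}(\sigma(\beta)S+a)$ of the scaled-and-shifted interiors; the images $\sigma(\beta)K+a$ may overlap only along their boundaries, in which case their interiors would fail to cover the seams. The argument sketched above circumvents this by exhibiting an explicit open invariant set $E$ with $\mathcal F(E)=E$, built as an increasing union $E=\bigcup_n E_n$ starting from an open neighbourhood $E_0$ of a suitable closed decagon $D_0$ satisfying $D_0\subset\mathrm{int}(D_3)$, and then identifying $E$ with $\mathrm{int}(K)$ by controlling the Hausdorff distance of $\partial D_n$ to $\partial K$. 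The case $(\tau,10)$ is transparent because $K$ is a genuine decagon, so $S$ is simply its open interior and \eqref{eq:podminkaS} reduces to an elementary polygon computation; the cases $(\tau^2,10)$ and $(\delta,8)$ require the fractal-boundary analysis just indicated.

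Thus the proof of the corollary itself is short: it consists of the observation that the three named $\beta$ are units, that the preceding paragraphs have checked \eqref{eq:podminkaS} for $S=\mathrm{int}(K(\sigma(\beta),\A))$ in each of these cases, and that Lemma~\ref{l:podminkaS} then yields the equivalence. I would therefore write the proof as a one-line appeal to Lemma~\ref{l:podminkaS}, flagging that the verification of condition~\eqref{eq:podminkaS} for the open set $S$ has been carried out in the discussion above, with the genuine work living in that verification rather than in the corollary.
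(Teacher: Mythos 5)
Your proposal is correct and follows exactly the route the paper takes: the corollary is a one-line application of Lemma~\ref{l:podminkaS} with $S=\Omega=\mathrm{int}\big(K(\sigma(\beta),\A)\big)$, where the real work is the verification of condition~\eqref{eq:podminkaS} for this open set in the three listed cases (trivial for the open decagon when $\beta=\tau$, and via the $D_n$/$E_n$ construction for $\tau^2$ and $\delta$), together with the observation that $\tau$, $\tau^2$, $\delta$ are units and that the converse implication is immediate by intersecting with the ball. The only blemish is a typo in your first paragraph where condition~\eqref{eq:podminkaS} is garbled as ``$S=\bigcup_{a\in\A}(\sigma(\beta)S+a)\supseteq\bigcup_{a\in\A}(\sigma(\beta)S+a)$''; you clearly intend $S\subseteq\bigcup_{a\in\A}(\sigma(\beta)S+a)$, which is what you use throughout.
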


Corollaries~\ref{thm:spectra=cnp} and~\ref{thm:spectra=cnpint} give us a technique to find all missing points. First step is to find all missing points $x$ such that
$|x|\leq \frac1{\beta-1}$. Next we check which of the points $\beta x+a$, $a\in\A$, are missing, and we repeat this procedure.

Let us explain the method of finding the missing points of small absolute value. We need to compare the spectrum $X^\A(\beta)$
and the cut-and-project set $\Sigma(\Omega)$ in the ball $B_{1/(\beta-1)}(0)$.
Finding all points of the spectrum within a bounded ball around the origin is straightforward using the techniques explained before.
Let us focus on the method of generating all points of $\Sigma(\Omega)$ within a bounded region. For that, observe the trivial property of
cut-and-project sets that
$$
\Omega\subseteq \tilde{\Omega} \quad\Rightarrow\quad \Sigma(\Omega)\subseteq \Sigma(\tilde{\Omega})\,.
$$
Thus, we choose a suitable $\tilde{\Omega}$ for which the computation of $\Sigma(\tilde{\Omega})$ is simple, and then we check which
of its elements $x$ belong also to $\Sigma(\Omega)$ by verifying the condition on its conjugates.
We set $\tilde{\Omega}$ to be a parallelogram along the roots of unity, so that the cut-and-project set is a Cartesian product
of two one-dimensional cut-and-project sequences $\Sigma(I)$.

Consider $\tilde{\Omega}=I+\sigma(\omega) I$,
where $I$ is a suitable interval. Then
$$
\Sigma(\Omega) \subset \Sigma(\tilde{\Omega}) =\Sigma\big( I+\sigma(\omega) I\big) = \Sigma(I) + \omega \Sigma(I)\,.
$$

By the above corollaries, we only need to consider points in $\Sigma(\tilde{\Omega})\cap \mathrm{cl}\big(B_{1/(\beta-1)}(0)\big)$, which amounts to finding points in
$\Sigma(I)\cap J$, where $J$ is an interval such that $J+\omega J\supset\mathrm{cl}\big( B_{1/(\beta-1)}(0)\big)$. Then
$$
\Sigma(\tilde{\Omega})\cap \mathrm{cl}\big(B_{1/(\beta-1)}(0)\big) \subset \big(\Sigma(I)\cap J\big)+\omega\big(\Sigma(I)\cap J\big)\,.
$$

In order to find all elements of $\Sigma(I)\cap J$, realize that these are of the form $x=a+b\beta$, $a,b\in\Z$, such that
$$
a+b\beta\in J \quad\text{and}\quad a+b\beta'\in I\,.
$$
This is a system of two linear inequalities for two unknowns $a,b$, which can be easily solved.

Let us find explicitly the suitable intervals $I$, $J$ for the considered cases.
We have that $\Omega$ is contained in $B_R(0)$ with $R=\sum_{i=0}^{+\infty}|\sigma(\beta)|^i=\tfrac{1}{1-|\sigma(\beta)|}$.
Furthermore, $B_R(0)$ is contained in the parallelogram $I+\sigma(\omega)I$ with $I = [\frac{-R}{\sin\theta'},\frac{R}{\sin\theta'}]$
where $\theta' = \arg(\sigma(\omega))$. Similarly, we can set $J = [-\tfrac{1}{\beta-1}\tfrac1{\sin\theta},\tfrac{1}{\beta-1}\tfrac1{\sin\theta}]$ where $\theta = \arg(\omega)$.
The appropriate choices of $\sigma(\omega)$ are shown in Table~\ref{tab:intervals} together with intervals $I$ and $J$. Figure~\ref{fig:missingquadr} illustrates the acceptance windows with the missing points marked.



\begin{table}
\renewcommand{\arraystretch}{1.3}
\begin{tabular}{llcccl}
\hline
order   & $\beta$    & $ \sigma(\omega)$&$s$            & $t$  & missing points      \\
\hline
5       & $\tau$   & $\exp(\frac{4\pi}5)=\omega^2$    & $4.45406$     & $1.70130$ & interior \\%
10      & $\tau$  & $\exp(\frac{6\pi}{10})=\omega^3$ & $2.75276$     & $2.75276$ & boundary \\
        & $\tau^2$  & $\exp(\frac{6\pi}{10})=\omega^3$ & $1.70130$     & $1.05146$ & boundary \\
8       & $\delta$ &  $\exp(\frac{6\pi}8)=\omega^3$   & $2.41421$     & $1$       & none \\
12      & $\mu$    &  $\exp(\frac{10\pi}{12})=\omega^5$& $7.46410$    & $1.15470$ & interior\\
\hline\\
\end{tabular}
\caption{Determination of intervals $I=[-s,s]$ and $J=[-t,t]$ for computation of the cut-and-project set $\Omega$.}
\label{tab:intervals}
\end{table}

\begin{figure}
\subfigure[$\tau$ with $n=5$]{\includegraphics[scale=0.17]{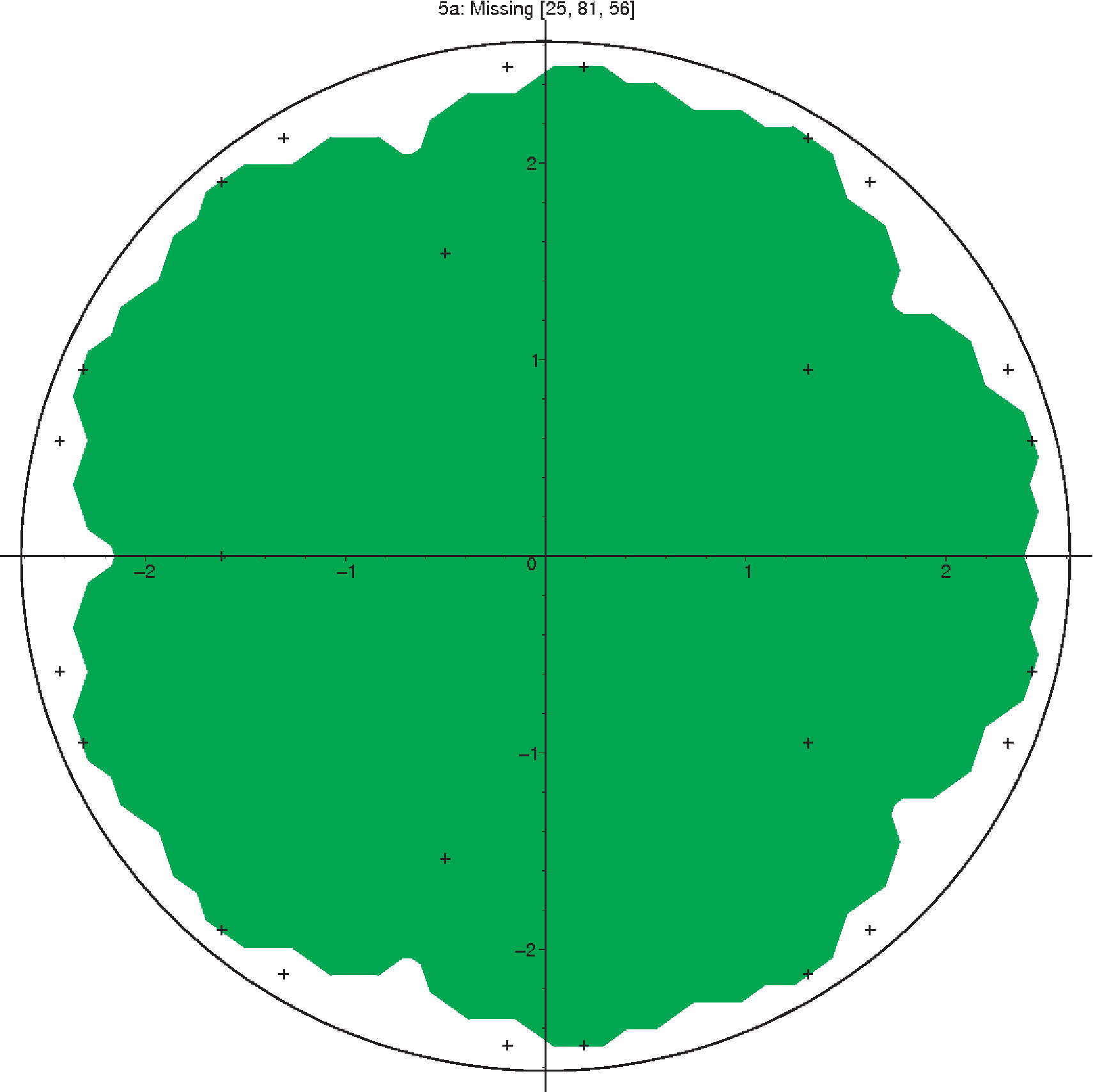}}
\subfigure[$\tau$ with $n=10$]{\includegraphics[scale=0.17]{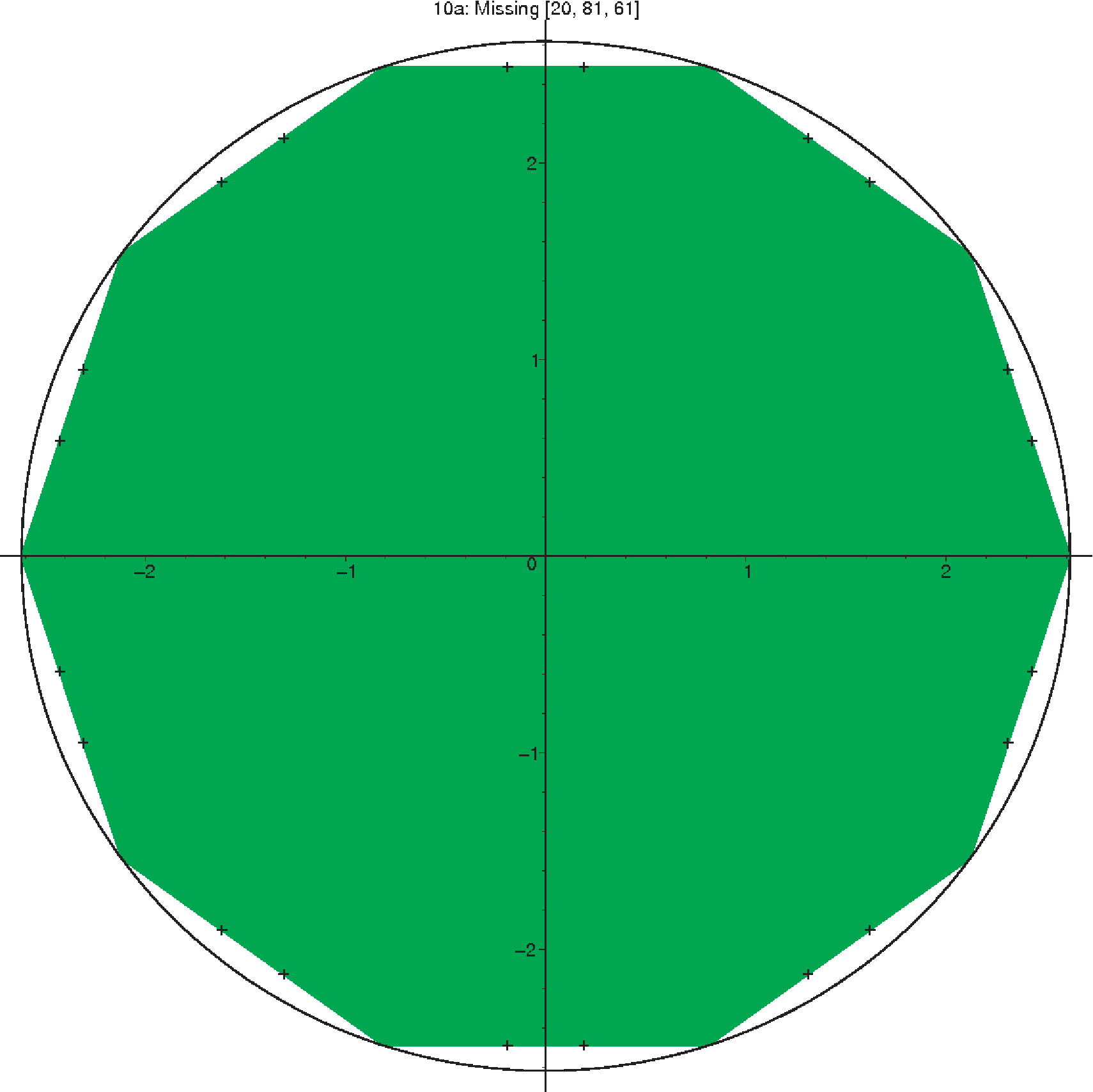}}
\subfigure[$\tau^2$ with $n=10$]{\includegraphics[scale=0.17]{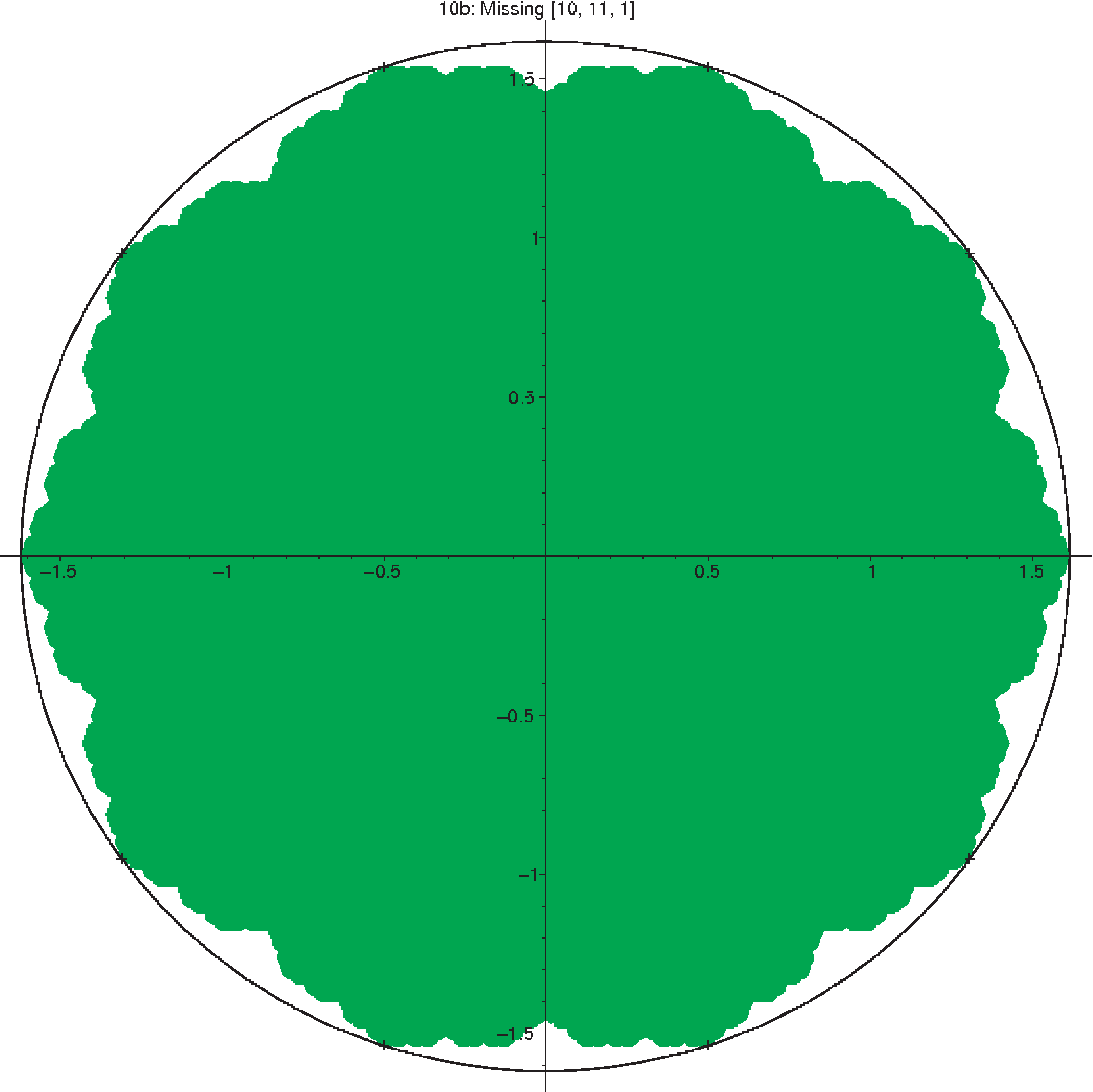}}
\\
\subfigure[$\delta$ with $n = 8$]{\includegraphics[scale=0.17]{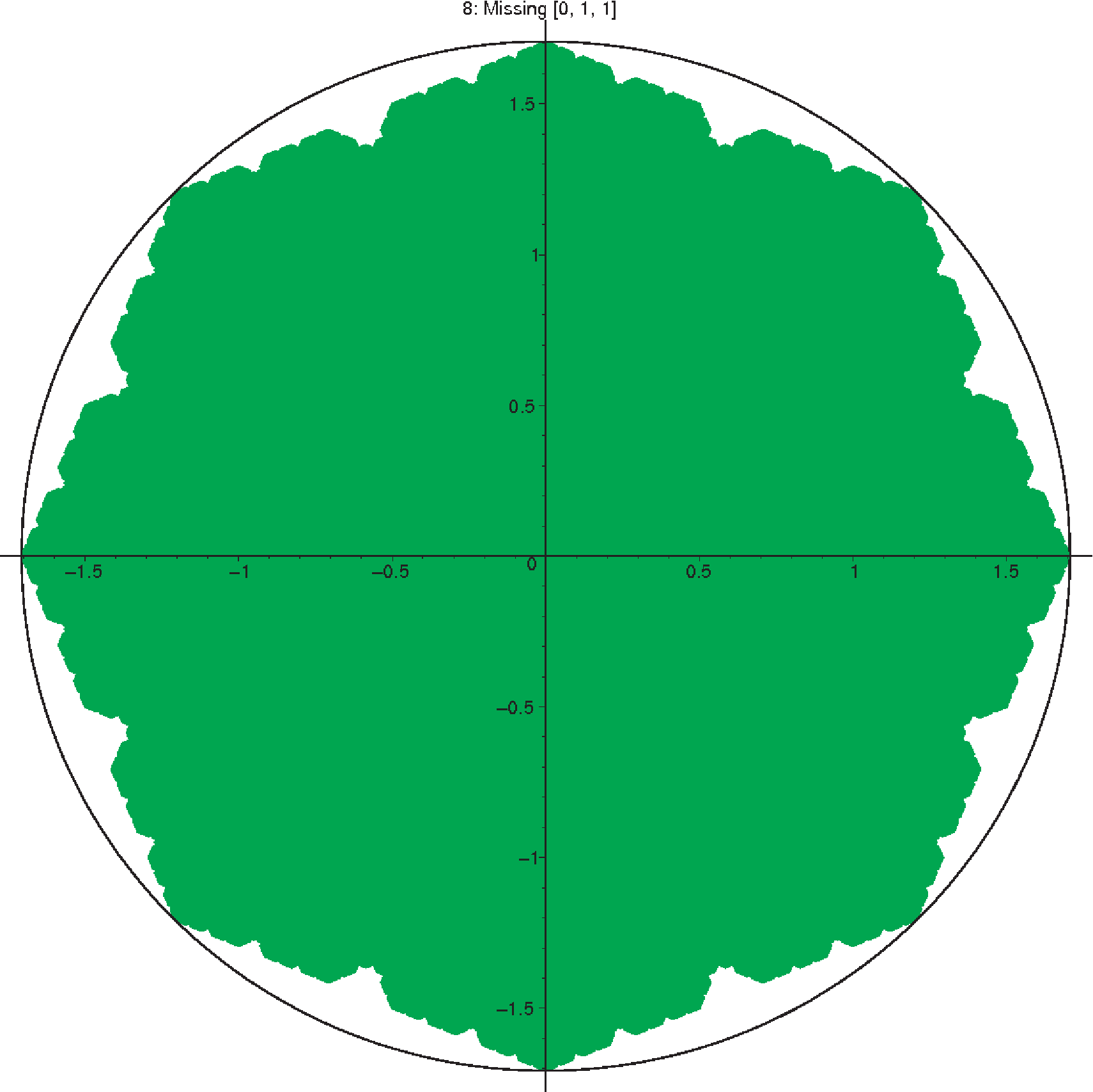}}
\subfigure[$\mu$ with $n = 12$]{\includegraphics[scale=0.17]{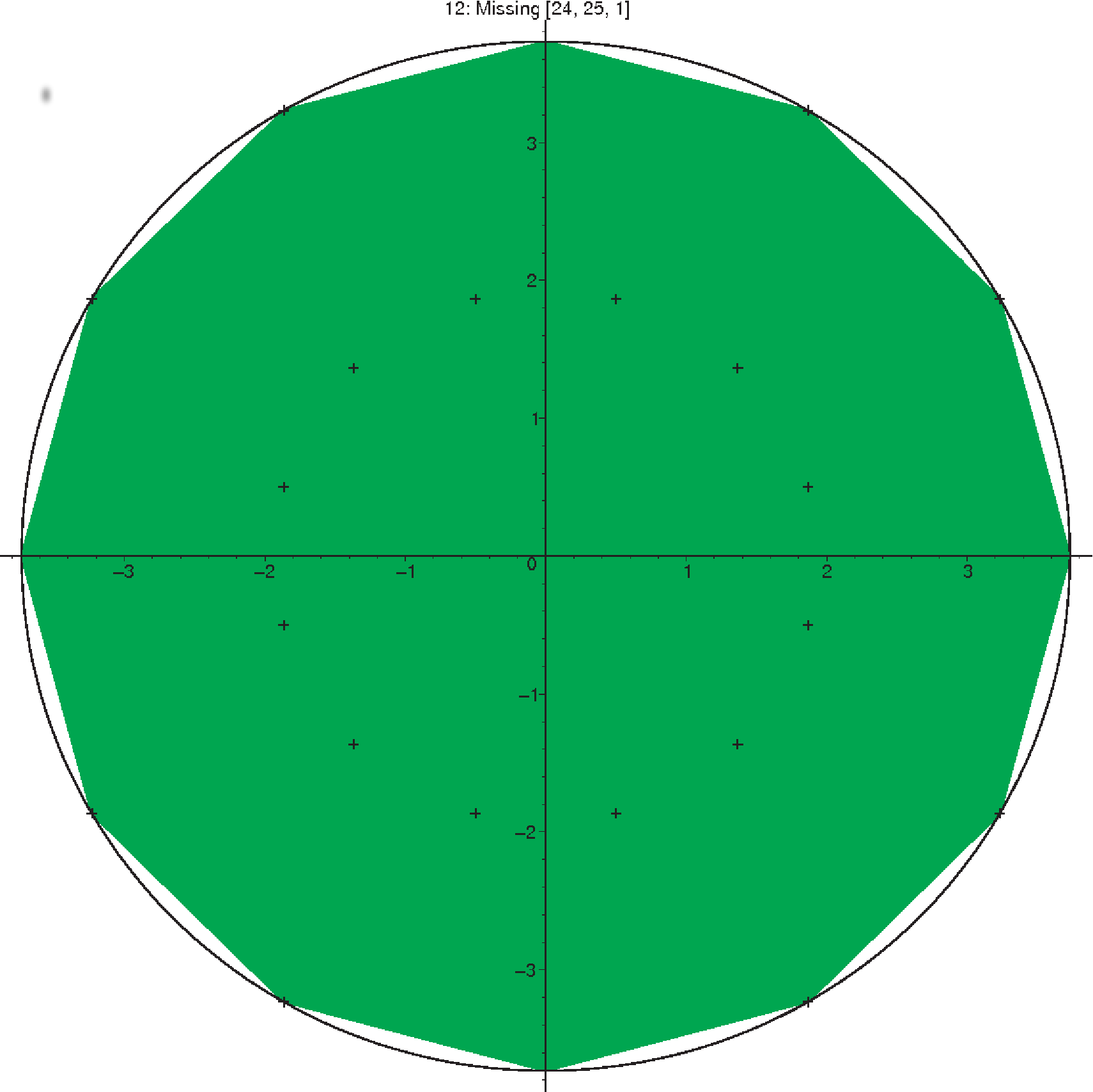}}
\caption{The acceptance windows with the missing points marked, in cases corresponding to the spectra of quadratic
Pisot-cyclotomic numbers.}
\label{fig:missingquadr}
\end{figure}


Let us briefly explain how to decide whether given $x\in\Z[\omega]$ belongs to $\Sigma\big(K(\sigma(\beta,\A))\big)$. This amounts to checking whether $\sigma(x)\in K(\sigma(\beta,\A))$, i.e. whether there is a sequence of digits $a_i\in\A$ such that $\sigma(x) = \sum_{i=0}^{+\infty}a_i\sigma(\beta)^i$. For $a\in\A$ define $T_a(z) := \frac{z-a}{\sigma(\beta)}$. It is reasonable to consider only those digits $a$ such that $T_a(\sigma(x))\in \mathrm{cl}(B_{R}(0))$.
We claim that under the assumption that $\beta^{-1}$ is an algebraic integer, if
  $b_0,b_1,\dots,b_k\in\A$, $k\in\N$, are such that $T_{b_i}T_{b_{i-1}}\cdots T_{b_0}(\sigma(x))\in \mathrm{cl}(B_{R}(0))$, then $T_{b_k}T_{b_{k-1}}\cdots T_{b_0}(x)$ takes only finitely many values. This fact
can be shown in the same fashion as Lemma~9 of~\cite{BaMaPeVa}.

As a consequence, either there is no infinite sequence of digits $(b_i)_{i\geq 0}$ such that $T_{b_i}T_{b_{i-1}}\cdots T_{b_0}(\sigma(x))\in \mathrm{cl}(B_{R}(0))$, for each $i$, and hence $\sigma(x)\notin K(\sigma(\beta),\A)$. Or, it happens that
$$
T_{c_m}\cdots T_{c_n}\cdots T_{c_1}T_{c_0}(x) = T_{c_n}\cdots T_{c_1}T_{c_0}(x),
$$
and then we have that $\sigma(x)$ belongs to $K(\sigma(\beta),\A)$, and, moreover, it has an eventually periodic representation in powers of $\sigma(\beta)$ with preperiod $c_0c_1\dots c_n$ and period $c_{n+1}\dots c_m$. Therefore checking if $\sigma(x)\in K(\sigma(\beta),\A)$ can be done in a finite time even if $K(\sigma(\beta),\A)$ has a fractal boundary.

For the case of symmetry of order 12, we consider the Pisot-cyclotomic number $\mu=1+\sqrt3$, which is not a unit. Hence we cannot apply Lemma~\ref{l:podminkaS}, and consequently the corollaries. Nevertheless, we can decide that $\Sigma(\Omega) \nsubseteq X^{\A}(\beta)$ for $\Omega={\rm int}\big(K(\sigma(\mu,\A_{12}))\big)$, since already in the ball $B_{1/(\beta-1)}(0)$ there are missing points, as it is seen from Figure~\ref{fig:missingquadr}e.

Similar procedure as in the quadratic case can be performed for the cases of cubic spectra, where both considered numbers are units. It turns out that none of the relatively dense cubic spectra corresponds to a cut-and-project set with simply connected acceptance windows, since there are many missing points. This is the reason we do not include the details of the computation.


\section{Spectra and their Voronoi tilings}

Let us focus on the Voronoi tiling of the spectrum. Such a tiling is important, for it provides a natural definition of `neighbours'
of points in an aperiodic Delone set.
Formally, the Voronoi tile of a point $x$ in a Delone set $\Lambda\subset\C$ is defined as
$$
V(x) = \{z\in\C: \text{ for any }y\in\Lambda,\ |x-z|\leq |y-z| \}\,.
$$
It is not difficult to see that only the closest points of $\Lambda$ influence the shape of the tile $V(x)$. In fact,
one needs to look only to the distance $2r_c$, where $r_c$ is the covering radius of $\Lambda$. The smallest finite subset $F\subset\Lambda$
such that
$$
V(x) = \{z\in\C: \text{ for any }y\in F,\ |x-z|\leq |y-z| \}
$$
can be considered as the set of neighbours.

Consider a central point $x \in \Lambda$ and additional points
    $z_i \in \Lambda$ which surround $x$.
Between each $z_i$ and $x$ there is a line dividing $\C$ such that each point on this line
    has equal distance from $z_i$ and $x$.
These lines form the boundary of the tile centered at $x$. One can naturally define the neighbours of $x$ in $\Lambda$
as those points $z_i$, whose tiles $V(z_i)$ share a line segment with $V(x)$.

Denote the vertices of the tile $V(x)$ by  $v_1, v_2, \dots$ where
    each $v_i$ depends only on $x$, $z_i$ and $z_{i+1}$.
As $v_i$ is the circumcenter of the triangle $x, z_i$ and $z_{i+1}$ we
    have that $r=|x-v_i| = |z_i - v_i| = |z_{i+1} - v_i|$. Note that the disc of radius
    $r$ centered at $x$ contains no elements of $\Lambda$ in its interior. Therefore $r\leq r_c$.

The maximal distance $|v_i - x|$  will be called the {\em radius of the tile} $V(x)$ and denoted by $r_x$.
It is not difficult to realize that taking the supremum of radii of tiles over all points of $\Lambda$,
we obtain the covering radius $r_c$ of $\Lambda$,
\begin{equation}\label{eq:rc=rx}
r_c=\sup\{ r_x : x\in\Lambda\}\,.
\end{equation}
 If the set $\Lambda$ has finite local complexity, the
supremum is achieved, i.e.\ there exists a tile whose radius is equal to $r_c$.

An upper bound on the covering radius of the spectra was given in Proposition~\ref{p:rep-rd}.
In this section, we will find the exact value for $r_c$ for all our cases of the spectra of Pisot-cyclotomic numbers
which are Delone. This will further allow us to provide a bound on the number of tiles in the corresponding tiling.
In several cases, a more thorough inspection of the problem will allow us to determine the number of tiles precisely, see Section~\ref{sec:lc}.

To find a lower bound on $r_c$, simply compute the spectrum in some large neighbourhood of the origin, together with its Voronoi tiling,
and find the maximum radii of these tiles. Computing an upper bound is more challenging, but still feasible.
We will find an upper bound on $r_c$ by inspecting the radius of certain tiles in the tiling created by the (finite) set $X_{n}^\A(\beta)$.
Obviously, there are unbounded tiles at the boundaries, however, we will check only tiles in a suitable region $A_n$ around the origin.

Let $X_n^\A(\beta)$ be a subset of $X^\A(\beta)$ that is constructible by polynomials up to degree $n$,
$$
X_n^\A(\beta) =  \Big\{\sum_{j=0}^na_j\beta^j : a_j\in\A\Big\}\,.
$$
Obviously, $X_{n+1}^\A(\beta)$ can be constructed by a union of translates of copies of $X_n^\A(\beta)$, namely
$$
X_{n+1}^\A(\beta) =  \bigcup_{a\in\A}\big(a\beta^{n+1} + X_n^\A(\beta)\big)\,.
$$
For $n\in\N$, consider regions $A_n$, satisfying
\begin{equation}\label{eq:A-cr}
A_{n+1} \subseteq \bigcup_{a\in\A} \big(a\beta^{n+1}+A_n\big)\,.
\end{equation}
It can be shown that we can set $A_n=\beta^n B_R(0)$, where $R=R(\theta, \gamma)$ is given by the formula
\begin{equation}\label{rtheta}
   R(\theta, \gamma) = 
   \frac{\gamma^2\cos(\theta) + \gamma\sqrt{1-\gamma^2\sin^2(\theta)}}{\gamma^2-1}\,,
\end{equation}
see Figure \ref{fig:rad} and Table \ref{tab:covrad}.

\begin{figure}
\subfigure[$\tau$ with $n=5$]{\includegraphics[scale=0.17]{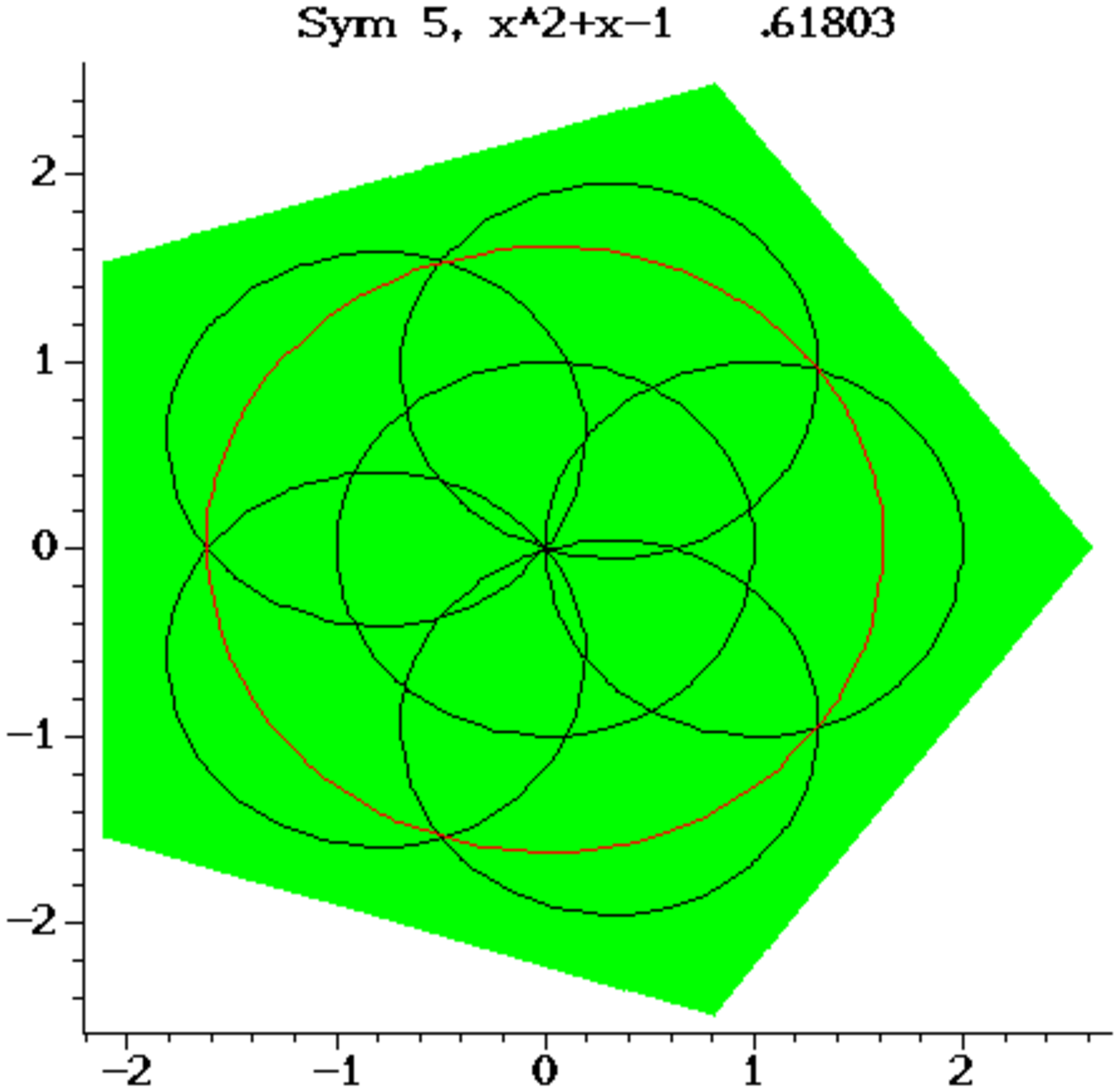}}
\subfigure[$\tau$ with $n=10$]{\includegraphics[scale=0.17]{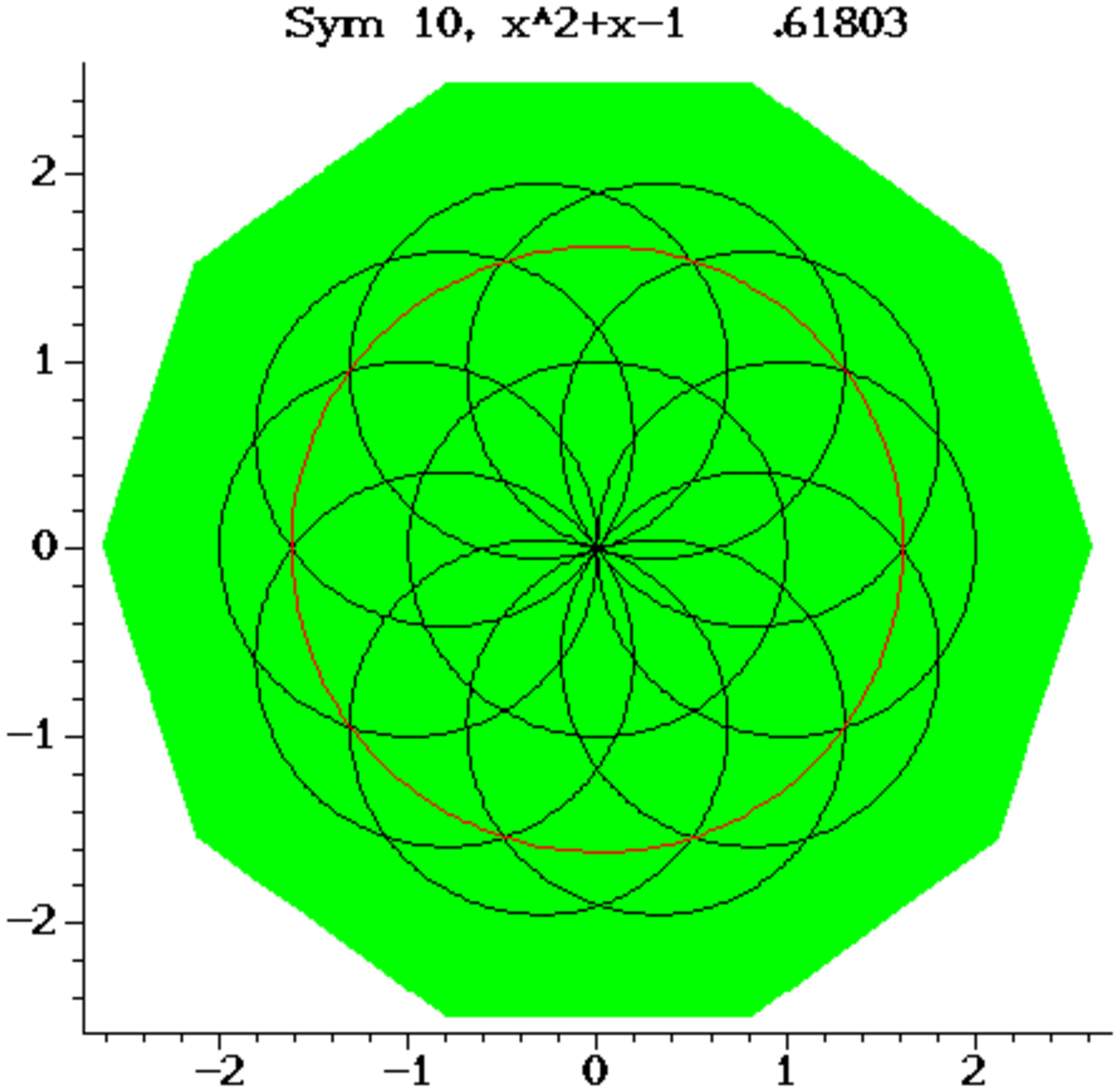}}
\subfigure[$\tau^2$ with $n=10$]{\includegraphics[scale=0.17]{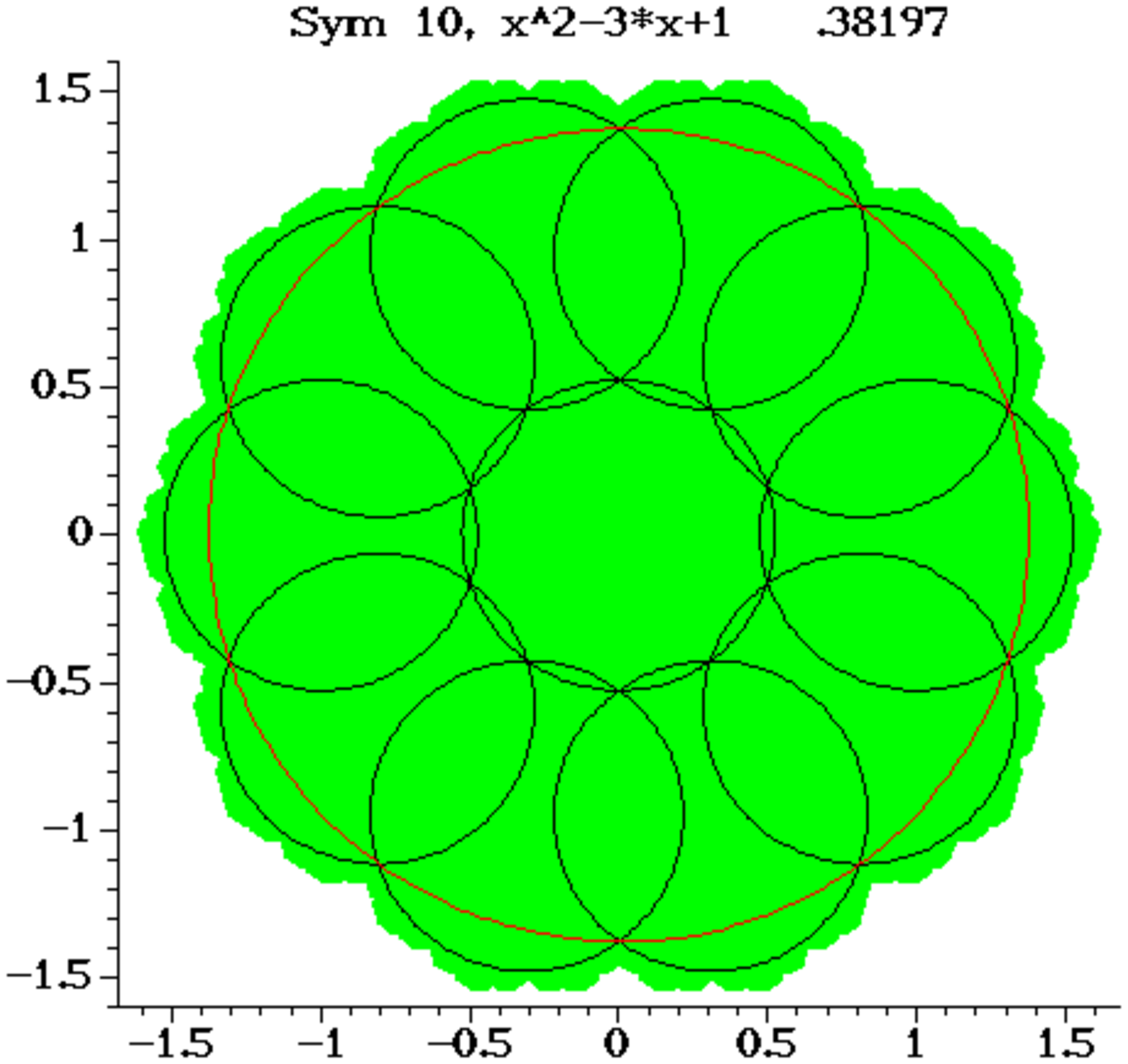}}
\\
\subfigure[$\lambda$ with $n = 7$]{\includegraphics[scale=0.17]{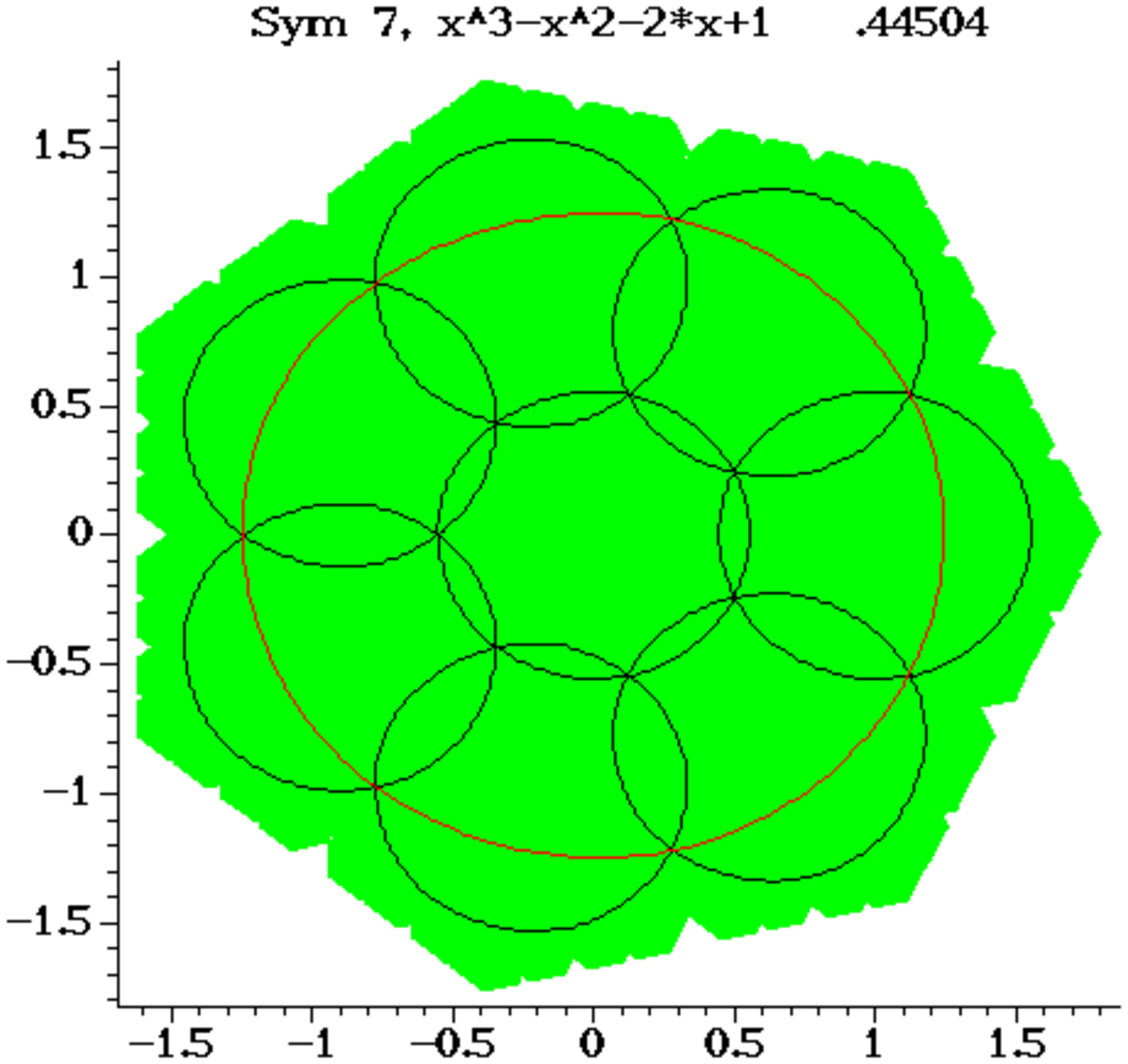}}
\subfigure[$\lambda$ with $n = 14$]{\includegraphics[scale=0.17]{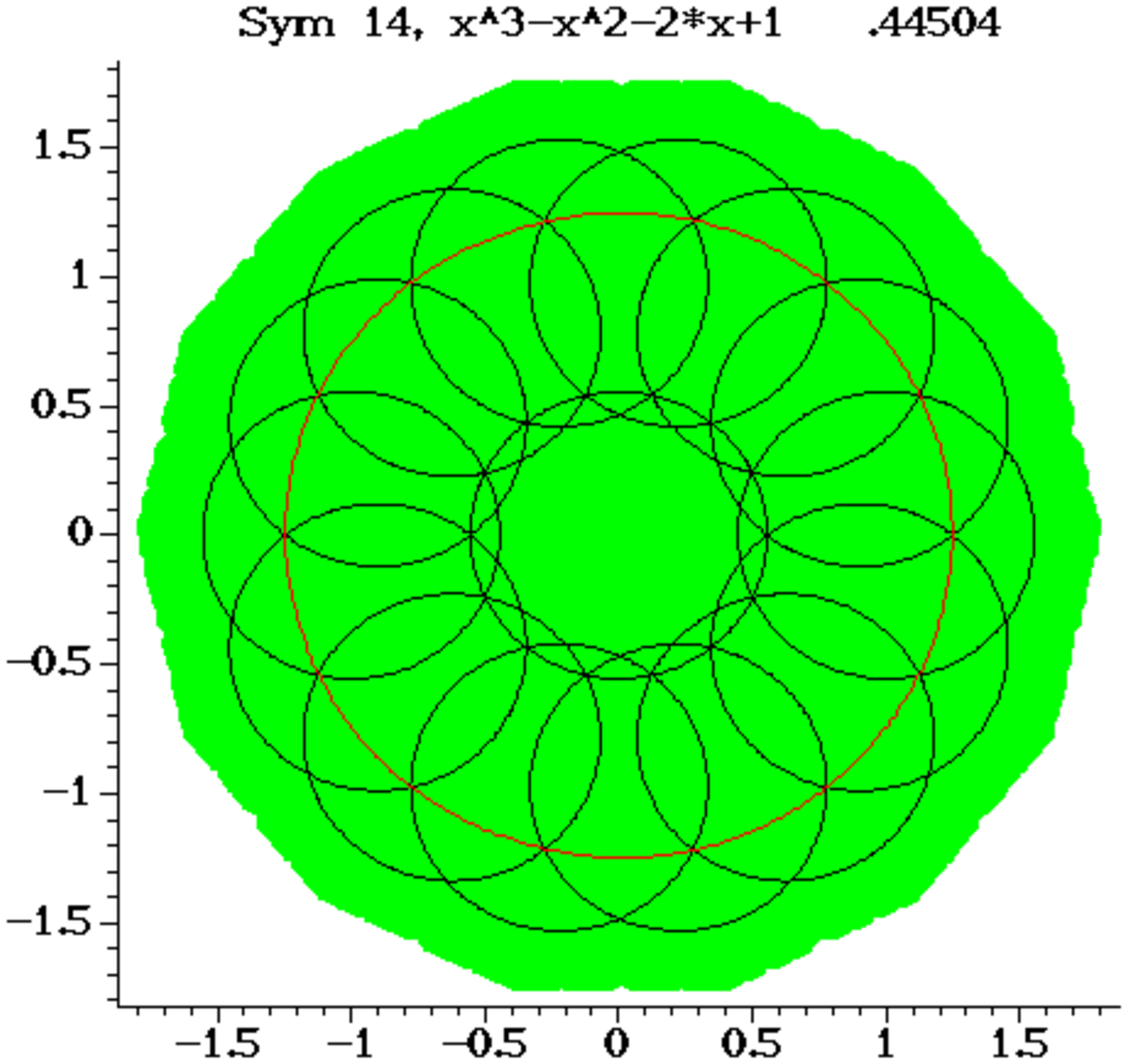}}
\subfigure[$\delta$ with $n = 8$]{\includegraphics[scale=0.17]{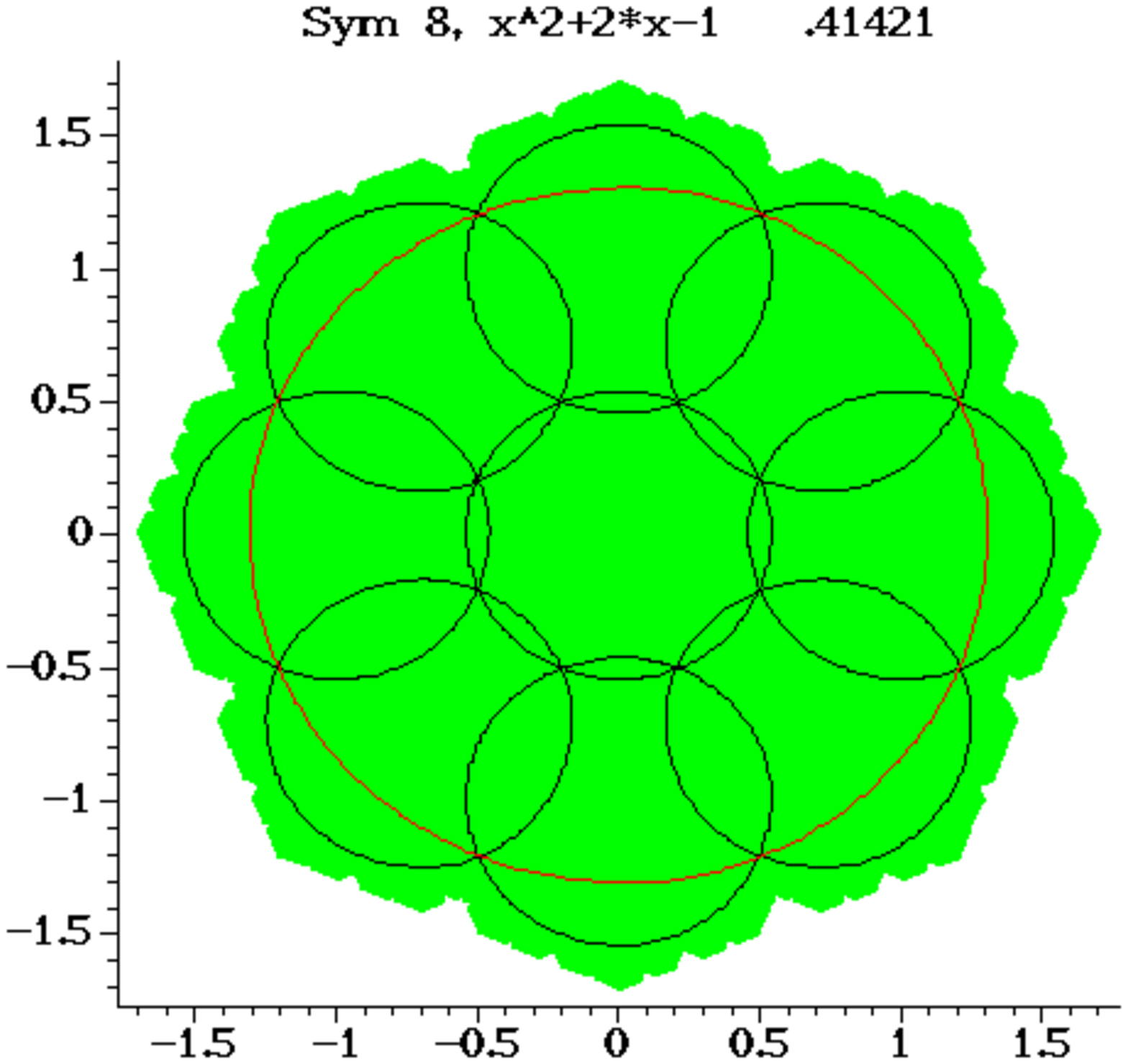}}
\\
\subfigure[$\kappa$ with $n = 18$]{\includegraphics[scale=0.17]{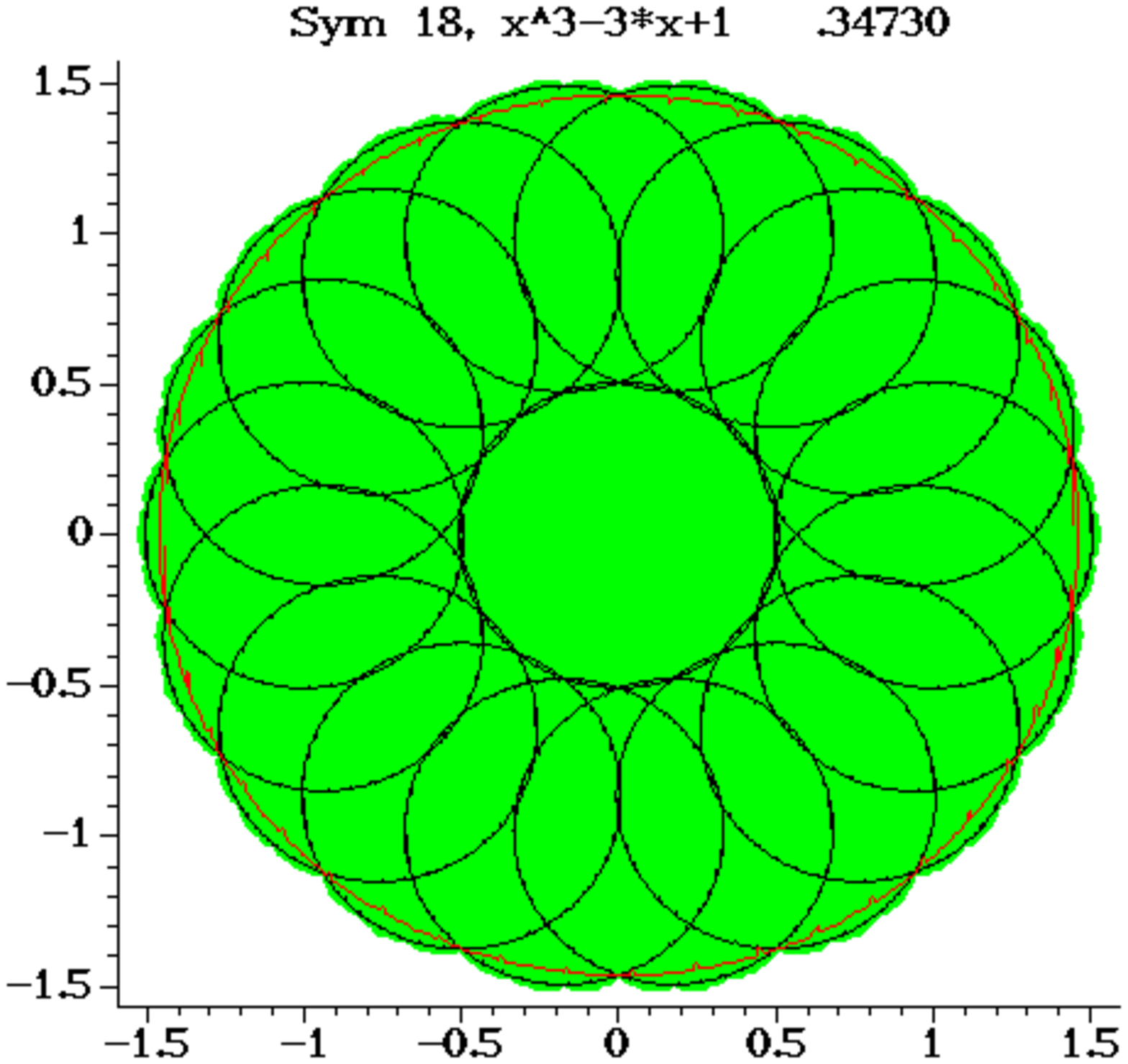}}
\subfigure[$\mu$ with $n = 12$]{\includegraphics[scale=0.17]{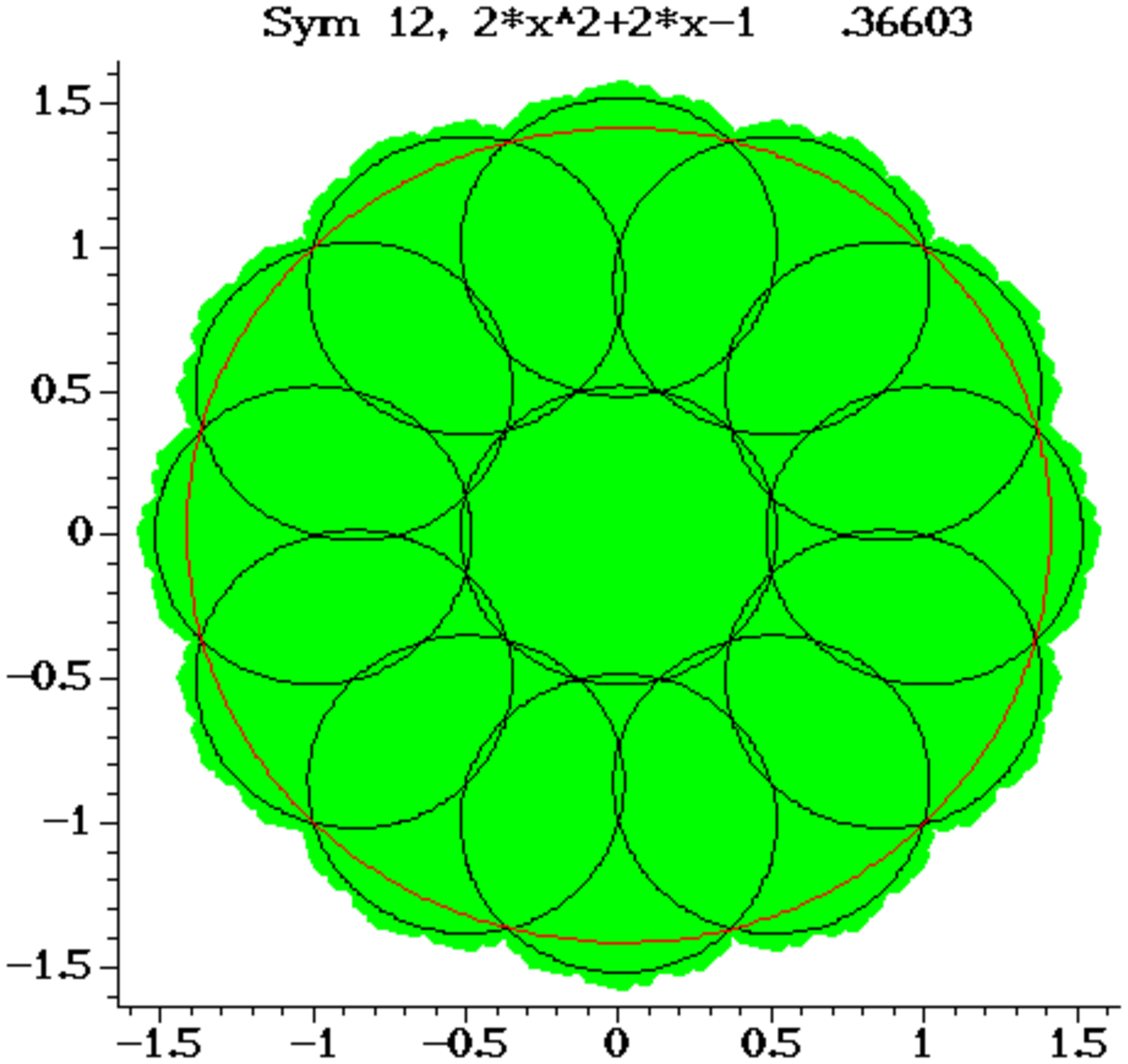}}
\caption{Ball of radius $R(\theta, \gamma)$ (red), the images of the ball of radius $R(\theta, \gamma)$ under the map $y \to y/\beta + a_i$ (black) and the attractor for this map (green).}
\label{fig:rad}
\end{figure}

Such a set has the following advantage for our computation.
Take a point $x\in X_{n+1}^\A(\beta)$ which belongs to $A_{n+1}$. 
Then there exists an $a\in\A$ such that $x \in (a\beta^{n+1} + A_n)\cap (a\beta^{n+1}+X_n^\A(\beta))$.
Therefore one has a corresponding point
$y\in X_{n}^\A(\beta)\cap A_{n}$ such that $x=a\beta^n+y$.

Consider the Voronoi tile of the point $y$ created by points $z_1,\dots,z_k$ of $X_{n}^\A(\beta)\cap A_n$.
Then the Voronoi tile of the point $x=y+a\beta^{n+1}$ is created by points $z_1+a\beta^{n+1},\dots,z_k+a\beta^{n+1}$, which lie in
$X_{n+1}^\A(\beta)\cap A_{n+1}$, and, perhaps, some additional points of $X_{n+1}^\A(\beta)$.
The Voronoi tile $V_x$ of $x$ with respect to $X_{n+1}^\A(\beta)$ is therefore cut from the Voronoi tile $V_y$ of $y$ with respect to $X_{n}^\A(\beta)$.
Consequently, the radius of $V_x$ is smaller or equal to the radius of $V_y$, which, in turn, is bounded from above by the maximum of radii of
tiles in the tiling of $X_{n}^\A(\beta)$ centered in points of $A_n$. We denote this maximum by $\Delta_n$.
We have thus demonstrated the following statement.

\begin{coro}
Let $\Delta_n$ be the maximum of the radii of tiles in the tiling of $X_n^\A(\beta)$ centered at
    points in $X_n \cap A_n$. Then $(\Delta_n)_{n\geq n_0}$ is a decreasing sequence such that
    $\Delta_n \to r_c$ as $n$ tends to infinity.

Moreover, as $X^\A(\beta)$ is of finite local complexity, there exists
    some $N_0$ such that $\Delta_n=r_c$ for all $n\geq N_0$.
\end{coro}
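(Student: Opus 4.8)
The plan is to establish the three assertions in turn: that $(\Delta_n)_{n\ge n_0}$ is non-increasing, that its limit equals $r_c$, and that this limit is in fact attained after finitely many steps. The monotonicity is already contained in the cutting argument preceding the statement: for every $x\in X_{n+1}^\A(\beta)\cap A_{n+1}$ one produces $y\in X_n^\A(\beta)\cap A_n$ with $x=a\beta^{n+1}+y$, and the Voronoi tile $V_x$ of $x$ with respect to $X_{n+1}^\A(\beta)$ is cut out of the translate $a\beta^{n+1}+V_y$ of the tile $V_y$ of $y$ with respect to $X_n^\A(\beta)$; hence its radius is at most that of $V_y$, and taking the maximum over all admissible $x$ gives $\Delta_{n+1}\le\Delta_n$. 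Being non-increasing and bounded below by $0$, the sequence converges to some $L:=\lim_n\Delta_n\ge 0$, and everything reduces to identifying $L$ with $r_c$.

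For the lower bound $L\ge r_c$ I would compare each finite tiling with the tiling of the full spectrum. Fix $x\in X_n^\A(\beta)\cap A_n$ and write $V^{(n)}(x)$, $V^{(\infty)}(x)$ for its Voronoi tiles with respect to $X_n^\A(\beta)$ and $X^\A(\beta)$; since $X_n^\A(\beta)\subseteq X^\A(\beta)$ we have $V^{(\infty)}(x)\subseteq V^{(n)}(x)$, so the radius $r_x$ of $V^{(\infty)}(x)$ is at most that of $V^{(n)}(x)$, whence $\Delta_n\ge\max\{r_x:x\in X_n^\A(\beta)\cap A_n\}$. Because $A_n=\beta^nB_R(0)$ increases to $\C$ and every fixed point of $X^\A(\beta)$ has finite degree, the sets $X_n^\A(\beta)\cap A_n$ exhaust $X^\A(\beta)$; together with the characterisation \eqref{eq:rc=rx} of $r_c$ as $\sup\{r_x\}$ this yields $\max\{r_x:x\in X_n^\A(\beta)\cap A_n\}\to r_c$, and therefore $L\ge r_c$.

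The opposite inequality is the delicate point. Let $x_n\in X_n^\A(\beta)\cap A_n$ realise $\Delta_n$ and let $v_n$ be the vertex of its tile with $|v_n-x_n|=\Delta_n$; being a Voronoi vertex of the $X_n^\A(\beta)$-tiling, the open disc $B_{\Delta_n}(v_n)$ meets $X_n^\A(\beta)$ in no point. If I can show that this disc also avoids $X^\A(\beta)\setminus X_n^\A(\beta)$, then $\mathrm{dist}(v_n,X^\A(\beta))=\Delta_n$, and since $r_c$ is the covering radius this forces $\Delta_n\le r_c$. The obstruction is that the new points created by one refinement, $X_{n+1}^\A(\beta)\setminus X_n^\A(\beta)=\bigcup_{a\ne0}(a\beta^{n+1}+X_n^\A(\beta))$, need not be geometrically far from $x_n$ when $|\sigma(\beta)|$ is close to $1$; this is exactly why the region is taken to be $A_n=\beta^nB_R(0)$ with $R=R(\theta,\gamma)$ as in \eqref{rtheta}. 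I would use the defining inclusion \eqref{eq:A-cr}, which says that the translated copies $a\beta^{n+1}+A_n$ cover $A_{n+1}$, to argue that any refinement point entering the neighbourhood of $x_n$ already lies in $X_n^\A(\beta)$, so that the tile of $x_n$ with respect to $X_n^\A(\beta)$ coincides with its tile with respect to the whole spectrum. Establishing this coincidence uniformly for the extremal centres $x_n$ is the main obstacle, and it is precisely where the value of $R$ is needed.

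Finally, I would assemble the pieces. The previous step gives $\Delta_n\le r_c$ for all $n$ past some $N_1$, while the exhaustion argument gives $\Delta_n\ge r_c$ for all $n$ past some $N_2$; setting $N_0=\max\{N_1,N_2\}$ yields $\Delta_n=r_c$ for $n\ge N_0$, which simultaneously proves $L=r_c$ and the ``Moreover'' clause. Here finite local complexity is used twice: through \eqref{eq:rc=rx} it guarantees that the supremum defining $r_c$ is attained, so that $r_c$ is a genuine tile radius available as the common value; and it ensures that only finitely many pointed patches of bounded radius occur, which is what makes the coincidence of the finite and infinite tiles in the previous step a finite, checkable condition.
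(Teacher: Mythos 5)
Your first two paragraphs are sound and in fact match everything the paper itself actually argues: the cutting argument via the parent map $x=a\beta^{n+1}+y$ supplied by \eqref{eq:A-cr} gives the monotonicity $\Delta_{n+1}\le\Delta_n$ (this is verbatim the paper's reasoning, which concludes ``we have thus demonstrated the following statement'' right after it), and your comparison $V^{(\infty)}(x)\subseteq V^{(n)}(x)$ together with the exhaustion of $X^{\A}(\beta)$ by the sets $X_n^{\A}(\beta)\cap A_n$ and the characterisation \eqref{eq:rc=rx} correctly yields $\lim_n\Delta_n\ge r_c$; since finite local complexity makes the supremum in \eqref{eq:rc=rx} attained at some $x^*$, which lies in $X_n^{\A}(\beta)\cap A_n$ for all large $n$, you even get $\Delta_n\ge r_c$ eventually.

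The genuine gap is the one you flag yourself: the upper bound $\lim_n\Delta_n\le r_c$. You reduce it to showing that, for the extremal centres $x_n$, the Voronoi cell computed from the finite set $X_n^{\A}(\beta)$ agrees with the cell computed from the full spectrum, i.e.\ that no point of $X^{\A}(\beta)\setminus X_n^{\A}(\beta)$ enters the relevant neighbourhood of $x_n$ --- and then you do not prove this, only remarking that \eqref{eq:A-cr} and the value of $R(\theta,\gamma)$ from \eqref{rtheta} ``should'' be what makes it work. But \eqref{eq:A-cr} only controls how $A_{n+1}$ is covered by the translates $a\beta^{n+1}+A_n$; it says nothing about where points of $X_{n+m}^{\A}(\beta)\setminus X_n^{\A}(\beta)$ for $m\ge 2$ can land, and for bases with $\beta<2$ (e.g.\ $\tau$) points of high degree can lie deep inside $A_n$, so the claimed coincidence of finite and infinite cells is not automatic. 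Your closing paragraph then asserts ``$\Delta_n\le r_c$ for all $n$ past some $N_1$'' as though it had been established, so both the convergence $\Delta_n\to r_c$ and the ``Moreover'' clause rest on an unproved step. (A partial repair: by finite local complexity each cell $V^{(n)}(x)$ is the Voronoi cell of $0$ with respect to one of the finitely many subsets of $(X^{\A}(\beta)-X^{\A}(\beta))\cap B_{2\Delta_{n_0}}(0)$, so the decreasing sequence $(\Delta_n)$ takes finitely many values and is eventually constant; identifying that constant with $r_c$ is, however, exactly the missing inequality.) To be fair, the paper's own proof stops at the monotonicity argument and asserts the rest, so you have reproduced its content; but as a proof of the statement as written, the upper bound is missing.
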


By inspecting tiles in the tiling of $X_{n}^\A(\beta)$, $n\geq 1$, one obtains a sequence of upper estimates
of the covering radius of $X^\A(\beta)$. In our computation, we were successful, after a few steps, in obtaining
an upper bound on $r_c$ which coincides with the lower bound obtained as explained earlier. Thus, we have derived the exact value of the
covering radius for all our cases. The values of $r_c$ of the spectra of quadratic and cubic Pisot-cyclotomic numbers are,
together with $n$ such that $r_c=\Delta_n$, displayed in Table~\ref{tab:covrad}.

\begin{table}
\renewcommand{\arraystretch}{1.3}
\begin{tabular}{llllllll}
\hline
Symmetry & $\beta$   & $\theta$          &  $R(\theta,\beta)$ & $r_c$ &  $n$\\
\hline
5        & $\tau$    & $\frac{2\pi}{10}$ &  1.6180339895 & .7639320250  & 6 \\
10       & $\tau$    & $\frac{2\pi}{10}$ &  1.6180339895 & .6498393940  & 3 \\
         & $\tau^2$  & $\frac{2\pi}{10}$ &  1.3763819202 & 1.051462225  & 1\\
7        & $\lambda$ & $\frac{2\pi}{14}$ &  1.2469796034 & 1.109916265  & 1 \\
14       & $\lambda$ & $\frac{2\pi}{14}$ &  1.2469796034 & 1.025716864  & 1\\
8        & $\delta$  & $\frac{2\pi}{16}$ &  1.3065629649 & 1.082392201  & 1 \\
18       & $\kappa$  & $\frac{2\pi}{36}$ &  1.4619022000 & 1.015426612  & 1 \\
12       & $\mu$     & $\frac{2\pi}{24}$ &  1.4142135622 & 1.035276182  & 1 \\
\hline\\
\end{tabular}
\caption{Pisot cyclotomic numbers, radius of interior ball for computation of the covering radius $r_c$ of the corresponding spectrum, the computed value of $r_c$. The last column shows the iteration needed to obtain the neighbourhood of the origin determining its Voronoi tile.}
\label{tab:covrad}
\end{table}


\begin{remark}
It can be checked that, in fact, the covering radius of the spectrum is in all our cases equal to the
radius $r_0$ of the Voronoi tile of the origin.
\end{remark}

\section{Local configurations}\label{sec:lc}

In a Delone set with finite local complexity, it is interesting to determine the number of local configurations
of a given size. In the most restricted form of this question, we are interested in the number of different Voronoi tiles.
Since a tile is given by a configuration of points within the distance at most $2r_c$,
we immediately have that the Voronoi tiling is composed of translates of at most $2^{|X^{\A-\A}(\beta) \cap B_{2 r_c}(0) |}$
   tiles.
Clearly, this estimate is not reasonable. The bounds on the number of prototiles in the Voronoi tiling of the spectra can be improved by inspecting the local configurations of a given size. This method can be quite computationally expensive, but it yields also the shapes of the appearing tiles. At least in certain cases, we provide a complete description of the tiles in the Voronoi tiling, see Table~\ref{tab:lc}.

Consider $x \in X^{\A}(\beta)$ and define its local configuration as
    \[ lc(x) = \left\{y - x \,:\, y \in X^{\A}(\beta), |y-x| < \frac{2}{\beta-1} \right\}. \]
We will say that two local configurations are equivalent if one is a rotation,
    or reflections of the second.

There are two key ideas about local configurations that are important:

\begin{claim}
The Voronoi tile $V(x)$ of $x$ is completely determined by the points in $lc(x)$.
\end{claim}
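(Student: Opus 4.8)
The plan is to show that the Voronoi tile $V(x)$ depends only on points of $X^\A(\beta)$ lying within distance $2r_c$ of $x$, and then to observe that all such points are recorded (up to translation by $x$) in $lc(x)$. The earlier discussion already established the crucial geometric fact: every vertex $v_i$ of $V(x)$ satisfies $|x-v_i|=r\leq r_c$, since the open disc of radius $r$ about $x$ contains no point of $\Lambda=X^\A(\beta)$. I would take this as my starting point.

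First I would argue that only points $y\in X^\A(\beta)$ with $|y-x|<2r_c$ can contribute a bounding edge to $V(x)$. Indeed, suppose the perpendicular bisector of the segment $[x,y]$ supports an edge of $V(x)$; then that edge contains some point $w$ equidistant from $x$ and $y$ with $w\in V(x)$, so $|x-w|\leq r_x\leq r_c$ by \eqref{eq:rc=rx}. The midpoint of $[x,y]$ lies on the bisector at distance $\tfrac12|y-x|$ from $x$, and since $w$ is the foot-or-beyond point realizing membership in the tile, a triangle-inequality estimate gives $|y-x|\leq |y-w|+|w-x| = |x-w|+|w-x| \leq 2r_c$. Hence any $y$ farther than $2r_c$ from $x$ has its bisector strictly outside the closed disc $\mathrm{cl}(B_{r_c}(x))$ that contains all vertices of the tile, so it imposes no active constraint and may be discarded.

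Next I would match this radius $2r_c$ against the definition of $lc(x)$. By Proposition~\ref{p:rep-rd} the covering radius obeys $r_c\leq M/(|\beta|-1)$, and for the polygonal alphabet $\A=\Delta_n\cup\{0\}$ one has $M=1$, so $r_c\leq 1/(\beta-1)$ and therefore $2r_c\leq 2/(\beta-1)$. Consequently every point $y$ that can influence $V(x)$ satisfies $|y-x|<2/(\beta-1)$ (strictly, since points exactly at the threshold cannot define an active edge by the argument above), which is precisely the condition defining $lc(x)$. Thus the set $\{\,y\in X^\A(\beta) : y-x\in lc(x)\,\}=x+lc(x)$ contains every point relevant to the tile, and $V(x)$ is cut out by the bisectors of $x$ with these points alone.

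The main obstacle is the boundary case: ensuring the non-strict versus strict inequalities line up so that no tile-defining neighbour is accidentally excluded by the strict bound $|y-x|<2/(\beta-1)$ in the definition of $lc(x)$. I would resolve this by noting that a neighbour at distance exactly $2r_c$ would force a vertex $v_i$ with $|x-v_i|=r_c$ lying exactly on the bisector midpoint, an extremal configuration that still leaves the tile unchanged when that single constraint is relaxed, so no active edge is lost. Combining these observations establishes that $V(x)$ is completely determined by $lc(x)$, as claimed.
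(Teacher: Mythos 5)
Your argument is correct and follows essentially the same route as the paper: the one-line proof there likewise combines \eqref{eq:rc=rx} (so that only neighbours within $2r_c$ can support an edge of $V(x)$) with the bound $r_c\leq M/(\beta-1)$ of Proposition~\ref{p:rep-rd}, which for $\A=\Delta_n\cup\{0\}$ gives $M=1$ and hence $2r_c\leq \frac{2}{\beta-1}$, the radius in the definition of $lc(x)$. You merely spell out the triangle-inequality step and the borderline case $|y-x|=2r_c$ (where the bisector meets $\mathrm{cl}(B_{r_c}(x))$ in a single point and so supports no edge), both of which the paper leaves implicit.
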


This observation follows from~\eqref{eq:rc=rx} and the upper bound on the covering radius $r_c$ obtained in Proposition~\ref{p:rep-rd}.

\begin{claim}
If $x=\beta z + a_i$ for some $z\in X^{\A}(\beta)$ and $a_i\in\A$, then the local configuration of $x$ can be completely determined
from the local configuration of $z$.
\end{claim}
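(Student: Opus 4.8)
The plan is to exploit the self-similar structure of the spectrum, specifically the relation $x = \beta z + a_i$, to transport the local configuration of $z$ to that of $x$. The starting observation is that multiplication by $\beta$ expands distances in the complex plane by the factor $\beta > 1$, while the Galois conjugate $\sigma(\beta)$ contracts. So if $y \in X^{\A}(\beta)$ is a neighbour of $x$, i.e.\ $|y - x| < \tfrac{2}{\beta-1}$, I would like to write $y = \beta w + a_i$ for a suitable $w \in X^{\A}(\beta)$ and then argue that $w$ is a neighbour of $z$, so that $lc(z)$ already records the relevant offset $w - z$.

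First I would take an arbitrary $y \in X^{\A}(\beta)$ with $|y - x| < \tfrac{2}{\beta-1}$ and attempt to divide: set $w := (y - a_i)/\beta$. The point is that $w - z = (y - a_i)/\beta - z = (y - (\beta z + a_i))/\beta = (y - x)/\beta$, hence $|w - z| = |y - x|/\beta < \tfrac{2}{\beta(\beta-1)} < \tfrac{2}{\beta-1}$, so $w$ would indeed be a neighbour of $z$ provided it lies in $X^{\A}(\beta)$. The shrinking of the radius by the factor $\beta$ is exactly what we want: it guarantees the offset we produce sits well inside the ball defining $lc(z)$. The map sending the offset $y - x$ to $w - z = (y-x)/\beta$ is the correspondence that lets us read off $lc(x)$ from $lc(z)$.

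The hard part will be showing that $w = (y - a_i)/\beta$ actually lands in $X^{\A}(\beta)$, rather than merely in $\Q(\omega)$. Since $y = \sum_{j=0}^k c_j \beta^j$ with $c_j \in \A$, division by $\beta$ after removing a constant term only works cleanly if the lowest digit $c_0$ happens to equal $a_i$; in general $w = \sum_{j=1}^k c_j \beta^{j-1} + (c_0 - a_i)/\beta$ need not have an admissible expansion. I expect the resolution is to argue at the level of local configurations up to the finitely many digit choices: because $X^{\A}(\beta)$ has finite local complexity (Corollary~\ref{thm:flc}) and $\beta X^{\A}(\beta) \subseteq X^{\A}(\beta)$, the neighbours of $x$ all arise, within the contracted ball, from the finitely many points of the form $\beta w + a$ with $a \in \A$ and $w$ ranging over the bounded set $lc(z) + z$; one then checks that each candidate neighbour $y$ is admissible by the membership test, and all the relevant data for this test are encoded in $lc(z)$ together with the knowledge of the digit $a_i$ and the base $\beta$.

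Concretely, I would proceed as follows. Enumerate the finite set $lc(z)$; for each offset $d \in lc(z)$ the point $z + d$ is a neighbour of $z$, and I claim every neighbour $y$ of $x$ is of the form $\beta(z + d) + a$ for some $d \in lc(z)$ (possibly $d = 0$) and some $a \in \A$ with $|y - x| < \tfrac{2}{\beta-1}$. Since $y - x = \beta d + (a - a_i)$, the finitely many pairs $(d,a)$ yielding a value of modulus below $\tfrac{2}{\beta-1}$ are determined purely by $lc(z)$, $\beta$, and $\A$; filtering these pairs by the admissibility of $\beta(z+d) + a$ as an element of $X^{\A}(\beta)$ recovers $lc(x)$ exactly. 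The only genuinely nontrivial verification is that no neighbour of $x$ escapes this list, which follows because any $y \in X^{\A}(\beta)$ within distance $\tfrac{2}{\beta-1}$ of $x$ has $(y - a_i)/\beta$ within distance $\tfrac{2}{\beta(\beta-1)}$ of $z$, forcing $(y-a_i)/\beta \in X^{\A}(\beta)$ to be one of the neighbours catalogued in $lc(z)$.
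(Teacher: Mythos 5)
There is a genuine gap, and you in fact flagged it yourself before failing to close it. Your final sentence justifies the completeness of the candidate list by asserting that any neighbour $y$ of $x$ has $(y-a_i)/\beta$ within distance $\tfrac{2}{\beta(\beta-1)}$ of $z$ and that this point is ``forced'' to lie in $X^{\A}(\beta)$. But membership of $(y-a_i)/\beta$ in $X^{\A}(\beta)$ is exactly the obstruction you identified in your third paragraph: it holds only when the least-significant digit of $y$ happens to equal $a_i$. Being metrically close to $z$ does not force membership in the spectrum, so the argument as written is circular --- the ``only genuinely nontrivial verification'' is discharged by restating the unproved claim.

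The repair, which is the paper's actual proof, is to divide out $y$'s \emph{own} least-significant digit rather than $a_i$: write $y=\beta w+a_j$ with $a_j\in\A$ and $w\in X^{\A}(\beta)$, which is always possible by stripping the constant term of any expansion of $y$. You then lose the clean contraction of the offset by the factor $\beta$; instead
\[
\frac{2}{\beta-1}>|y-x|=|\beta(w-z)+a_j-a_i|\geq \beta|w-z|-|a_j-a_i|\geq \beta|w-z|-2,
\]
since all digits have modulus at most $1$, whence $|w-z|<\bigl(\tfrac{2}{\beta-1}+2\bigr)/\beta=\tfrac{2}{\beta-1}$. The radius $\tfrac{2}{\beta-1}$ is chosen precisely because it is the fixed point of $r\mapsto (r+2)/\beta$, so the offset does not shrink to $\tfrac{2}{\beta(\beta-1)}$ as you claimed, but it does stay inside the ball defining $lc(z)$, which is all that is needed. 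With this substitution your enumeration of candidates $\beta(z+d)+a$, $d\in lc(z)$, $a\in\A$, filtered by the distance condition, is correct and matches the paper's argument.
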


\pfz
To see the second fact, we need to show that if $y-x$ is in $lc(x)$ of $x=\beta z + a_i$, then we can write $y=\beta w+a_j$
for some $a_j\in\A$ and $w\in X^{\A}(\beta)$ such that $w-z$ belongs to $lc(z)$. This is easy to verify:
As $y\in X^{\A}(\beta)$, there exist $w\in X^{\A}(\beta)$, $a_j\in\A$ such that $y=\beta w+a_j$.
We have
$$
\frac{2}{\beta-1}>|y-x|=|\beta(w-z)+a_j-a_i|\geq \beta|w-z| - |a_j-a_i| > \beta|w-z| - 2\,,
$$
from which we can derive that
$$
|w-z|<\frac{2}{\beta-1}\,.
$$
Therefore $w\in lc(z)$, as required.
\pfk

We observe that $\max |a_i - a_j| < 2$ for $\A_5$ and $\A_7$, and hence one could consider local configurations of smaller radius than
$\frac{2}{\beta-1}$ in these two cases.

It follows that local configurations of all points $x\in X^{\A}(\beta)$ are in this sense descendants of the local configuration of the origin.
This suggests the following algorithm for computation of all local configurations in the spectrum $x\in X^{\A}(\beta)$.

\begin{itemize}
\item
Start with $lc(0)=X^{\A}(\beta)\cap B_0(\frac{2}{\beta-1})$.

\item
From $lc(z)$, compute for $a_i\in A$ the local configuration $lc(\beta z+a_i)$  as a set of those points $y$
in $\bigcup_{a\in\A}\beta lc(z) + a$, that are in distance $\frac{2}{\beta-1}$ from $\beta z+a_i$.

\item
Repeat this process until finding the complete family of local configurations in the spectrum. The procedure must stop, since $X^{\A}(\beta)$
is of finite local complexity.
\end{itemize}

\begin{ex}
For example for $\beta = \tau^2$ and $\A = \A_{10}$ the local configuration
    of $0$ is $\{0, 1, \omega, \omega^2, \dots, \omega^9\}$.
In fact, up to equivalents (rotation/reflection), there are 20 different local configurations.
The algorithm described above gives rise to the diagram illustrated in Figure \ref{fig:lc}.
These 20 local configurations generate only five distinct tiles.

\begin{figure}
\includegraphics[scale=0.4]{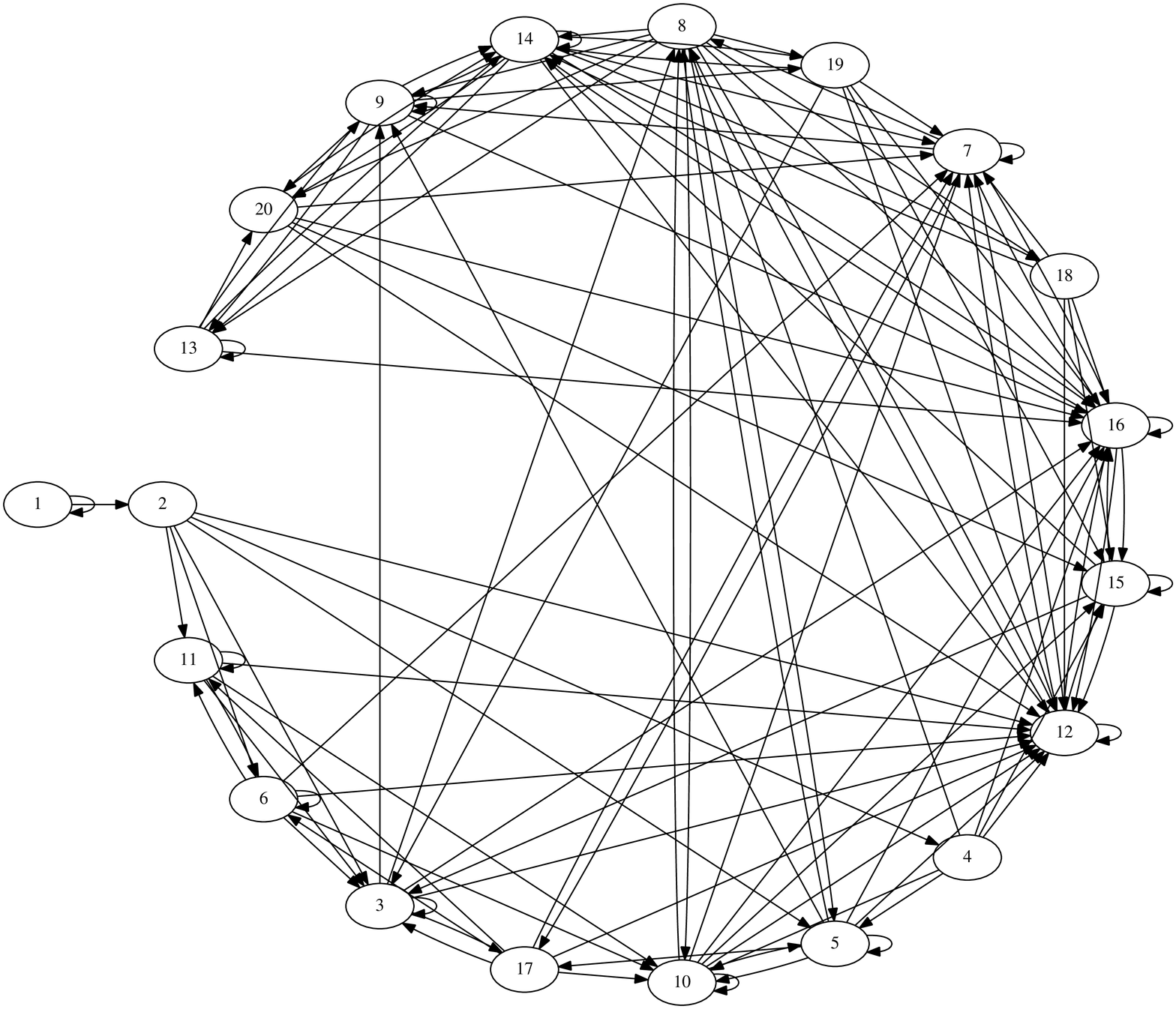}
\caption{Local configuration diagram for $\beta=\tau^2$ with $\A_{10}$}
\label{fig:lc}
\end{figure}
\end{ex}

Table~\ref{tab:lc} shows the results of computations of local configurations in the considered cases of
spectra of Pisot-cyclotomic numbers. Together with the number of local configurations we provide also the number
of non-equivalent Voronoi tiles that these configurations generate. It turns out that only symmetry 5, 10, and 8
gives rise to `reasonable' tilings with low number of different prototiles.

\begin{table}
\renewcommand{\arraystretch}{1.2}
\begin{tabular}{llll}
\hline
Symmetric  & $\beta$ & Number of configurations & Number of tiles \\
\hline
5        & $\tau$    & $7823$         & $12$ \\
10       & $\tau$    & $3818$         & $5$  \\
10       & $\tau^2$  & $20$           & $5$  \\
7        & $\lambda$ & $\geq 279$     & $\geq 201$ \\
14       & $\lambda$ & $\geq 815$     & $\geq 189$ \\
8        & $\delta$  & $26$           & $5$        \\
18       & $\kappa$  & $\geq 881$     & $\geq 154$ \\
12       & $\mu$    & $\geq 1002$     & $\geq 104$ \\
\hline\\
\end{tabular}
\caption{Pisot cyclotomic numbers \& the number of tiles}\label{tab:lc}
\end{table}

\section{Voronoi tilings for the quadratic and cubic cases}

Let us summarize the knowledge about the spectra in individual cases of quadratic and cubic Pisot-cyclotomic numbers.
We will consider only the cases which lead a relatively dense set, as given in Table~\ref{tab:rd}.

In Section~\ref{sec:lc}, we have provided a method for finding prototiles forming the Voronoi tiling of the spectra. In cases of
symmetry 12, 7, 14 and 18, we have shown that the corresponding tilings are rather complex, with the number of shapes
of prototiles exceeding a hundred. On the other hand, for symmetry 5 ($\tau$), 10 ($\tau$, $\tau^2$) and 8 ($\delta$), our method
yielded a complete list of prototiles. Moreover, the spectra $X^{\A_{10}}(\tau)$, $X^{\A_{10}}(\tau^2)$ and $X^{\A_{8}}(\delta)$
also have the interesting property, that they correspond to a cut-and-project set with simply connected acceptance window,
Therefore, one can consider easily the relation of the shape of the Voronoi cell of $x$ to the position of the image of $x$
in the acceptance window. These are quadratic cases, where, by Theorem~\ref{thm:conj=ifs},
$\Omega\subseteq {\rm int}\big(K(\sigma(\beta),\A)\big)$, $\sigma$ denoting the unique automorphism of $\Q(\omega)$
with non-trivial action on $\Q(\beta)$.

Suppose that the Voronoi cell of $x$ is given by the neighbours $z_1,z_2,\dots,z_k$ of $x$. We have
$\sigma(x),\sigma(z_i)\in \Omega$, $i=1,\dots,k$. All points $y=x+t\in X^{\A}(\beta)$ such that
$\sigma(y),\sigma(z_i)+\sigma(t)\in\Omega$, $i=1,\dots,k$, will either have the same type of Voronoi cell, or their Voronoi
cell will be just a `cut' of $V(x)$.
Therefore the acceptance window $\Omega$ will be divided into regions corresponding
to prototiles, such that for any $\sigma(y)$ in a given region, $y$ has the corresponding tile as its Voronoi cell.
These regions are bounded by intersections of the boundary of several suitably shifted copies of $\Omega$.
This phenomenon is illustrated in Figures~\ref{fig:tiles-accw10a},~\ref{fig:tiles-accw10b}, and~\ref{fig:tiles-accw8}.
On the other hand, Figure~\ref{fig:tiles-accw5a} shows that in case there are some missing points in the acceptance window, the situation is much more complicated.

\subsection{Case $\boldsymbol{n=5}$, $\boldsymbol{\beta=\tau}$}

 Consider $K(\sigma(\tau), \A_5)$, see Figure \ref{fig:missingquadr}.
    In Section~\ref{sec:cap}, we have seen that the spectrum with pentagonal alphabet does not correspond to a cut-and-project set with simply connected acceptance window, since there are points in $\Sigma(\Omega)\setminus X^{\A_5}(\tau)$, where $\Omega=\mathrm{int}(K(\sigma(\tau),A_{5}))$. Nevertheless, the tiling of this case of spectra can be described, see Figure~\ref{fig:tiles-accw5a}. Surprisingly, there are only 12 prototiles.

\begin{figure}[ht]
\subfigure[$\sigma(X^{\A_{5}}(\tau))$]{\includegraphics[scale=0.25]
    {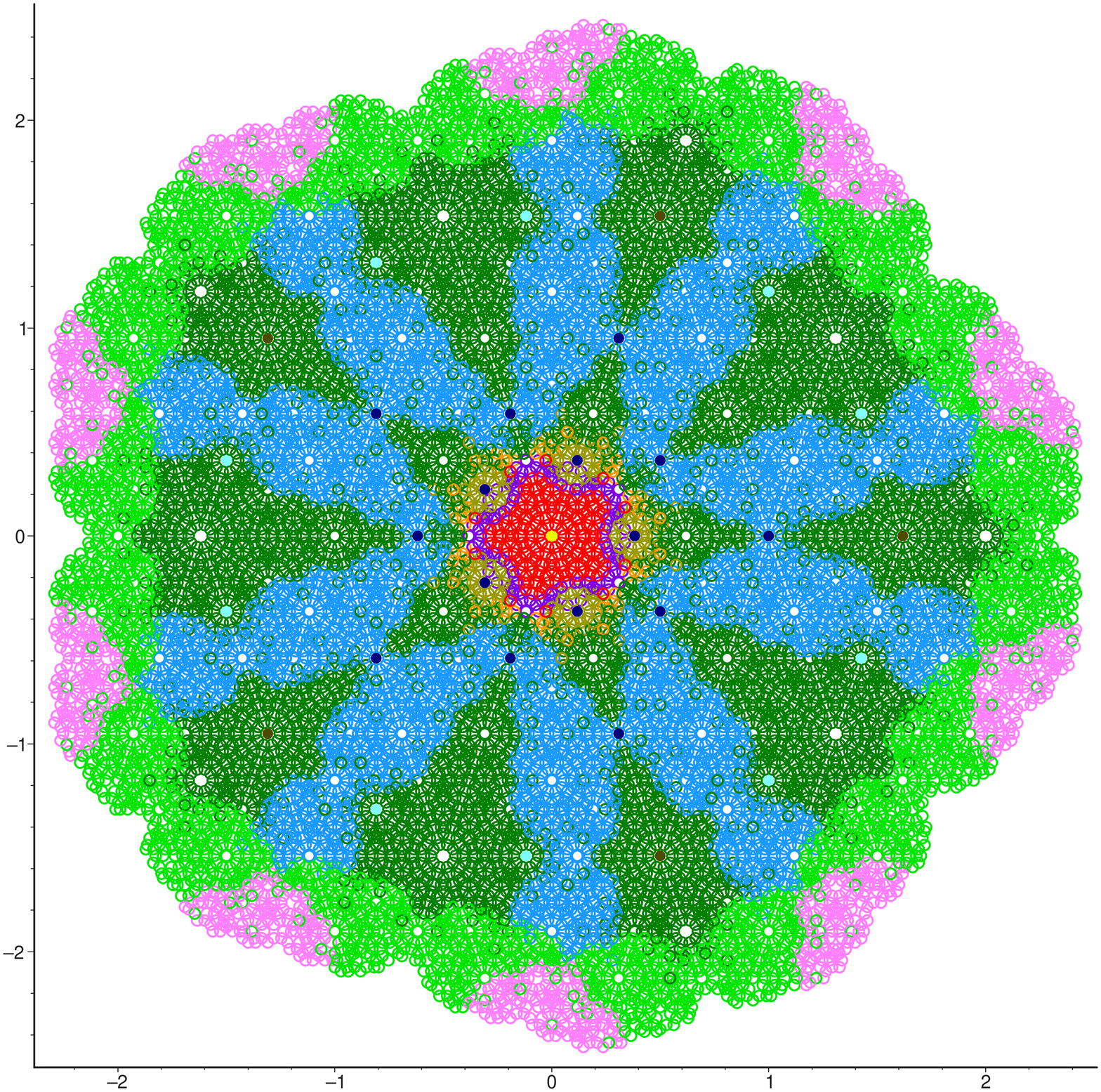}}
\subfigure[Tiling of $X^{\A_{5}}(\tau)$]{\includegraphics[scale=0.25]
    {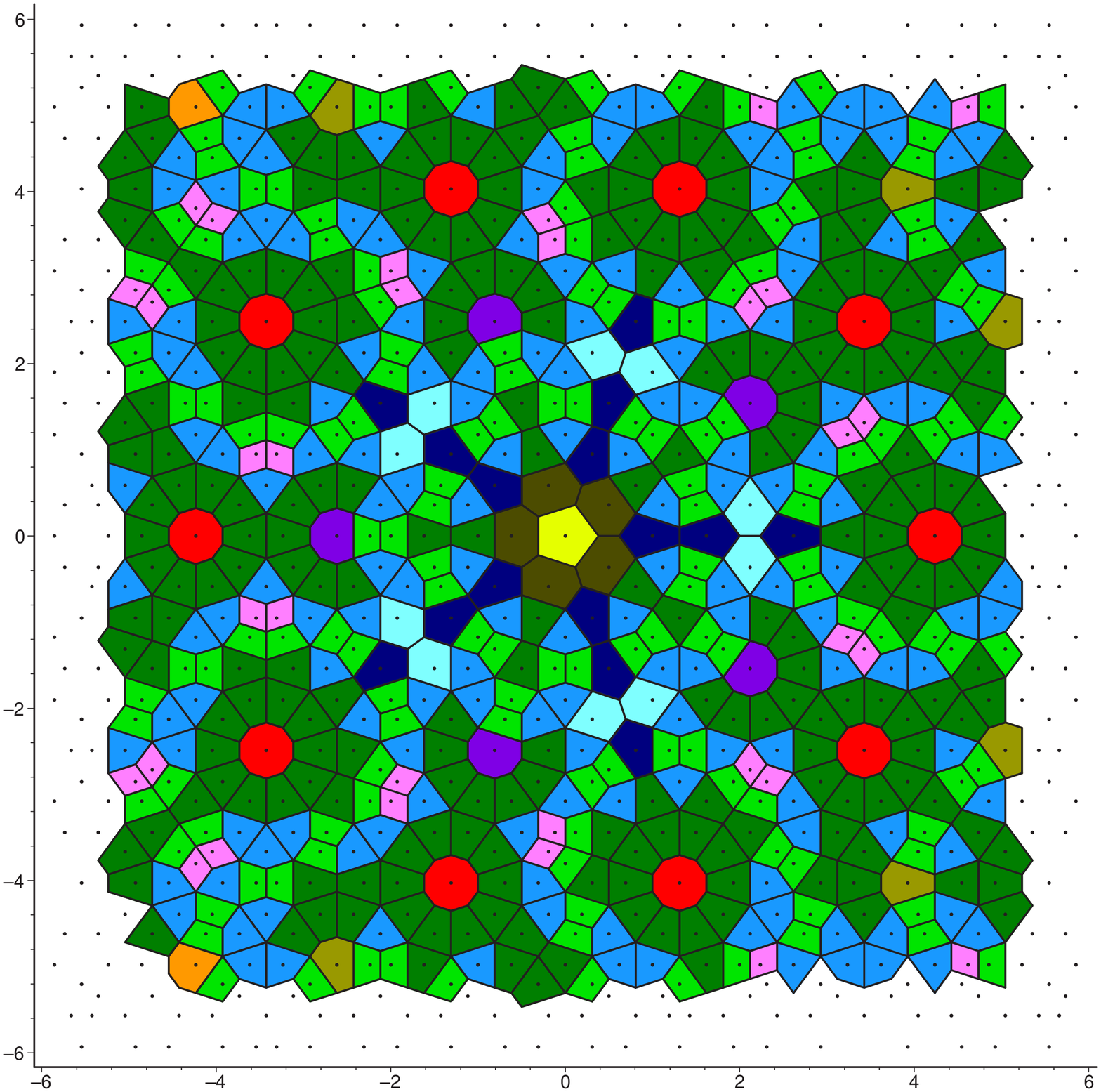}}
\subfigure[The 12 prototiles of $X^{\A_{5}}(\tau)$]{\includegraphics[scale=0.5]
    {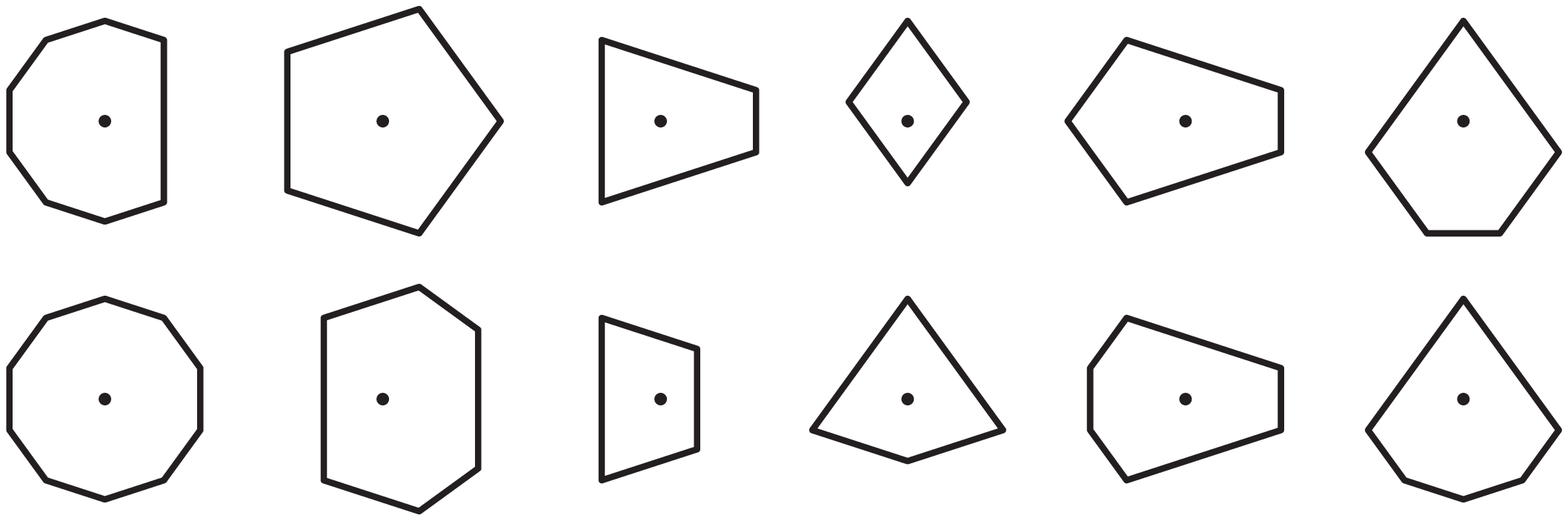}}
\caption{Tiling and acceptance window for $5$-fold symmetry and $\beta=\tau$. The colors of tiles coincide with
the colors of the corresponding regions in the acceptance windows.}
\label{fig:tiles-accw5a}
\end{figure}


\newpage
\subsection{Case $\boldsymbol{n=10}$, $\boldsymbol{\beta=\tau}$}

 Consider $K(\sigma(\tau), \A_{10})$, see Figure \ref{fig:missingquadr}.
 In Section~\ref{sec:cap}, we have seen that setting $\Omega=\mathrm{int}(K(\sigma(\tau),A_{10}))$, there are no missing points, i.e. the spectrum of $\tau$ with decagonal alphabet
 satisfies
 $$
 \Sigma(\Omega)\subseteq X^{\A_{10}}(\tau)\subseteq  \Sigma(\mathrm{cl}(\Omega)).
 $$
 Moreover, since $K(\sigma(\tau),A_{10})=K(1/\tau,A_{10})$, we can derive from Theorem~\ref{thm:ekvivalence} that points of $\sigma(X^{\A_{10}}(\tau))$ do not lie on the boundary of $K(\sigma(\tau),A_{10})$, thus we conclude that
 $$
  X^{\A_{10}}(\tau)=\Sigma(\Omega)\,,
 $$
 i.e. the acceptance window is an open decagon of radius $\tau^2$.

 The tiling of this case of spectra is illustrated in Figure~\ref{fig:tiles-accw10a}. There are 5 prototiles. Let us mention that the tiling for $\Sigma(\Omega)$  for this case was already described in~\cite{MaPaZi}. It can be shown that only 3 of the 5 prototiles appear with non-zero density
in the tiling.

\begin{figure}[ht]
\subfigure[$\sigma(X^{\A_{10}}(\tau))$]{\includegraphics[scale=0.25]
    {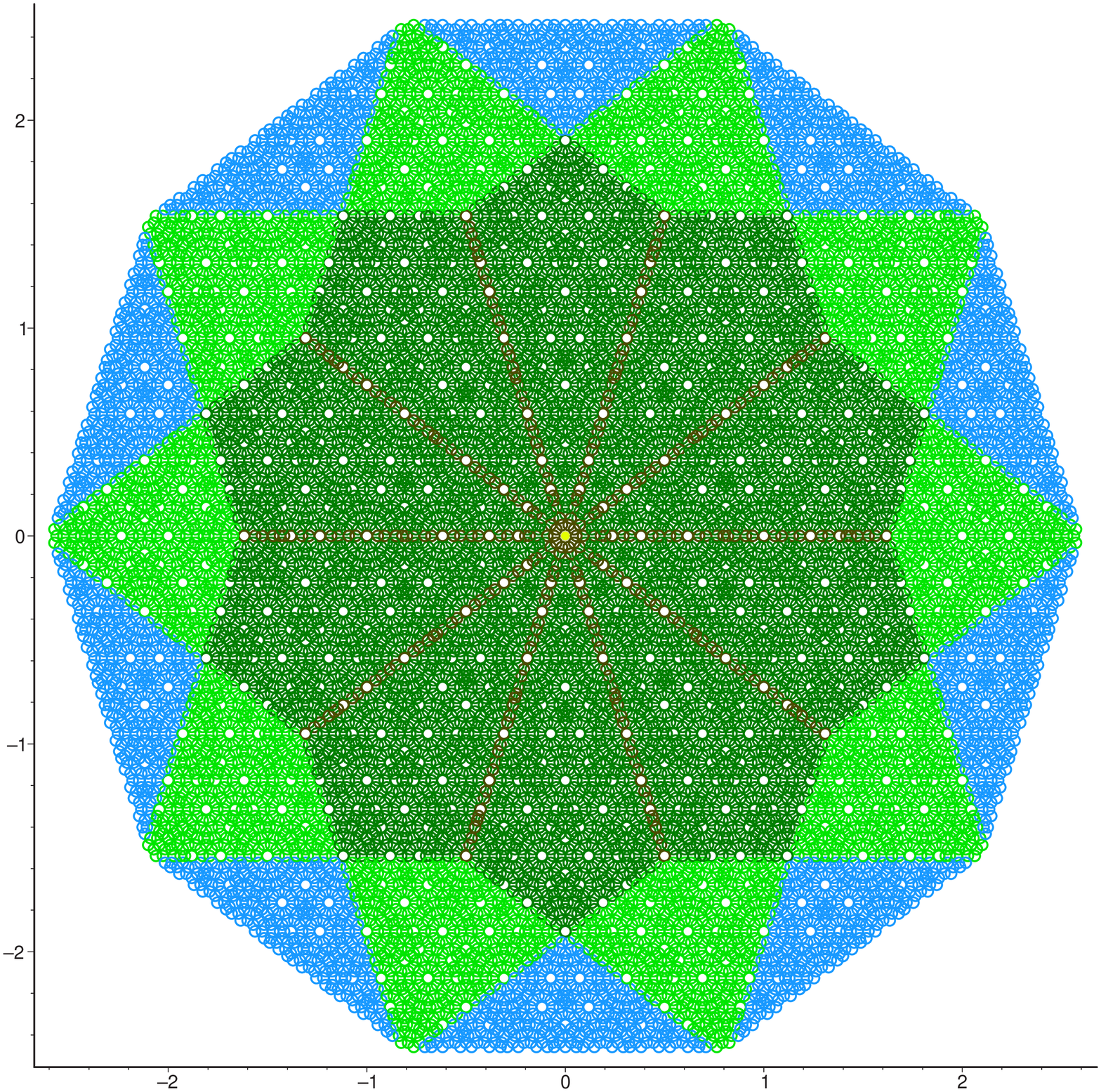}}
\subfigure[Tiling of $X^{\A_{10}}(\tau)$]{\includegraphics[scale=0.25]
    {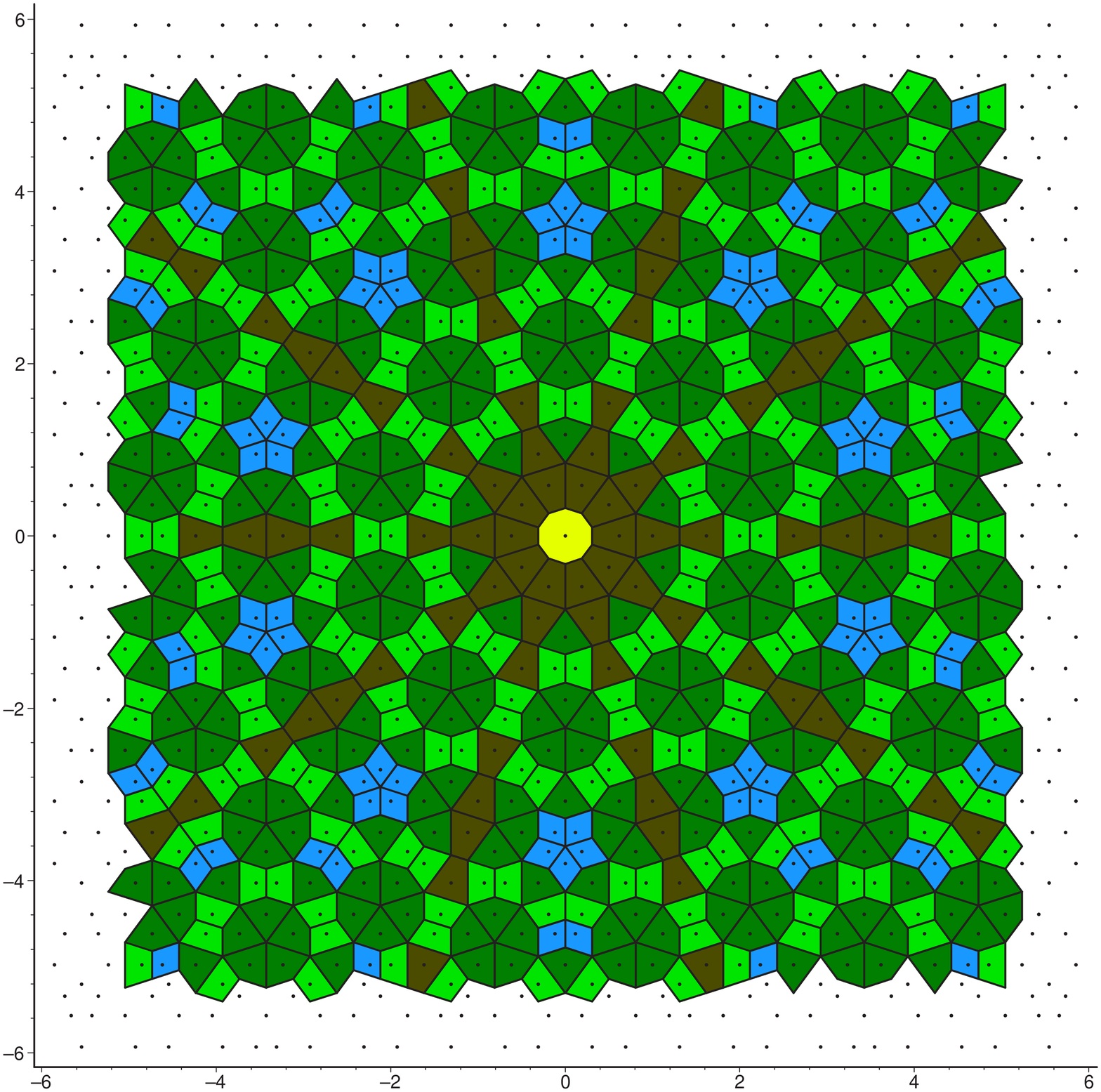}}
\subfigure[The 5 prototiles of $X^{\A_{10}}(\tau)$]{\includegraphics[scale=0.5]{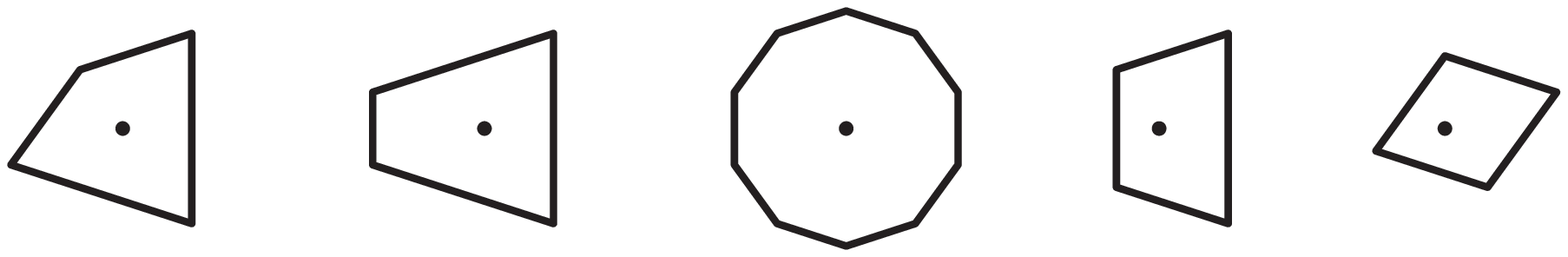}}
\caption{Tiling (a), acceptance window (b), and shapes of prototiles (c) for $10$-fold symmetry and $\beta=\tau$.
The colors of tiles coincide with the colors of the corresponding regions in the acceptance windows.}
\label{fig:tiles-accw10a}
\end{figure}


\newpage
\subsection{Case $\boldsymbol{n=10}$, $\boldsymbol{\beta=\tau^2}$}

Consider $K(\sigma(\tau^2), \A_{10})$, see Figure \ref{fig:missingquadr}.
By a similar argumentation as for $n=10$ and $\beta=\tau$, we can derive that this case of the spectra corresponds to a cut-and-project set with simply connected acceptance window which now has fractal nature. We have
 $$
X^{\A_{10}}(\tau^2)= \Sigma(\Omega),\qquad\text{ where }\quad\Omega=\mathrm{int}(K(\sigma(\tau^2),A_{10}))\,.
 $$
The tiling of this case of spectra is illustrated in Figure~\ref{fig:tiles-accw10a}. Again, there are 5 prototiles, four of which appear with non-zero density in the tiling. The decagonal tile appears only once.

\begin{figure}[ht]
\subfigure[$\sigma(X^{\A_{10}}(\tau^2))$]{\includegraphics[scale=0.25]
    {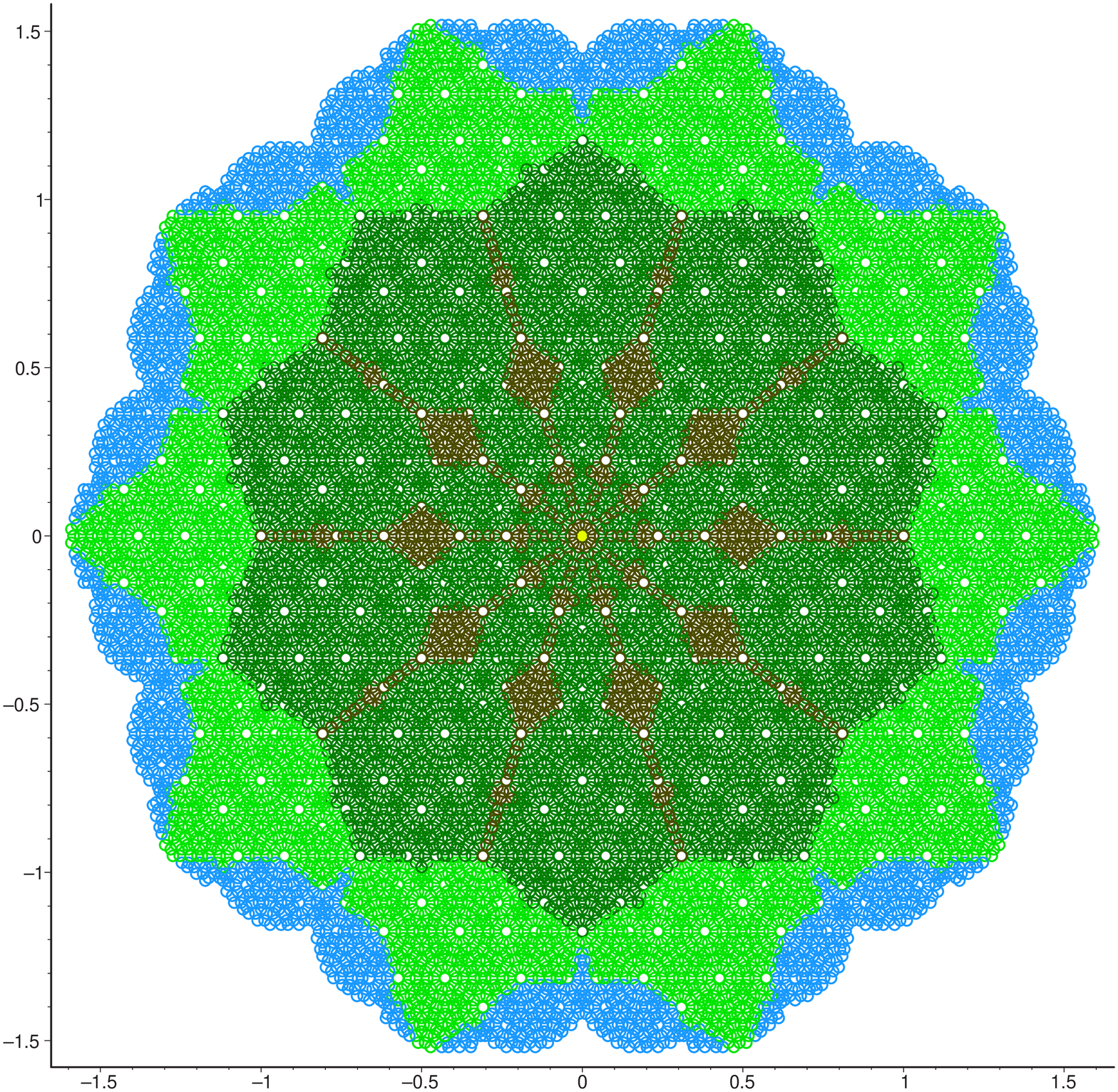}}
\subfigure[Tiling of $X^{\A_{10}}(\tau^2)$]{\includegraphics[scale=0.25]
    {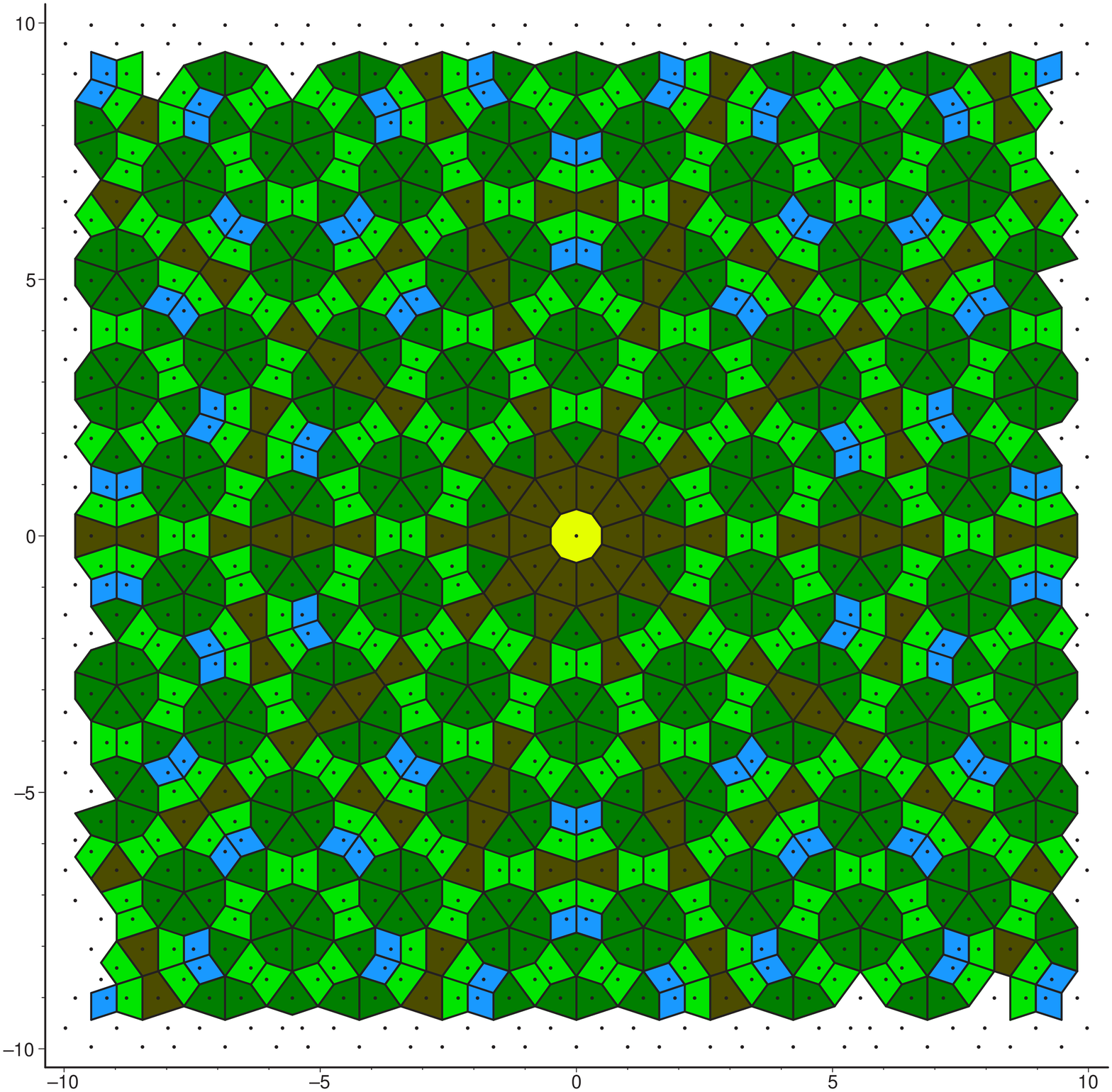}}
\subfigure[The 5 prototiles of $X^{\A_{10}}(\tau^2)$]{\includegraphics[scale=0.5]{pokus10a}}
\caption{Tiling (a), acceptance window (b), and shapes of prototiles (c) for $10$-fold symmetry and $\beta=\tau^2$.
The colors of tiles coincide with the colors of the corresponding regions in the acceptance windows.}
\label{fig:tiles-accw10b}
\end{figure}


\newpage
\subsection{Case $\boldsymbol{n=8}$, $\boldsymbol{\beta=\delta}$}

Consider $K(\sigma(\delta), \A_8)$, see Figure \ref{fig:missingquadr}.
Here,
\begin{equation}\label{eq:8}
X^{\A_8}(\delta) =
\Sigma(\mathrm{int}(K(\sigma(\delta),A_{8}))) = \Sigma(K(\sigma(\delta),A_{8})) \,.
\end{equation}
This follows from
\begin{equation}
X^{\A_8}(\delta) \subseteq
\Sigma(\mathrm{int}(K(\sigma(\delta),A_{8}))) \subseteq \Sigma(K(\sigma(\delta),A_{8})) \,,
\end{equation}
where the first inclusion uses the fact that $K(\sigma(\delta), \A_8) =K(1/\delta, \A_8) $, and item 4 of Theorem~\ref{thm:ekvivalence}. Moreover, we know that $X^{\A_8}(\delta) = \Sigma(K(\sigma(\delta),A_{8})$ due to Corollary~\ref{thm:spectra=cnp}.

It it surprising but true, that the inclusion or exclusion of the
    boundary makes no difference to the cut and project method in this case.
This is because points on the boundary do not correspond to algebraic
    integers in $\Z[\omega]$.
For example, the point on the real line with maximal absolute value
    is equal to $1 - \sigma(\delta) + \sigma(\delta)^2 - \cdots
    = \frac{1}{1+ \sigma(\delta)}\notin \Z[\omega]$.


The tiling of $X^{\A_8}(\delta)$ has exactly 5 distinct types of tiles, all of them appearing with non-zero density.

\begin{figure}[ht]
\subfigure[$\sigma(X^{\A_{8}}(\delta))$]{\includegraphics[scale=0.25]
    {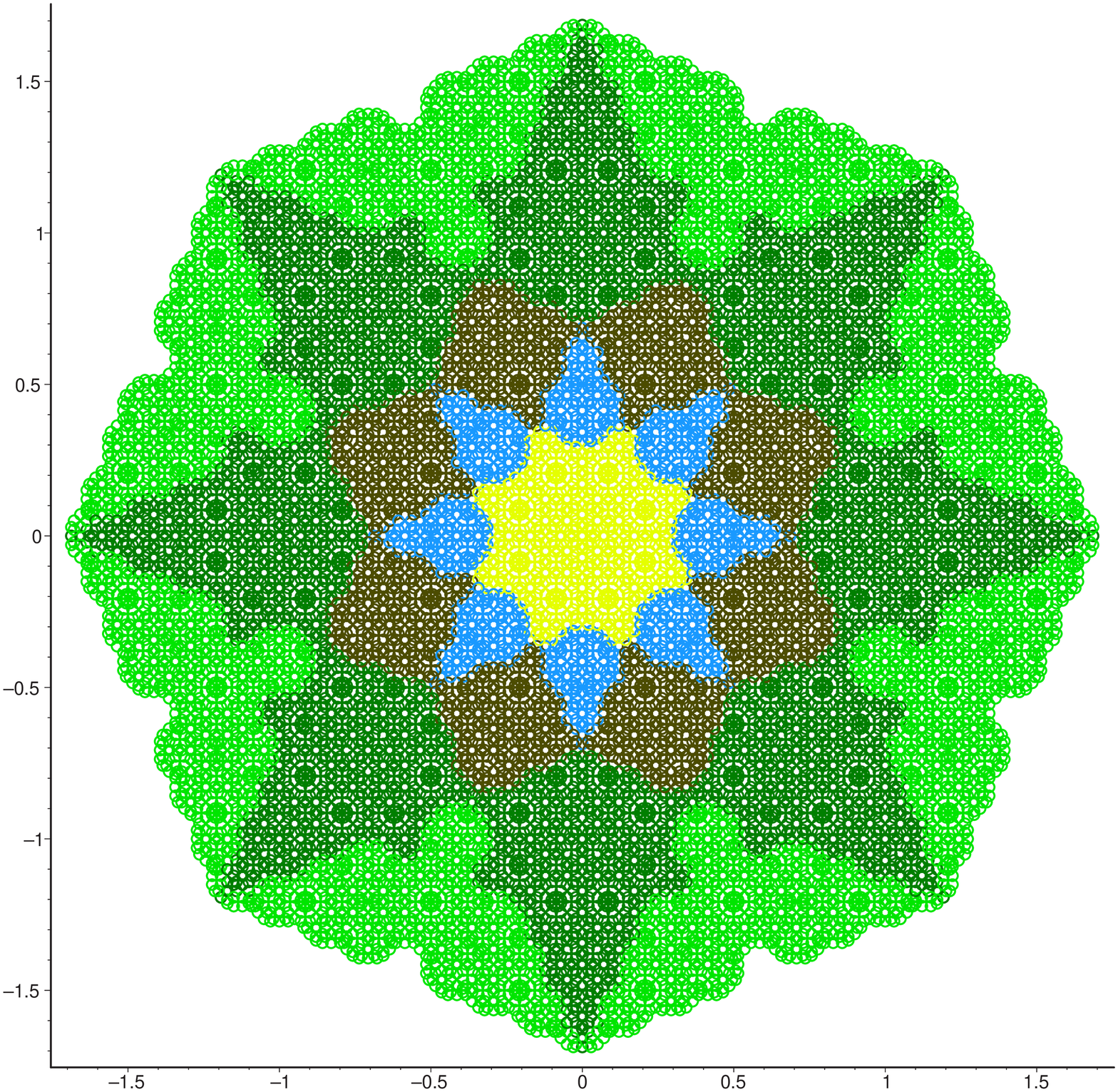}}
\subfigure[Tiling of $X^{\A_{8}}(\delta)$]{\includegraphics[scale=0.25]
    {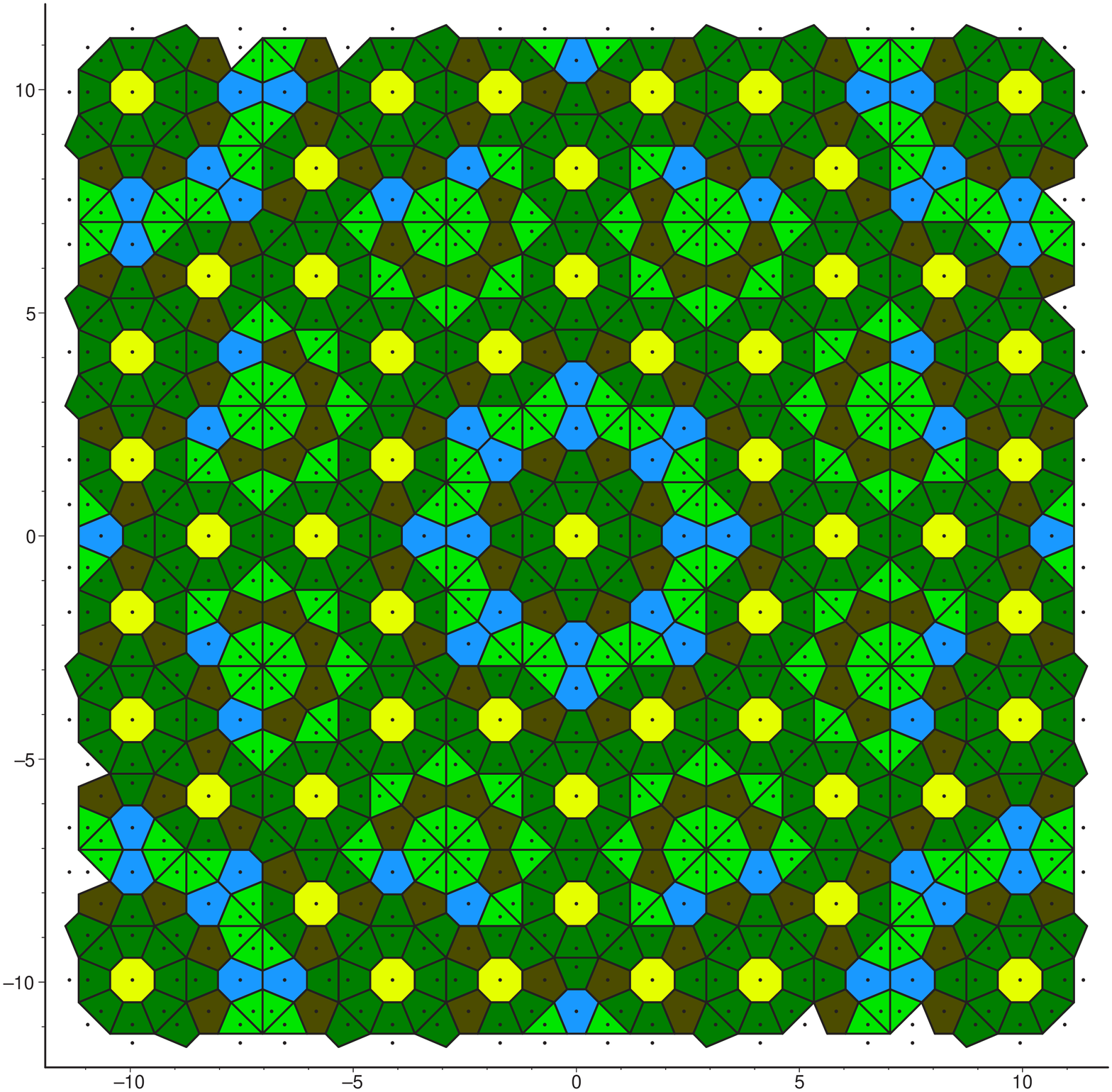}}
\subfigure[The 5 prototiles of $X^{\A_{8}}(\tau)$]{\includegraphics[scale=0.5]{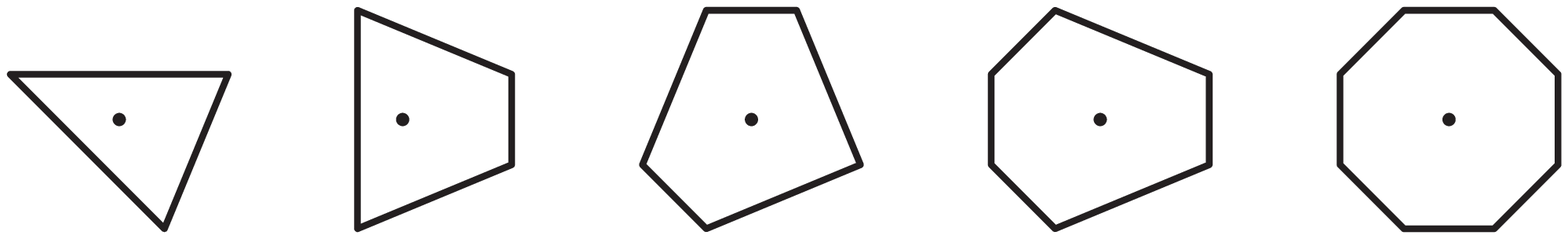}}
\caption{Tiling (a), acceptance window (b), and shapes of prototiles (c) for $8$-fold symmetry and $\beta=\delta$. The colors of tiles coincide with
the colors of the corresponding regions in the acceptance windows.}
\label{fig:tiles-accw8}
\end{figure}

\vspace*{4cm}

\newpage
\subsection{12-fold symmetry}

Consider $K(\sigma(\mu), \A_{12})$  (see Figure \ref{fig:missingquadr}).
In this case $\beta=\mu$ is not a unit, so Corollaries~\ref{thm:spectra=cnp} and~\ref{thm:spectra=cnpint} do not apply, nevertheless, we have found points in $\Sigma(\mathrm{int}(K(\sigma(\delta),A_{12})))\setminus X^{\A_{12}}(\delta)$, which means that this case of the spectra does not correspond to a cut-and-project set with simply connected acceptance window. This fact also reflects in that the tile set of this spectrum contains over hundred distinct types of tiles.


\subsection{7-, 14- and 18-fold symmetry}

Consider the candidates for the acceptance windows, according to Remark~\ref{rem:cap} and Figure \ref{fig:missingcub}. By an analysis similar to that for quadratic cases, we derive that
the spectrum cannot be identified with a cut-and-project set with reasonable acceptance window.
Our computation shows that also the Voronoi tiling is complicated:
\begin{itemize}
\item $X^{\A_{7}}(\lambda)$ has at least 201 distinct types of tiles.
\item $X^{\A_{14}}(\lambda)$ has at least 189 distinct types of tiles.
\item $X^{\A_{18}}(\kappa )$ has at least 154 distinct types of tiles.
\end{itemize}

\begin{figure}[ht]
\subfigure[Tiling for $X^{\A_{12}}(\mu))$.]{\includegraphics[scale=0.41]{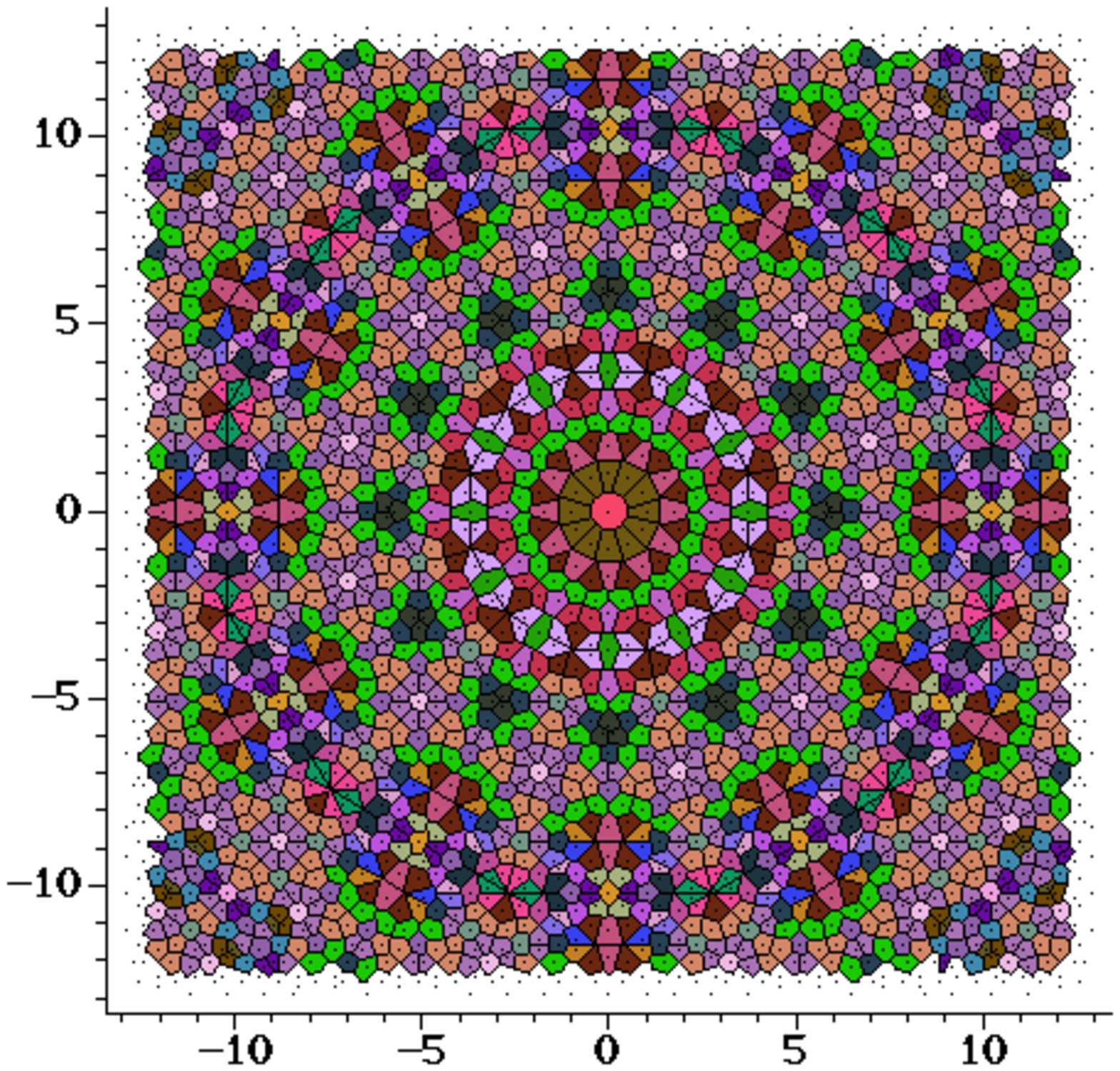}}
\subfigure[Tiling for $X^{\A_{18}}(\kappa))$]{\includegraphics[scale=0.41]{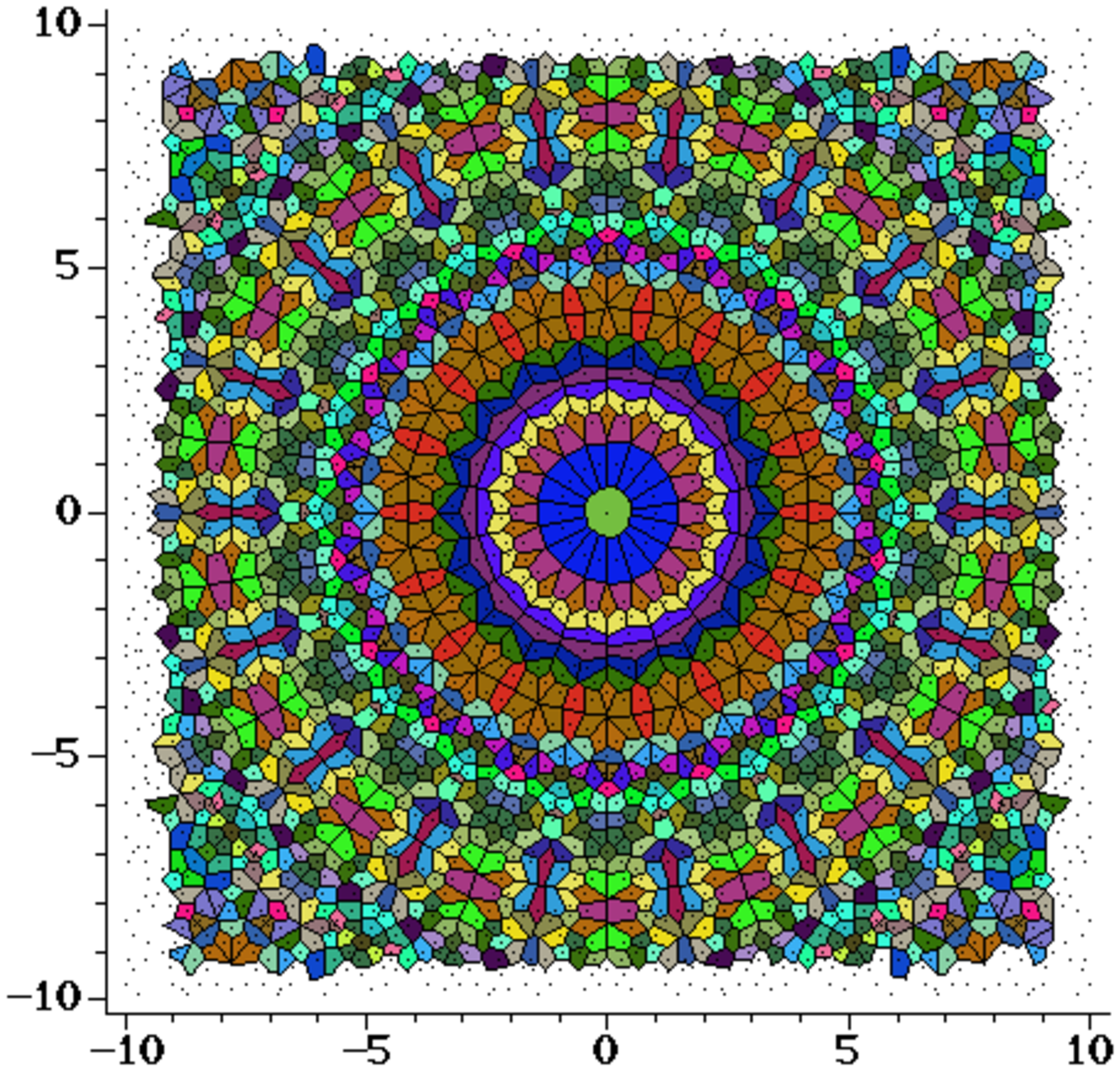}}
\subfigure[Tiling for $X^{\A_{7}}(\lambda))$.]{\includegraphics[scale=0.41]{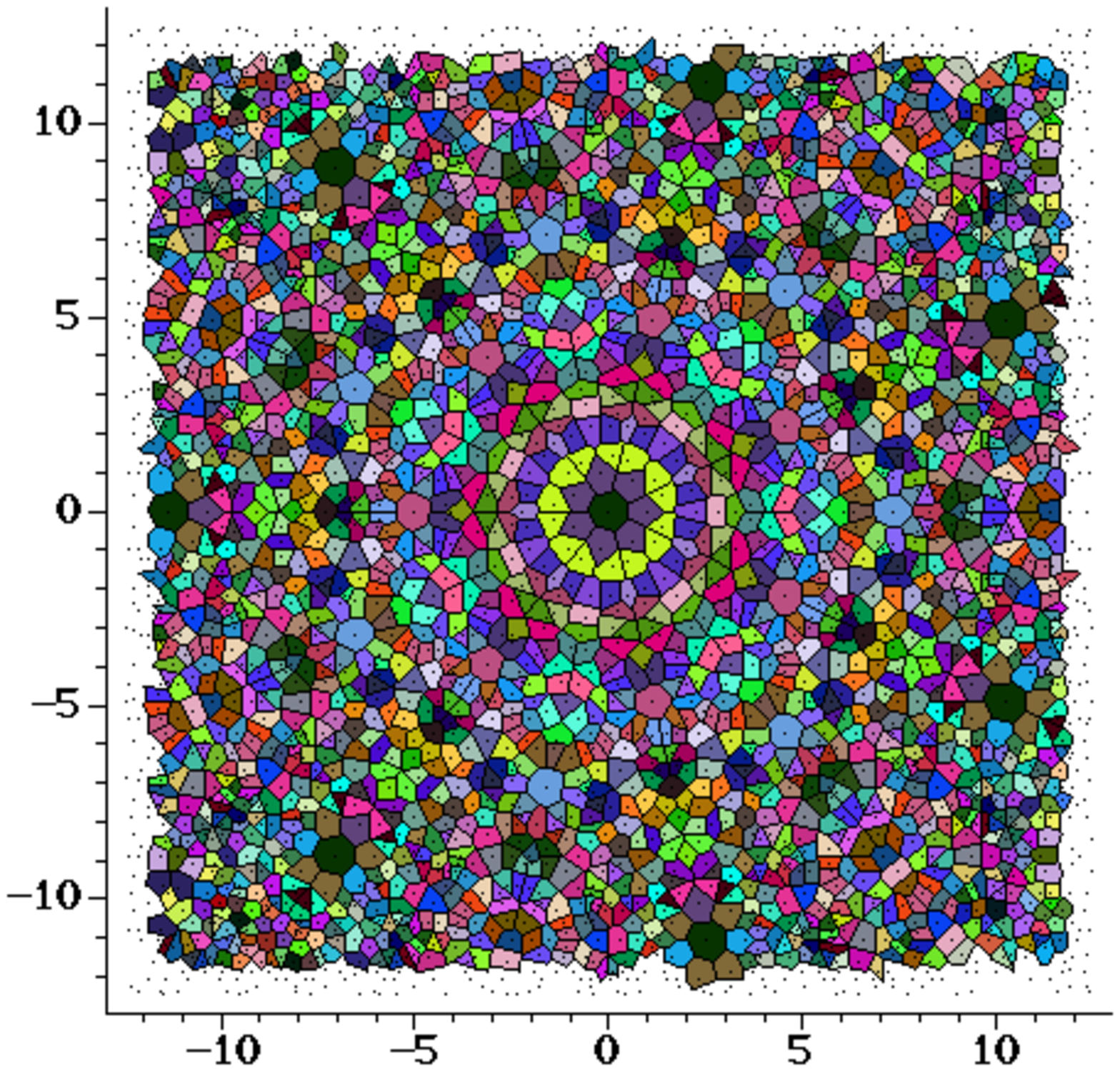}}
\subfigure[Tiling for $X^{\A_{14}}(\lambda))$.]{\includegraphics[scale=0.41]{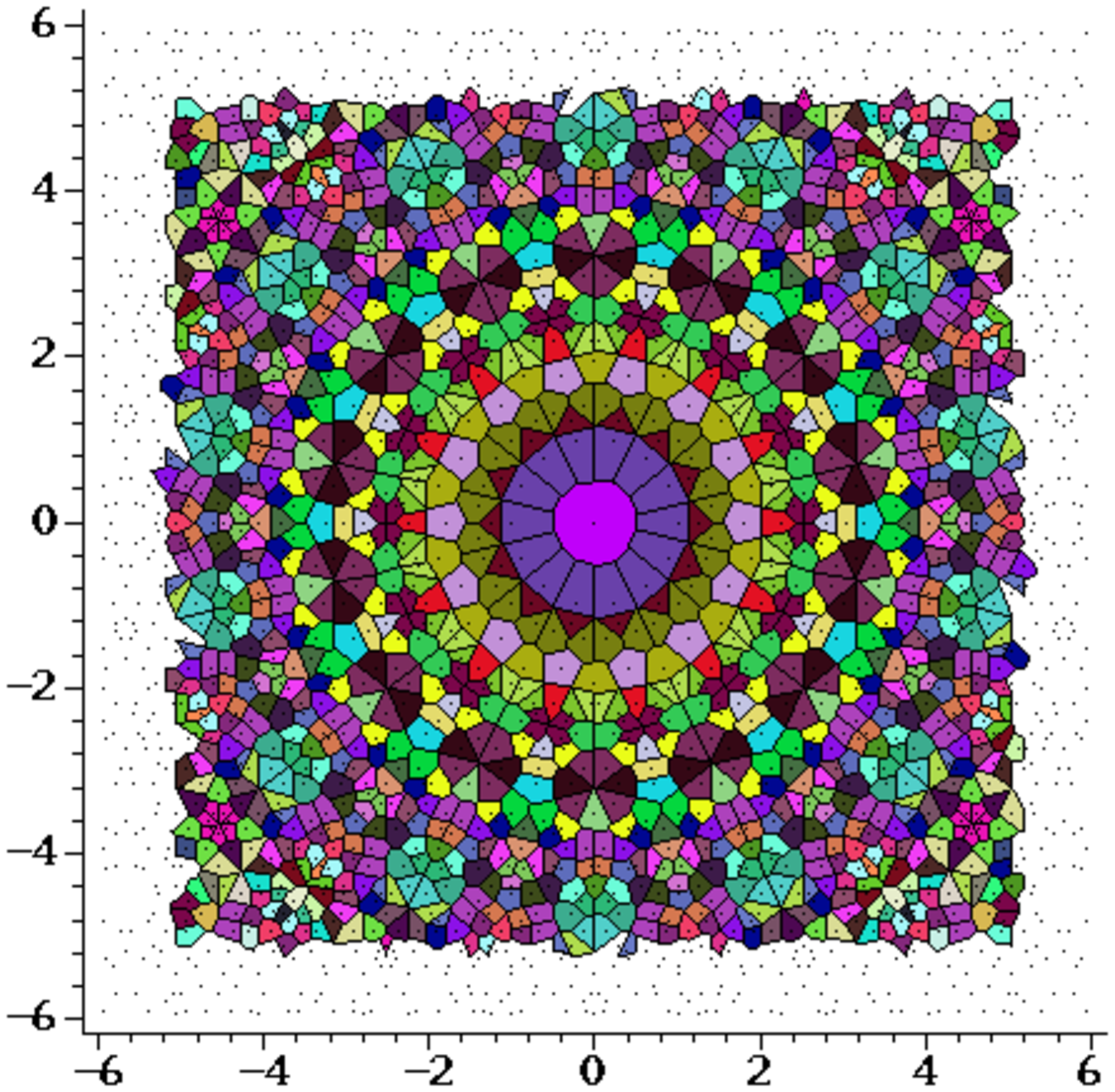}}
\caption{Tilings for the $12$-, $7$-, $14$- and $18$-fold symmetry.}
\end{figure}

\section*{Conclusions}

In this paper we have studied the possibility to model quasicrystals with symmetries of order 5 (10), 7 (14), 8, 9 (18), 12 by spectra of Pisot-cyclotomic numbers of degree two or three. It turned out that in the quadratic case, there are spectra that can be identified with self-similar quasilattices defined by the cut-and-project scheme. Similar structures have been studied in~\cite{niizeki}, however, our approach uses techniques from non-standard number systems and thus brings a new insight into the question. For, the possibility to model certain quasilattices by cut-and-project sets with simply connected acceptance windows is closely related to representability of all complex numbers in an appropriate base with digits in a suitably chosen alphabet.
Our results of number-theoretical nature are collected in Section~\ref{sec:general} and are of independent interest.

\section*{Acknowledgements}

This work was supported by the Czech Science Foundation, grant No.\ 13-03538S. We also acknowledge financial support of
the Grant Agency of the Czech Technical University in Prague, grant No.\ SGS14/205/OHK4/3T/14.
The Research of K. G. Hare was supported by NSERC Grant RGPIN-2014-03154.


\end{document}